\title{Compositional Covariate Importance Testing via Partial Conjunction of Bivariate Hypotheses}
\author{
Ritwik Bhaduri\and 
Siyuan Ma \and
Lucas Janson
}
\date{}
\newcommand*{\centernot}{%
  \mathpalette\@centernot
}
\def\@centernot#1#2{%
  \mathrel{%
    \rlap{%
      \settowidth\dimen@{$\m@th#1{#2}$}%
      \kern.5\dimen@
      \settowidth\dimen@{$\m@th#1=$}%
      \kern-.5\dimen@
      $\m@th#1\not$%
    }%
    {#2}%
  }%
}
\newcommand{\indep}{\perp\mkern-10mu\perp}
\newcommand{\nindep}{\centernot{\indep}}
\renewcommand*{\H}{\mathcal{H}}
\newcommand*{\R}{\mathbb{R}}
\renewcommand*{\P}{\mathbb{P}}
\newcommand*{\E}{\mathbb{E}}
\newcommand*{\I}{\mathcal{I}}
\renewcommand*{\S}{\mathcal{S}}
\newcommand*{\M}{\mathcal{M}}
\newcommand*{\1}{\mathbbm{1}}
\newcommand*{\D}{\mathcal{D}}
\newcommand{\PDense}{|\mathcal{D}|}
\newcommand{\Dense}{\mathcal{D}} 
\NewDocumentCommand{\ceil}{s O{} m}{%
  \IfBooleanTF{#1} 
    {\left\lceil#3\right\rceil} 
    {#2\lceil#3#2\rceil} 
}
\newtheorem{Theorem}{Theorem}[section]
\newtheorem{corollary}{Corollary}[section]
\newtheorem{Definition}{Definition}[section]
\newtheorem{proposition}{Proposition}[section]
\newtheorem{Lemma}{Lemma}[section]
\newtheorem{remark}{Remark}
\newtheorem{example}{Example}
\newtheorem{assumption}{Assumption}
\tiny\color{gray},
\begin{document}
\maketitle

\begin{abstract}
     Compositional data (i.e., data comprising random variables that sum up to a constant) arises in many applications including microbiome studies, chemical ecology, political science, and experimental designs. 
     Yet when compositional data serve as covariates in a regression, the sum constraint renders every covariate automatically conditionally independent of the response given the other covariates, since each covariate is a deterministic function of the others. Since essentially all covariate importance tests and variable selection methods, including parametric ones, are at their core testing conditional independence, they are all completely powerless on regression problems with compositional covariates. In fact, compositionality causes ambiguity in the very notion of relevant covariates. To address this problem,  we identify a natural way to translate the typical notion of relevant covariates to the setting with compositional covariates and establish that it is intuitive, well-defined, and unique. 
     We then develop corresponding hypothesis tests and controlled variable selection procedures via a novel connection with \emph{bivariate} conditional independence testing and partial conjunction hypothesis testing. Finally, we provide theoretical guarantees of the validity of our methods, and through numerical experiments demonstrate that our methods are not only valid but also powerful across a range of data-generating scenarios.
\end{abstract}

\section{Introduction}\label{sec: introduction}
\subsection{Background and Motivation}\label{sec: background motivation}
Compositional random variables are random vectors whose elements sum up to a known non-random constant. 
Compositional data arises whenever a fixed whole is divided into parts, such as in chemistry (chemical composition of solutions~\citep{bruckner2017chemo, baum1988use}), material science~\citep{pesenson2015statistical}, marketing (e.g., brand or product shares~\citep{joueid2018marketing}), sociology (e.g., time-use surveys~\citep{dumuid2020compositional}), political science (e.g., voting proportions~\citep{rodrigues2009analysis}), food science (e.g., nutrient composition~\citep{greenfield2003food}), or geology (e.g., rock composition~\citep{buccianti2006compositional}). Compositionality also arises when data values are only meaningful relative to one another, making it natural to normalize them to sum to a constant, such as relative abundance data in microbiome research \citep{turnbaugh2007human,gloor2017microbiome}. Any categorical or factor random variable can also be treated as compositional via one-hot encoding, resulting in a random vector of all zeros except a single entry one indicating the factor level. Such categorical data arises in many settings, e.g., clinical trials (e.g., testing multiple drugs~\citep{festing2020completely}), social science/economics (e.g., evaluating multiple disjoint policies~\citep{stoker2009design}), genomics (e.g., gene knockout experiments~\citep{zhang1994positional}) and business/marketing (e.g., conjoint analysis~\citep{green1978conjoint}).
While their applications and usage vary, 
compositional data across fields all present a common problem when they are treated as covariates (i.e., explanatory variables, $X$) to explain or predict a response variable ($Y$).

Consider a specific example where the research question is to relate the gut microbiome (covariates measured, as is standard, via the compositional relative abundances of different gut microbes) with colorectal cancers, or CRC (response variable). {\it Fusobacterium nucleatum}, an opportunity pathogen, is considered a gut microbial risk factor for CRC, with mechanistic and epidemiological evidence supporting its role in the cancer's tumorigenesis and development~\citep{wang2023fusobacterium}.
However, the compositionality of microbiome abundances means that \emph{F. nucleatum} cannot be independent of the rest of the microbiome. Unconditionally, this means that any marginal dependence between CRC and \emph{F. nucleatum} guarantees marginal dependence between CRC and other microbes, even if they are non-cancer-related. Such issues with dependent (even non-compositional) covariates are the reason that regression inference is typically done in a multivariate way, where each covariate's dependence with the response is evaluated \emph{conditionally} on the other covariates. Yet due to compositionality, the abundance of \emph{F. nucleatum} is \emph{deterministic} given the abundance of the other microbes, and hence can have \emph{no} conditional dependence on the response, wrongfully negating its biological association with CRC. This example highlights the challenge of identifying, and even defining, important covariates under compositionality, so we now formalize a simplified version of it (where \emph{F. nucleatum} is the only important covariate) with mathematical notation and concrete distributions.

\begin{example}\label{ex:onebug}
Let $Y \in \{0,1\}$ represent the CRC disease status and $X:= (X_{\{1\}}, \dots, X_{\{p\}})$ represent the relative abundances of $p$ microbes in a gut microbiome sample. By definition, $\sum_{j=1}^p X_{\{j\}} = 1$. To simplify things further, suppose the $X_{\{j\}}$ are i.i.d. $\mathrm{Expo}(1)$ random variables normalized by their sum $\sum_{j=1}^p X_{\{j\}}$ (note the normalization makes them no longer independent of one another) and the true (unknown) causal model for $Y$ given $X$ is $Y\mid X \sim $ $\mathrm{Bernoulli}(e^{-X_{\{1\}}})$, where $X_{\{1\}}$ denotes the relative abundance of \emph{F. nucleatum}. Although compositionality ensures that a model that excludes $X_{\{1\}}$, namely, $Y\mid X \sim $ $\mathrm{Bernoulli}(e^{-(1-\sum_{j=2}^p X_{\{j\}})})$, explains the response as well as the causal model, selecting $\{X_{\{2\}}, \dots, X_{\{p\}}\}$ as the set of important covariates for $Y$ (as implied by that model) leads to a very wrong scientific conclusion. And indeed, while both the sets $\{X_{\{1\}}\}$ and $\{X_{\{2\}},\dots,X_{\{p\}}\}$ each can explain $Y$ equally well, in most scientific problems, a domain expert presented with these two options will choose $\{X_{\{1\}}\}$ as the far more biologically plausible one; this is the ubiquitous scientific principle of parsimony, or Occam's Razor. Thus, we argue that a good definition of an important covariate, when applied to this example, should consider only $\{X_{\{1\}}\}$ as important, yet this is not the case for existing methods as we explain in the rest of this subsection and also in Section~\ref{sec: related work}.    
\end{example}

As alluded to in the microbiome example above, existing definitions of covariate importance consider $X_{\{j\}}$ ``unimportant" (this is considered the null hypothesis $\H_j$) either when it satisfies \emph{unconditional} independence with $Y$ or \emph{conditional} independence with $Y$ given the remaining covariates $X_{\{j\}^\mathsf{c}}$.
In Example~\ref{ex:onebug}, the unconditional $\H_1$ will be false, correctly defining $X_{\{1\}}$ as an important covariate. But compositionality also makes every \emph{other} $X_{\{j\}}$ unconditionally dependent on $X_{\{1\}}$ and hence also on $Y$, so \emph{all} unconditional $\H_j$ will be false! It is not hard to see that this situation is typical when the covariates are compositional: for almost all compositional covariate distributions and (even sparse) response conditional distributions, all unconditional hypotheses will be false and hence any hypothesis test or selection method controlling an error rate with respect to such hypotheses will be unreliable for Type I errors with respect to any sparse true model. Switching to the conditional version of $\H_j$ creates the opposite problem: compositionality means that every $X_{\{j\}}$ is a deterministic function of $X_{\{j\}^\mathsf{c}}$, which immediately implies that each $X_{\{j\}}$ is conditionally independent of $Y$ given $X_{\{j\}^\mathsf{c}}$ and hence gives the degenerate conclusion that \emph{every} conditional $\H_j$ is true! Thus, any hypothesis test or selection method based on identifying false conditional $\H_j$'s will be guaranteed to have trivial power.\footnote{For instance, a level-$\alpha$ hypothesis test for $H_j$ must have power at most $\alpha$.}

Unfortunately, the conditional hypothesis $\H_j: Y\indep X_{\{j\}}\mid X_{\{j\}^\mathsf{c}}$ forms the basis of the majority of existing parametric and nonparametric hypothesis tests and variable selection techniques. 
For instance, parametric hypothesis tests and variable selection methods generally test hypotheses of the form $\beta_j=0$~
\citep{barber2015controlling, JJ19,montgomery2021introduction} where the parameter $\beta_j$ represents the contribution of $X_{\{j\}}$ to $Y$'s conditional model, and a zero value represents no contribution and thus conditional independence. Similarly, a standard nonparametric null hypothesis for covariate importance in causal inference is that the average treatment effect is zero \citep{chernozhukov2018double}, which is also implied by conditional independence. It has also become increasingly popular to explicitly test conditional independence \citep{candes2018panning,shah2020hardness} (or conditional mean independence \citep{lundborg2022projected} which is also implied by conditional independence) directly. Since the conditional independence hypothesis is true for all covariates in a regression problem with compositional covariates, all the aforementioned methods (and many more) are guaranteed to have trivial power in such settings, regardless of the true relationship between $Y$ and $X$.

We note here that a common workaround for conditional testing is to simply drop one arbitrarily chosen covariate, rendering the remaining covariates non-compositional. This indeed fixes the problem entirely, but \emph{only} if the dropped covariate is not part of the true model. If the dropped covariate is part of the true model, not only does it preclude identifying that true covariate, but, due to the compositionally induced dependence, it also renders false covariates 
non-null. In particular, in Example~\ref{ex:onebug} if we were to drop $X_{\{1\}}$, then conditional testing would of course be unable to correctly discover $X_{\{1\}}$, but would also consider all of $X_{\{2\}},\dots,X_{\{p\}}$ as non-null. Since an analyst will not know a priori which variables are in the true model, there is no clear way in general to know that dropping a covariate will actually help.

\subsection{Contributions}
This paper makes two core contributions:

\begin{enumerate}
    \item \textbf{Formalizing important covariates under compositionality:} 
    To overcome the aforementioned misalignment between hypotheses (conditional or unconditional) and true signals in a parsimonious regression model with compositional covariates, we define the set of important covariates as the minimal set of covariates that together render all other covariates unhelpful for predicting the response, namely, the Markov boundary. We argue that under compositionality of the covariates, the Markov boundary aligns with natural intuition for what covariates should be discovered, and we show that it is well-defined, nontrivial, and unique under mild conditions. Although our conditions are technical, we show they can be simplified in certain compositional cases of interest.

    \item \textbf{Methods for testing and variable selection with compositional covariates:} Our results about the Markov boundary reveal a direct connection between it and
    the \emph{bivariate} conditional independence hypotheses $\H_{i,j}:= Y \perp\!\!\!\perp X_{\{i,j\}^\mathsf{c}} \mid  X_{\{i,j\}}$, which are not rendered trivially true for compositional covariates like their univariate analogues. 
    Leveraging this connection, we frame testing for covariate importance as a partial conjunction hypothesis test and apply multiple testing methods to such tests for variable selection.
    We propose a number of methodological innovations specific to our problem for partially overcoming the conservativeness inherent in partial conjuction hypothesis testing, and we demonstrate that our methods are valid and powerful in a range of simulations.
\end{enumerate}

\subsection{Related work}\label{sec: related work}
Testing for and selecting important covariates in multivariate regression is a heavily studied topic. First, there are many methods for estimating covariate importance or the set of important variables, which can be useful but do not generally come with Type I error guarantees, e.g., \cite{NIPS2008_7380ad8a, schwarz1978estimating, tibshirani1996regression, bertsimas2016best, berger1996intrinsic}. As our ultimate goal in this paper is to perform rigorous hypothesis tests and \emph{controlled} variable selection, we focus our literature review on methods with Type I error guarantees, which either devise individual hypothesis tests (whose p-values can be fed into multiple testing procedures \citep{holm1979simple, benjamini1995controlling, benjaminiyekutieli} for controlled variable selection) or directly perform controlled variable selection without individual hypothesis testing. Either way, these methods broadly fall into parametric (e.g., \cite{barber2015controlling,javanmard2018debiasing,montgomery2021introduction, kuchibhotla2022post,xing2023controlling}) and nonparametric (e.g., \cite{candes2018panning, chernozhukov2018double, berrett2020conditional, dCRT,lundborg2022projected}) approaches. But the degeneracy of conditional covariate importance mentioned in Section~\ref{sec: background motivation} applies to all of these methods, parametric and nonparametric: they all define covariate importance in such a way that $Y\indep X_{\{j\}} \mid X_{\{j\}^\mathsf{c}}$ implies that $X_{\{j\}}$ is unimportant/null, and hence all of these methods treat \emph{all} compositional covariates as null and thus will have trivial power to identify any covariates of interest. It is worth noting that although prior works like \cite{candes2018panning} do, like us, define the set of important covariates as the Markov boundary, they justify this target under standard assumptions which fail to hold for compositional covariates, and in particular their inferential methods are entirely based on the identity that the Markov boundary is exactly the set of covariates $X_{\{j\}}$ for which $Y\nindep X_{\{j\}} \mid X_{\{j\}^\mathsf{c}}$; this identity fails to hold when the covariates are compositional, and hence new theory (which we present in Section~\ref{sec:well-defined variable selection target}) is needed to handle regression with compositional covariates. 

There does exist a substantial body of literature on analyzing compositional data, but much of it has been focused on analyzing properties of the distribution of the compositional data itself (see, e.g., \cite{aitchison1982statistical,egozcue2003isometric,filzmoser2009correlation,filzmoser2009principal,pawlowsky2011compositional, greenacre2021compositional,pal2022clustering}), as opposed to analyzing the conditional distribution of a response variable $Y$ conditioned on compositional covariates $X$. There is work on compositional outcome regression~\citep{fiksel2022transformation}, which is regression with a compositional response variable, but this is still different from our setting where the \emph{covariates} are compositional. We note that, primarily within the microbiome field, a popular group of methods known as the Differential Abundance (DA) analysis ~\citep{paulson2013differential,mallick2021multivariable} tests for marginal independence, i.e., $Y\indep X_{\{j\}}$ for each $j$, optionally adjusting for confounding covariates \textit{but not other microbes} (e.g. any host demographics, phenotypes, or exposure variables than can be associated with both $X$ and $Y$).\footnote{We note that ``differential abundance" analysis can also refer to methods that test for marginal independence between $Y$ and the latent \emph{absolute} abundance of the $j^{\text{th}}$ microbe \citep{zhou2022linda, wang2023robust,lin2024multigroup,zong2024mbdecoda}. Even moreso than canonical DA analysis, these methods differ fundamentally from the approach of this paper but require more space to fully explain, so we defer such discussion to \Cref{app:absabund}.
} 
Within canonical DA analysis, $X_{\{j\}}$ is usually treated as the outcome in regression modeling, against $Y$ as the exposure variable of interest and other covariates. The corresponding null hypothesis, however, is equivalent to treating individual $X_{\{j\}}$'s as covariates and $Y$ as the outcome (as they are treated in our paper), due to the tests' marginal nature. As such, DA methods do not account for the compositionality, or more generally the dependence, of microbial covariates.
Hence they all suffer from the issue highlighted in Section~\ref{sec: background motivation} that unconditional testing will usually treat all compositional covariates as \emph{non-null}, and thus cannot be expected to control any Type I error rate with respect to any sparse or parsimonious set of important covariates (see \citet[Appendix B]{candes2018panning} for related discussion).

Finally, there are works which take a conditional approach to testing or selecting important covariates in regression with compositional covariates, and as these are the works most closely related to ours, we review them more closely in the rest of this subsection. 
A number of such works assume $Y\mid X$ follows a linear or generalized linear model with respect to the log-transformed covariates~\citep{lin2014variable,shi2016regression,lu2019generalized,sohn2019compositional}. To address the unidentifiability of the coefficient vector induced by compositionality, these works enforce a constraint on the coefficient vector that it must sum to zero. To the best of our knowledge this constraint is purely statistically motivated and thus may impact the interpretation of the coefficient vector they treat as their inferential target. In particular, if in Example~\ref{ex:onebug} we were to model $Y\mid X \sim \mathrm{Bernoulli}(e^{-(\beta_0 + X\beta)})$ for an intercept $\beta_0\in\mathbb{R}$ and coefficient vector $\beta\in\mathbb{R}^p$ that sums to 0, then the ``correct" value of $\beta$ would be $(\frac{p-1}{p},-\frac{1}{p},\dots,-\frac{1}{p})$ which is not sparse, and hence inference based on zeros/non-zeros in $\beta$ will scientifically incorrectly treat $X_{\{2\}},\dots,X_{\{p\}}$ as non-null. \cite{srinivasan2021compositional} employs a similar linear model as these works but replaces the assumption that the coefficient vector sums to zero with an assumption that a non-empty set of variables can be screened away based on the data, and that set of variables is guaranteed to contain \emph{no} non-nulls---this is akin to a minimum signal size condition which may not hold in many settings. 

More broadly, when regression covariates contain \emph{factors}, i.e., covariates that take $k$ possible non-metric levels, they are often handled by \emph{contrast coding}, which creates $k$ binary features each representing the indicator function for one of the factor levels, but then drops one of the factor levels (called the \emph{reference} level) to address the compositionality of these $k$ covariates~\citep{montgomery2017design}. Contrast coding raises the challenge of selecting the reference level, since any downstream conclusions will depend on its choice (and one is generally not allowed to choose it based on the data itself). In particular, as mentioned at the end of Section~\ref{sec: background motivation}, dropping one of the covariates in Example~\ref{ex:onebug} and then defining importance via conditional dependence will correctly consider only $X_{\{1\}}$ as important if one of $X_{\{2\}},\dots,X_{\{p\}}$ is dropped, but will very wrongly consider $X_{\{2\}},\dots,X_{\{p\}}$ as important if $X_{\{1\}}$ is dropped, and, given the analyst does not know a priori that $X_{\{1\}}$ is special, it is unclear how an analyst can avoid such a mistake via dropping a variable.

Finally, the work that is closest to ours is \cite{SM-CH-LJ:2024}, which also adapts the usual conditional definition of covariate importance via conditional independence to a new definition catered to compositional covariates that avoids the degeneracy of conditional independence. \cite{SM-CH-LJ:2024} defines a covariate as important if and only if $Y\nindep X_{\{j\}} \mid \frac{X_{\{j\}^\mathsf{c}}}{\sum_{k\neq j}X_{\{k\}}}$, which says that a covariate is important if, when we intervene on its value while keeping the \emph{relative} proportions of the remaining covariates fixed, $Y$'s (conditional) distribution changes. While this interventional definition is appealing and may be appropriate in some settings, 
it automatically treats all levels of a one-hot-coded factor covariate as null, and furthermore
it defines all covariates as important in Example~\ref{ex:onebug}. In contrast, the definition of important covariates we lay out in this paper, using the Markov boundary, naturally handles factor covariates and correctly identifies only $X_{\{1\}}$ as important in Example~\ref{ex:onebug}.

\subsection{Notation}
In this paper, we use $[p]$ to denote the set $\{1, \dots, p\}$ for any $p \in \mathbb{N}$. For any set $A\subseteq [p]$, denote its complement $[p]\setminus A$ by $A^\mathsf{c}$. For any $p$-dimensional vector $X = (X_1, \ldots, X_p)$ and a subset $A \subseteq [p]$, $X_A$ denotes the subvector $(X_j : j \in A)$ of $X$. For an $m \times n$ matrix $P$ and indices $i \in [m]$ and $j \in [n]$, we denote the entry in row $i$ and column $j$ by $P_{i,j}$ and we define $\log(P)$ as the $m\times n$ matrix with $\log(P)_{i,j}=\log(P_{i,j})$.

\subsection{Outline}
In \Cref{sec:well-defined variable selection target}, we formally define important covariates in the compositional setting via the Markov boundary and prove that it remains well-defined under mild conditions despite compositionality. \Cref{sec: methodology} details our methods for testing and controlled variable selection and their theoretical guarantees. In \Cref{sec: simulations}, we assess the validity and power of our proposed methods across diverse data generation scenarios and hyperparameter settings. Finally, \Cref{sec: discussion} concludes the paper with contributions, limitations, and ideas for future research.

\section{Defining important compositional covariates}\label{sec:well-defined variable selection target}
As highlighted in the previous section, the typical ways of defining an important covariate are not well-suited to compositional covariates. In this section, we put forth the Markov boundary as a solution and argue that, unlike conditional or unconditional dependence, membership in the Markov boundary continues to capture the right notion of an important covariate even under compositionality. 

\subsection{The Markov boundary}\label{subsec:the Markov boundary}

Suppose we have an $\R^p$-valued covariate vector $X= (X_{\{1\}}, \dots, X_{\{p\}})$ and a response $Y$; we place no restriction on the form of $Y$, and in particular it can be Euclidean-valued or categorical. 
As $X$ is compositional, $\sum_{i=1}^p X_{\{j\}} = c$ for some constant $c$, and without loss of generality, we will assume $c=1$. As has been done in previous work \citep{candes2018panning} (though not for compositional covariates), we consider $Y$'s Markov boundary as a natural definition of the set of important covariates:


\begin{Definition}[Markov boundary~\citep{PEARLbook}]\label{def: Markov boundary}
    For random variables $X\in\mathbb{R}^p$ and $Y$ defined on the same probability space, 
    a subset $\M$ of $[p]$ is called a Markov boundary of $Y$
    if:
    \begin{enumerate}
        \item $Y\indep X_{\M^\mathsf{c}}\mid X_{\M}$, and
        \item $Y\nindep X_{{\M^\prime}^\mathsf{c}}\mid X_{{\M^\prime}}$ for all $\M^\prime \subsetneq \M$.
    \end{enumerate}
\end{Definition}

Unfortunately, when the covariates are compositional, there will in general be multiple Markov boundaries of $Y$, so the remaining subsections of this section are devoted to establishing mild conditions under which the above definition provides a well-defined, unique, and nontrivial set of important covariates. But in the remainder of this subsection we will first justify why \Cref{def: Markov boundary} provides a natural and intuitive definition for the set of important covariates. For readability, in the remainder of this subsection, we will suppose we are in the setting established by the end of next subsection, in which there is a unique nontrivial Markov boundary, and hence refer to \emph{the} Markov boundary instead of \emph{a} Markov boundary.

Item 1 in \Cref{def: Markov boundary} says that, after accounting for the covariates in the Markov boundary, all the remaining covariates provide no further information about $Y$. Item 2 says that the Markov boundary is the minimal such set, in the sense that no subset of it has the property in item 1. Together, this definition informally says that the Markov boundary is the minimal set of variables that, once known, allows us to drop all other variables without losing information about $Y$, and removing any variable from this set would lead to a strict loss of information about $Y$. When the covariates are not compositional, the Markov boundary is (under very mild conditions) equivalent to the set of important covariates defined via conditional dependence \citep{edwards2012introduction,candes2018panning}, which, as mentioned in the second-to-last paragraph of \Cref{sec: background motivation}, is the basis for covariate importance throughout the literature on parametric and nonparametric methods for identifying important covariates in regression.
Thus we find it to be a natural and intuitive target for variable selection with compositional covariates \emph{if} we can show it remains well-defined under compositionality.

\subsection{Conditions for uniqueness of the Markov boundary} \label{subsec:uniqueness theorem} 

The reader may be surprised that we can establish any conditions under which the Markov boundary will ever be unique with compositional covariates, since an immediate consequence of covariate compositionality is that \emph{every} set of size $p-1$ covariates will satisfy item 1 of \Cref{def: Markov boundary} and hence contain a subset satisfying condition 2 (and hence there can never be a truly unique Markov boundary when the covariates are compositional). However, we will show in this subsection that under very mild conditions, all but at most one Markov boundary will be of \emph{exactly} size $p-1$, and it is our position that, when $X$ is compositional, a Markov boundary of size $p-1$ is \emph{trivial} and not worth considering, since every set of covariates of size $p-1$ is information theoretically equivalent to the entire set of covariates due to compositionality. In \Cref{ex:onebug}, $\{1\}$ and $\{2, \dots, p\}$ are both technically Markov boundaries, but the latter is trivial and the former is the unique nontrivial Markov boundary. Our position is that the unique (under mild conditions) \emph{nontrivial} Markov boundary is the natural target for variable selection even though other Markov boundaries technically exist. 
First, we give an example to demonstrate why it is necessary to make some sort of assumptions in order to establish uniqueness of even of a nontrivial Markov boundary.

\begin{example}\label{ex:linear dependence}
Suppose $p=4$, $X_{\{1\}}, X_{\{2\}}, X_{\{3\}}$ are mutually independent, and $X_{\{4\}} = 1-X_{\{1\}}-X_{\{2\}}-X_{\{3\}}$. If $Y = X_{\{1\}} + X_{\{2\}} + \epsilon$ for $\epsilon$ independent of $X$, then:
\begin{itemize}
\item $Y \indep X_{\{3\}}, X_{\{4\}} \mid X_{\{1\}}, X_{\{2\}}$, since $\epsilon \indep X_{\{3\}}, X_{\{4\}}$,
\item $Y \indep X_{\{1\}}, X_{\{2\}} \mid X_{\{3\}}, X_{\{4\}}$, since $X_{\{1\}}+X_{\{2\}}=1-X_{\{3\}}-X_{\{4\}}$, and
\item $Y\nindep X_{\{1,2,3,4\}\setminus \{j\}}\mid  X_{\{j\}}$ for all  $j\in \{1,2,3,4\}$.
\end{itemize}
Thus, both $\{1,2\}$ and $\{3,4\}$ are Markov boundaries of size $<p-1$. 
\end{example}

Such a simple example may give the impression that it is not hard to find non-unique nontrivial Markov boundaries, but in fact it is constructed very carefully and we argue it should be vanishingly rare to encounter such an example in real data. 
In particular, the non-uniqueness of the nontrivial Markov boundary is a product of a precise interaction between $Y\mid X$, which depends on $X_{\{1\}}$ and $X_{\{2\}}$ through exactly their unweighted sum, and the compositionality of $X$, whose constraint also involves $X_{\{1\}}$ and $X_{\{2\}}$ through exactly their unweighted sum. So even changing $Y\mid X$ to $ Y = X_{\{1\}} + 0.999 X_{\{2\}} + \epsilon $ or $Y = X_{\{1\}} + X_{\{2\}}^{0.999}+\epsilon$ would result in a unique nontrivial  Markov boundary. Thus, it seems one would have to get \emph{very} unlucky to encounter a real data set whose alignment of $Y\mid X$ with the compositional constraint on $X$ is so perfect.

To formalize when the nontrivial Markov boundary is uniquely defined, we will represent it as a function of identifiable and testable statistical hypotheses. Let $\M$ denote any Markov boundary as defined by \Cref{def: Markov boundary}. For any $j\in\M$, if $X$ were \emph{not} compositional, we would expect $Y$ to be conditionally dependent of $X_{\{j\}}$ given $X_{\{j\}^\mathsf{c}}$ since $X_{\{j\}}$ contains unique conditional information about $Y$ not captured by $X_{\{j\}^\mathsf{c}}$; indeed the Markov boundary is typically (when the covariates are non-compositional) uniquely identified as the set of $j$ for which $Y\indep X_{\{j\}}\mid X_{\{j\}^\mathsf{c}}$ is false \citep{edwards2012introduction,candes2018panning}. The reason this identification strategy fails for compositional covariates is that any one covariate is perfectly determined by all the others, so, intuitively, $X_{\{j\}}$'s unique conditional information about $Y$ cannot shine through because it is entirely absorbed into the information in $X_{\{j\}^\mathsf{c}}$. An intuitive fix is to shift one more covariate from $X_{\{j\}^\mathsf{c}}$ to join $X_{\{j\}}$, i.e., for some $i\in \{j\}^\mathsf{c}$, consider the conditional relationship between $Y$ and $X_{\{i,j\}}$ given $X_{\{i,j\}^\mathsf{c}}$. If $j\in\M$, we would expect $Y \nindep X_{\{i,j\}} \mid X_{\{i,j\}^\mathsf{c}}$ to hold for any $i\in \{j\}^\mathsf{c}$, since $X_{\{j\}}$ contains unique conditional information about $Y$, and hence so should $X_{\{i,j\}}$, and that unique information is no longer entirely absorbed by the covariates that are conditioned on under compositionality, since $X_{\{i,j\}}\mid X_{\{i,j\}^\mathsf{c}}$ is \emph{not} deterministic under compositionality. 
This leads us to define 
$$\S = \{j\in [p]: \H_{i,j} \text{ is false for all } i\in \{j\}^\mathsf{c}\},$$ 
where the $\H_{i,j}: Y \indep X_{\{i,j\}} \mid X_{\{i,j\}^\mathsf{c}}$ are \emph{bivariate} conditional independence null hypotheses. $\S$ consists of the indices $j$ such that all the bivariate null hypotheses containing $j$ as one of the indices are false. After a brief remark, the remainder of this subsection will establish what our intuition suggests: under mild conditions, $\S$ (which is uniquely defined by construction) coincides with the unique nontrivial Markov boundary. 

\begin{remark}\label{remark:trivariate}
Key to our intuitive argument that bivariate conditional independence will allow us to identify the Markov boundary is the fact that compositionality places only a \emph{single} constraint on $X$. If there were a further constraint, e.g., $\sum_{j=1}^{p} jX_{\{j\}} = p/2$, then $X_{\{i,j\}}\mid X_{\{i,j\}^\mathsf{c}}$ would again be deterministic for all $i,j$ pairs, and we would have to instead consider \emph{tri}variate conditional independence, $\H_{i,j,k}: Y\indep X_{\{i,j,k\}}\mid X_{\{i,j,k\}^\mathsf{c}}$, and so on if there were yet more constraints on $X$. We will detail an important exception to this intuition in \Cref{subsubsec: simplification for factor covariates}, 
but in the remainder of this subsection it may help the reader to think of the compositional $X$ we consider as satisfying no additional deterministic constraints beyond compositionality, and our technical conditions will for the most part preclude additional constraints.
\end{remark}

First, we show that without any further conditions, any nontrivial Markov boundary contains $\S$:

\begin{Lemma}\label{Lemma: M supset S}
For any Markov boundary $\M$ such that $|\M|<p-1$, $\M \supseteq \S$.
\end{Lemma}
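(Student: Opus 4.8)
The plan is to prove the contrapositive: I would show that if $j\notin\M$ then $j\notin\S$, i.e., that there exists some $i\in\{j\}^\mathsf{c}$ for which $\H_{i,j}$ holds. The hypothesis $|\M|<p-1$ enters at exactly one point: it guarantees $|\M^\mathsf{c}|\geq 2$, so that once we have a $j\in\M^\mathsf{c}$ we can also select a \emph{second} index $i\in\M^\mathsf{c}\setminus\{j\}$, and this $i$ will serve as the witness that $\H_{i,j}$ is true.

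Fixing such a $j$ and $i$, I would first record the set bookkeeping: since $\{i,j\}\subseteq\M^\mathsf{c}$ we have $\M\subseteq\{i,j\}^\mathsf{c}$, and in fact $\{i,j\}^\mathsf{c}=\M\,\disjointunion\,(\M^\mathsf{c}\setminus\{i,j\})$ while $\M^\mathsf{c}=\{i,j\}\,\disjointunion\,(\M^\mathsf{c}\setminus\{i,j\})$. Item 1 of \Cref{def: Markov boundary} gives $Y\indep X_{\M^\mathsf{c}}\mid X_{\M}$, which, splitting $X_{\M^\mathsf{c}}$ as the pair $(X_{\{i,j\}},\,X_{\M^\mathsf{c}\setminus\{i,j\}})$, says $Y$ is conditionally independent of this pair given $X_{\M}$. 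Applying the weak-union property of conditional independence — valid for arbitrary random variables with no density assumption — I would move $X_{\M^\mathsf{c}\setminus\{i,j\}}$ into the conditioning set, obtaining $Y\indep X_{\{i,j\}}\mid (X_{\M},\,X_{\M^\mathsf{c}\setminus\{i,j\}})$, i.e., $Y\indep X_{\{i,j\}}\mid X_{\{i,j\}^\mathsf{c}}$. That is precisely $\H_{i,j}$, so $j\notin\S$, establishing the contrapositive and hence $\M\supseteq\S$.

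I do not expect a real obstacle here; the proof is short and uses only item 1 of the Markov boundary definition together with $|\M^\mathsf{c}|\geq 2$, and in particular invokes no structural assumption on the compositional distribution. The only points requiring a little care are (i) invoking $|\M|<p-1$ at the right moment to secure the auxiliary index $i$ (without it the claim can fail, since a size-$(p-1)$ Markov boundary need not contain $\S$), and (ii) citing the weak-union axiom in a form that holds without assuming conditional densities. Both are routine, so I would keep the write-up to a few lines.
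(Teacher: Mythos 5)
Your proof is correct and is essentially identical to the paper's: both take a pair $\{i,j\}\subseteq\M^\mathsf{c}$ (available because $|\M^\mathsf{c}|\geq 2$), apply the weak union property to $Y\indep X_{\M^\mathsf{c}}\mid X_{\M}$ to conclude $\H_{i,j}$ is true, and deduce $\M^\mathsf{c}\subseteq\S^\mathsf{c}$. The only difference is cosmetic (you phrase it as a contrapositive), and you correctly identify the single point where $|\M|<p-1$ is used.
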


\begin{proof}
Since $\M$ is a Markov boundary, $Y \indep X_{\M^\mathsf{c}}  \mid X_{\M}$. The weak union property\footnote{(Weak union property) For any $\R^p$-valued random variable $X$ and four subsets $A$, $B$, $C$ and $D$ of $[p]$, $X_A\indep X_{B\cup D} \mid X_C \text{ implies } X_A\indep X_B \mid X_{C\cup D}$.}~\citep{PEARLbook} implies that for any $\{i,j\}\subseteq \M^\mathsf{c}$, $\H_{i,j}$ is true. Since $\S$ contains exactly the $i$ such that $\H_{i,j}$ is false for all $j\neq i$, that means that as long as $|\M^\mathsf{c}|>1$ (which holds by assumption), $i\in \M^\mathsf{c}$ implies $i \in \S^\mathsf{c}$. Thus $\M^\mathsf{c} \subseteq \S^\mathsf{c}$, i.e., $\M\supseteq \S$.
\end{proof}
If we can show $Y\indep X_{\S^\mathsf{c}}\mid X_{\S}$, then we are done, since $\M \supseteq \S$ by \Cref{Lemma: M supset S} but if $\M\supsetneq \S$ it would contradict property 2 of \Cref{def: Markov boundary} of a Markov boundary, so that only leaves the possibility that $\M = \S$. Such a result will require some technical (yet, we argue, mild) conditions, but the intuition is as follows. 
If there exists an $\M$ of size $<p-1$, then since $\S\subseteq\M$ by \Cref{Lemma: M supset S}, $|\S^\mathsf{c}|\ge 2$, i.e., given some $i\in \S^\mathsf{c}$, there exists by definition some other $j\in \S^\mathsf{c}$ such that $Y\indep X_{\{i,j\}}\mid X_{\{i,j\}^\mathsf{c}}$. Suppose we have another index $k\in \S^\mathsf{c}\setminus \{i,j\}$ such that $Y\indep X_{\{j,k\}}\mid X_{\{j,k\}^\mathsf{c}}$ (if there is no such $k$, then $|\S|=p-2$ and \Cref{Lemma: M supset S} immediately proves $\M=\S$ given $|\M|<p-1$). If we can apply the \emph{intersection property}\footnote{(Intersection property) For any $\mathbb{R}^p$-valued random variable $X$ and four subsets $A$, $B$, $C$, and $D$ of $[p]$, if $X_A \indep X_B \mid X_{C\cup D}$ and $X_A \indep X_C \mid X_{B\cup D}$, then $X_A \indep X_{B \cup C}\mid X_D$.} of conditional independence~\citep{PEARLbook} on these variables, then $Y\indep X_{\{i,j,k\}}\mid X_{\{i,j,k\}^\mathsf{c}}$. So, the intersection property `grew' the conditional independence statement from $Y\indep X_{\{i,j\}}\mid X_{\{i,j\}^\mathsf{c}}$ to $Y\indep X_{\{i,j,k\}}\mid X_{\{i,j,k\}^\mathsf{c}}$. If all the elements of $\S^\mathsf{c}$ can be iteratively added to the conditional independence statement via the intersection property, then we arrive at $Y\indep X_{\S^\mathsf{c}}\mid X_{\S}$ as desired. So, the key to proving $\M = \S$ is showing that we can apply the intersection property recursively, and the rest of this subsection describes sufficient conditions for this. 

The standard sufficient condition for the intersection property is that $X$ has a continuous density with respect to the Lebesgue measure \citep{Peters}. This fails for compositional $X$, whose support lies on the standard $(p-1)$-simplex, which has Lebesgue measure zero. So, we need to find a different set of assumptions which will be appropriate for compositional covariates. To do that, we first need to reintroduce some definitions from \cite{Peters} about a random variable $X\in\mathbb{R}^p$ with distribution $F$. 

\begin{Definition}[Coordinatewise connectivity (through $A$ and $B$) \citep{Peters}]
Let $A$ and $B$ be two subsets of $[p]$. Then two points $x,x^\prime$ $\in$ $\mathrm{support}(F)$ are said to be coordinatewise connected (through $A$ and $B$) if either $x_A =x^\prime_A$ or $x_B =x^\prime_B$.
\end{Definition}

\begin{Definition}[Equivalence (through $A$ and $B$) \citep{Peters}]
Any two points $x_1, x_2 \in \mathrm{support}(F)\subseteq \R^p$ are said to be equivalent (through $A$ and $B$) if there exists $z_1, z_2, \dots, z_L\in \mathrm{ support}(F)$ for some finite $L$ such that the pairs $(x_1, z_1), (z_1, z_2), \dots, (z_{L-1}, z_L), (z_L, x_{2})$ are each coordinatewise connected through $A$ and $B$.
\end{Definition}

So, equivalence extends coordinatewise connectivity in a pathwise way, in the sense that any two equivalent points can be connected by a sequence of consecutively coordinatewise connected points. Note that, as the name suggests, equivalence is indeed an equivalence relation, so we can define corresponding equivalence classes. The following lemma proves a version of the intersection property~\citep{PEARLbook} for two sets of variables such that $X$ has a single equivalence class through them.

\begin{Lemma}\label{Lemma: 1 equiv class}
For random variables $X\in\mathbb{R}^p$ and $Y$ defined on the same probability space, if $A\subseteq [p]$ and $B\subseteq [p]$ are such that $X_{A\cup B}\mid X_{(A\cup B)^\mathsf{c}}$ has only one equivalence class $($through $A\setminus B$ and $B\setminus A)$, 
then $Y\indep X_A\mid X_{A^\mathsf{c}}\ \text{and } Y\indep X_B\mid X_{B^\mathsf{c}}$ together imply $Y\indep X_{A\cup B}\mid X_{(A\cup B)^\mathsf{c}}$.
\end{Lemma}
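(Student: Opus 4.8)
The plan is to reduce the two conditional-independence hypotheses and the target to statements about the function $x\mapsto\E[h(Y)\mid X=x]$ (for test functions $h$) being $P_X$-almost surely a function of a subset of the coordinates, and then to use the single-equivalence-class hypothesis to combine the two resulting functional representations. Write $C=A\cap B$, $D=(A\cup B)^\mathsf{c}$, $A'=A\setminus B$, and $B'=B\setminus A$, so that $A'$, $B'$, $C$, $D$ partition $[p]$, $A^\mathsf{c}=B'\cup D$, $B^\mathsf{c}=A'\cup D$, and $A\cup B=A'\cup B'\cup C$. Since $Y$ is Euclidean-valued or categorical, its (conditional) law is determined by a countable family $\{h_k\}$ of bounded measurable functions; fix one such $h$ and put $\phi(x):=\E[h(Y)\mid X=x]$. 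For each fixed $h$, $\E[h(Y)\mid X]=\E[h(Y)\mid X_{S^\mathsf{c}}]$ holds $P_X$-a.s.\ if and only if $\phi$ agrees $P_X$-a.s.\ with some $\sigma(X_{S^\mathsf{c}})$-measurable function, and $Y\indep X_S\mid X_{S^\mathsf{c}}$ is exactly this holding for all $h$ in the family. So \Cref{Lemma: 1 equiv class} reduces to: if $\phi$ agrees $P_X$-a.s.\ with a function of $(X_{B'},X_D)$ (which is $Y\indep X_A\mid X_{A^\mathsf{c}}$) and also $P_X$-a.s.\ with a function of $(X_{A'},X_D)$ (which is $Y\indep X_B\mid X_{B^\mathsf{c}}$), then $\phi$ agrees $P_X$-a.s.\ with a function of $X_D$ alone.

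To establish this reduced claim, disintegrate $P_X$ over $X_D$. By Fubini, for $P_{X_D}$-almost every value $x_D$, under the conditional law $Q_{x_D}:=P_{X_{A\cup B}\mid X_{(A\cup B)^\mathsf{c}}=x_D}$ the map $x_{A\cup B}\mapsto\phi(x_{A\cup B},x_D)$ coincides, off a $Q_{x_D}$-null set, both with a function of $x_{B'}$ only and with a function of $x_{A'}$ only. Informally (modulo the null-set issue flagged below), on $\operatorname{support}(Q_{x_D})$ this forces $\phi(\cdot,x_D)$ to take equal values on any two support points sharing their $A'$-coordinates and on any two sharing their $B'$-coordinates; i.e.\ $\phi(\cdot,x_D)$ is constant along coordinatewise-connected pairs (through $A'$ and $B'$), hence constant on each equivalence class. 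The single-equivalence-class hypothesis, which as stated concerns exactly the conditional distribution $X_{A\cup B}\mid X_{(A\cup B)^\mathsf{c}}$ and thus holds for $P_{X_D}$-a.e.\ $x_D$, then makes $\phi(\cdot,x_D)$ constant on $\operatorname{support}(Q_{x_D})$, i.e.\ $Q_{x_D}$-a.s.\ equal to a constant $c(x_D)$. Taking $g(x_D):=c(x_D)$ (measurable, as $c(x_D)=\E[\phi(X)\mid X_D=x_D]$) and integrating over $x_D$ gives $\phi(X)=g(X_D)$ $P_X$-a.s.; running this for every $h_k$ and discarding the resulting countable union of null sets yields $\E[h_k(Y)\mid X]=\E[h_k(Y)\mid X_D]$ a.s.\ for all $k$, which is precisely $Y\indep X_{A\cup B}\mid X_{(A\cup B)^\mathsf{c}}$.

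The step I expect to be the main obstacle is the ``constant on each equivalence class'' argument: the representations of $\phi$ as a function of $(X_{B'},X_D)$ and of $(X_{A'},X_D)$ hold only up to $P_X$-null sets, so a chain of coordinatewise-connected support points witnessing an equivalence may be forced through points where one of these representations fails, and the naive transitivity argument is not literally valid. Making it rigorous requires the measure-theoretic care of \citet{Peters}: one uses Fubini to localize the exceptional null sets to the fibers over $X_D$, and then argues at the level of the completed $\sigma$-algebras $\overline{\sigma(X_{A'},X_D)}$ and $\overline{\sigma(X_{B'},X_D)}$, showing that a single equivalence class forces their intersection to equal $\overline{\sigma(X_D)}$; this is the analogue, for our (non-Lebesgue-dominated) compositional setting, of the positive-density connectivity argument used by \citet{Peters} to prove the classical intersection property. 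The remaining ingredients---the reduction of the first paragraph and the disintegration---are routine, so the real content of the lemma is that this connectivity argument still goes through under the single-equivalence-class condition rather than the usual requirement that $X$ have a positive density, which is what makes it applicable to compositional $X$.
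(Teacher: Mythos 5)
Your proposal is correct and follows essentially the same route as the paper's proof: fix the conditioning value $z$, use each of the two hypotheses to show the conditional law of $Y$ is unchanged across a coordinatewise-connected pair of support points (via $Y\indep X_A\mid X_{A^\mathsf{c}}$ when the $B\setminus A$-coordinates agree, and via $Y\indep X_B\mid X_{B^\mathsf{c}}$ otherwise), chain along equivalent points, and invoke the single-equivalence-class hypothesis to conclude constancy over the whole conditional support. The null-set subtlety you flag as the main obstacle is genuine, but the paper's own proof does not resolve it either --- it works pointwise with conditional measures $p(Y\mid u,v,w,z)$ on the support, exactly as in the ``informal'' version of your second paragraph --- so your sketch is, if anything, more explicit about where the measure-theoretic care of \citet{Peters} would be needed.
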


The lemma and its proof in \Cref{subsec: proof of Lemma: 1 equiv class} are inspired by \citet[Corollary 1]{Peters}. 
Next, define 
$$\Delta = \left\{\{A,B\}: X_{A\cup B}\mid X_{(A\cup B)^\mathsf{c}} \text{ has 1 equivalence class (through } A\setminus B \text{ and }B\setminus A)\right\}.$$ 
So $\Delta$ denotes the pairs of sets on which we can apply \Cref{Lemma: 1 equiv class}. Further, define $\I$ as the set of distinct $\{i,j\}$ pairs whose bivariate null hypotheses are true: 
$$\I = \{\{i,j\}: \H_{i,j} \text{ is true}\}.$$
Now, let us define the following operation on sets.

\begin{Definition}[Set action]\label{def:set action}
Suppose we have a set $\Phi$ of pairs of sets (like $\Delta$). Consider a set $Q$ of sets (like $\I$). We define the action of $\Phi$ on $Q$ (denoted by $\Phi \circ Q$) as $$\Phi \circ Q := Q\cup \{A\cup B : A\in Q, B\in Q, \{A,B\}\in \Phi \}. $$
\end{Definition}

Note that both instances of the word ``sets" in \Cref{def:set action} will refer to sets of integers in $[p]$ in this paper.
Consider any set $C\in\Delta\circ \I$. By \Cref{def:set action}, there exist $A, B\in \I$ satisfying $\{A,B\}\in \Delta$ with $A\cup B = C$. Since $A,B\in \I$, we have, by \Cref{Lemma: 1 equiv class}, $Y\indep X_{A\cup B}\mid X_{(A\cup B)^\mathsf{c}}$, i.e, $Y\indep X_C\mid X_{C^\mathsf{c}}$. Since $C$ was generic, we conclude that
$Y\indep X_{C}\mid X_{C^\mathsf{c}}$ for all $C\in \Delta\circ \I$. Applying the same argument with $\Delta\circ\I$ replacing $\I$, we get the same conditional independence for all $C\in \Delta\circ (\Delta\circ\I)$. Continuing in this way, we get 
$$Y\indep X_{C}\mid X_{C^\mathsf{c}}  \quad\forall C\in (\Delta\circ)^k \,\I, \quad\forall k\in \mathbb{Z}_{\geq 0}.$$
The above argument is just a recursive way of combining sets for which the conditional independence statement $Y\indep X_C\mid X_{C^\mathsf{c}}$ is true through the intersection property. We continue this until we have a big enough set $C$. In particular, to show $\M \subseteq \S^\mathsf{c}$ it is enough to show that we can reach a $C \supseteq \S^\mathsf{c}$ in this way. \Cref{thm: uniqueness of MB}, whose proof is presented in \Cref{subsec: proof of uniqueness Theorem}, formalizes this argument.

\begin{Theorem}[Uniqueness of nontrivial Markov boundary]\label{thm: uniqueness of MB}
If $\S=[p]$ then no nontrivial Markov boundary exists. Otherwise, if $\S^\mathsf{c} \in (\Delta\circ)^{k}\, \I$ for some $k \in \mathbb{Z}_{\geq0}$, then $\S$ is the unique nontrivial Markov boundary.
\end{Theorem}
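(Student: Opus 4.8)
The plan is to assemble \Cref{Lemma: M supset S}, \Cref{Lemma: 1 equiv class}, and the recursive ``set action'' propagation sketched immediately before the statement; the theorem is essentially the bookkeeping that glues these pieces together. For the first claim, suppose $\S=[p]$. By \Cref{Lemma: M supset S}, any Markov boundary $\M$ with $|\M|<p-1$ would satisfy $\M\supseteq\S=[p]$, forcing $|\M|=p$, a contradiction; hence no nontrivial Markov boundary exists and nothing more is needed in this case.

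For the main claim, assume $\S\neq[p]$ and $\S^\mathsf{c}\in(\Delta\circ)^{k}\,\I$ for some $k\geq 0$. The recursive argument preceding the statement already establishes, by induction on $m$, that $Y\indep X_{C}\mid X_{C^\mathsf{c}}$ for every $C\in(\Delta\circ)^{m}\,\I$ and every $m\geq 0$: the base case $m=0$ is the definition of $\I$ (its members are exactly the pairs $\{i,j\}$ such that $\H_{i,j}$ holds, that is, $Y\indep X_{\{i,j\}}\mid X_{\{i,j\}^\mathsf{c}}$), and the inductive step is one application of \Cref{Lemma: 1 equiv class} to a newly formed union $A\cup B$ with $A,B\in(\Delta\circ)^{m}\,\I$ and $\{A,B\}\in\Delta$ (for $C$ already in $(\Delta\circ)^{m}\,\I$ there is nothing to do). Applying this with $C=\S^\mathsf{c}$ gives $Y\indep X_{\S^\mathsf{c}}\mid X_{\S}$, i.e.\ property~1 of \Cref{def: Markov boundary} for $\S$. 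Along the way I would record the simple observation that every member of $(\Delta\circ)^{m}\,\I$ is a union of sets in $\I$ and therefore has cardinality at least $2$, so the hypothesis forces $|\S^\mathsf{c}|\geq 2$, hence $|\S|\leq p-2<p-1$, so $\S$ is nontrivial.

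It then remains to verify property~2 of \Cref{def: Markov boundary} for $\S$ and to deduce uniqueness. For property~2 I would use the elementary fact that any set satisfying property~1 contains a Markov boundary (repeatedly delete an element as long as property~1 survives; the process terminates since $[p]$ is finite). Thus if some $\M'\subsetneq\S$ satisfied $Y\indep X_{{\M'}^\mathsf{c}}\mid X_{\M'}$, it would contain a Markov boundary $\M''\subseteq\M'\subsetneq\S$; then $|\M''|\leq|\S|-1\leq p-3<p-1$, so $\M''$ is nontrivial, and \Cref{Lemma: M supset S} yields $\M''\supseteq\S$, contradicting $\M''\subsetneq\S$. Hence $\S$ is a nontrivial Markov boundary. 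For uniqueness, let $\M$ be any nontrivial Markov boundary; \Cref{Lemma: M supset S} gives $\M\supseteq\S$, and if the inclusion were strict then $\S$ would be a proper subset of $\M$ satisfying property~1, contradicting property~2 of $\M$; therefore $\M=\S$.

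I do not anticipate a genuine obstacle: the recursive conditional-independence propagation, which rests on \Cref{Lemma: 1 equiv class}, does the real work. The only points needing care are (i) the cardinality bookkeeping that keeps both $\S$ and the auxiliary Markov boundary $\M''$ strictly below size $p-1$, so that \Cref{Lemma: M supset S} applies and $\S$ itself qualifies as nontrivial, and (ii) the minimality argument for $\S$, which must route through the ``a property-1 set contains a Markov boundary'' fact rather than attempting to reason directly about arbitrary proper subsets of $\S$.
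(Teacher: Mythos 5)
Your proof is correct and follows essentially the same route as the paper's: \Cref{Lemma: M supset S} gives $\M\supseteq\S$ for any nontrivial Markov boundary, the recursive application of \Cref{Lemma: 1 equiv class} along $(\Delta\circ)^k\,\I$ gives $Y\indep X_{\S^\mathsf{c}}\mid X_{\S}$, and minimality forces equality (the paper packages the minimality step as a separate proposition asserting $\S$ is the unique minimum-cardinality set satisfying property~1, and handles the $\S=[p]$ case via weak union rather than via \Cref{Lemma: M supset S}, but these are cosmetic differences). The one point to tighten is your justification that a set satisfying property~1 contains a Markov boundary: since property~1 is not monotone under deletion of single elements, the ``repeatedly delete an element as long as property~1 survives'' process can terminate at a set that still has a strictly smaller (non-adjacent) subset satisfying property~1 and hence is not a Markov boundary; instead, take a subset of minimal \emph{cardinality} among all subsets satisfying property~1, which exists by finiteness and is a Markov boundary by construction. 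With that substitution the argument is complete.
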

In addition to establishing when $\S$ is the unique nontrival Markov boundary for compositional $X$, \Cref{thm: uniqueness of MB} also says that there is no nontrivial Markov boundary when all the $\H_{i,j}$ are false.
We can interpret this as the setting where there is no parsimony or sparsity at all and thus it is natural to consider \emph{all} covariates as being important in this setting. Hence, although $\S=[p]$ is not a Markov boundary in this case, it remains an appropriate definition of the set of important covariates. 

\Cref{thm: uniqueness of MB} is applicable to different types of compositional covariates, including discrete covariates as found in many experimental designs, continuous covariates such as relative abundances of chemicals, and mixed covariate distributions such as for microbe relative abundances which have both discrete (at 0) and continuous components. 
Additionally, the theorem is not restricted solely to compositional data but can also apply to any covariate data satisfying at most one deterministic constraint; see \Cref{sec: discussion} for other examples. 
%
%
However, the condition that $\S^\mathsf{c} \in (\Delta\circ)^{k}\, \I $ is rather abstract, so in the next subsection we provide simpler and more interpretable corollaries for important classes of compositional covariates.

\subsection{Simplifications of Theorem \ref{thm: uniqueness of MB}}\label{subsec:simplified theorem for special cases}

\subsubsection{Continuous covariates}\label{subsubsec:simplified Theorem for continuous compositional covariates}

Continuous compositional distributions such as the Dirichlet distribution are often used to model compositional covariates. These arise commonly when a continuous whole is partitioned, such as component fractions of a chemical solution or amount of time spent on different activities during a day.
For continuous compositional distributions, the assumptions in \Cref{thm: uniqueness of MB} can be simplified substantially. This is formalized in the following corollary:

\begin{corollary}[Uniqueness of nontrivial Markov boundary for continuous distributions]\label{cor:continuous distributions}
If $\S=[p]$ then no nontrivial Markov boundary exists. Otherwise, if the following two assumptions on the distribution of $(Y,X)$ hold:
\begin{itemize}
    \item[(i)] For all $i, j\in \S^\mathsf{c}$, there exists a finite $t$ and a sequence $l_1, l_2, \dots, l_t$ such that $\H_{i,l_1}$, $\H_{l_1,l_2}$, $\dots$, $\H_{l_{t-1},l_t}$, $\H_{l_t,j}$ are all true,
    \item[(ii)] $X$ can be written as $X = \frac{Z}{\sum_{j\in [p]} Z_{\{j\}}}$ for a $\mathbb{R}^p$-valued random variable $Z$ such that
    there exists a function $f_Z$ that is a density for $Z$ and is continuous in $[0,\infty)^p\setminus \{\boldsymbol{0}_p\}$ and zero elsewhere, and for all $C\subseteq [p]$ and $c \in [0,1]^{|C|}$, the set
    $$\left\{z: \ \frac{z_C}{\sum_{j\in[p]} z_{\{i\}}}=c, \ f_Z(z)>0\right\}$$ is path connected if non-empty,
\end{itemize}
then $\S$ is the unique nontrivial Markov boundary.
\end{corollary}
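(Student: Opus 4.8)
Since the first assertion is immediate from \Cref{thm: uniqueness of MB}, assume $\S\subsetneq[p]$; by that theorem it then suffices to exhibit some $k\in\mathbb{Z}_{\ge 0}$ with $\S^{\mathsf c}\in(\Delta\circ)^{k}\,\I$. My plan is to split this into a purely combinatorial stage that uses only assumption~(i), producing a recipe for assembling $\S^{\mathsf c}$ out of the pairs in $\I$ by repeated set actions, and an analytic stage that uses assumption~(ii) to certify that every set action invoked in that recipe is legitimate, i.e., that the pairs of sets it uses lie in $\Delta$.

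\emph{Combinatorial stage.} If $\H_{i,j}$ is true then $i,j\in\S^{\mathsf c}$ by the definition of $\S$, so every two-element member of $\I$ lies in $\S^{\mathsf c}$; since a set action only forms unions, every member of $(\Delta\circ)^{k}\,\I$ lies in $\S^{\mathsf c}$, and it therefore suffices to reach $\S^{\mathsf c}$ itself. Assumption~(i) is exactly the statement that the graph $G$ on vertex set $\S^{\mathsf c}$ with an edge $\{i,j\}$ whenever $\H_{i,j}$ is true is connected (the intermediate indices in~(i) automatically lie in $\S^{\mathsf c}$). With $m:=|\S^{\mathsf c}|\ge 2$, I would pick a spanning tree of $G$ and order $\S^{\mathsf c}=\{v_1,\dots,v_m\}$ so that each $v_\ell$ with $\ell\ge 2$ is $G$-adjacent to some $v_{j(\ell)}$ with $j(\ell)<\ell$. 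Writing $C_\ell:=\{v_1,\dots,v_\ell\}$, one has $C_2=\{v_1,v_2\}\in\I=(\Delta\circ)^{0}\,\I$, and since $C_{\ell+1}=C_\ell\cup\{v_{j(\ell+1)},v_{\ell+1}\}$ with $\{v_{j(\ell+1)},v_{\ell+1}\}\in\I$, granting from the analytic stage that $\{C_\ell,\{v_{j(\ell+1)},v_{\ell+1}\}\}\in\Delta$, \Cref{def:set action} lifts $C_\ell\in(\Delta\circ)^{\ell-2}\,\I$ to $C_{\ell+1}\in(\Delta\circ)^{\ell-1}\,\I$; after $m-2$ such steps, $\S^{\mathsf c}=C_m\in(\Delta\circ)^{m-2}\,\I$.

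\emph{Analytic stage.} It remains to show, using~(ii), that whenever $D\subseteq[p]$ with $|D|\ge 2$, $v\in D$ and $w\in D\setminus\{v\}$, the conditional distribution $X_D\mid X_{D^{\mathsf c}}$ has a single equivalence class through $P:=D\setminus\{v,w\}$ and $\{v\}$; this covers the pairs used above, with $D=C_{\ell+1}$, $v=v_{\ell+1}$, $w=v_{j(\ell+1)}$. Fix a value $x_0$ of $X_{D^{\mathsf c}}$ in its support. Using the representation $X=Z/\sum_{i\in[p]}Z_{\{i\}}$ from~(ii), the essential conditional support of $X_D$ given $X_{D^{\mathsf c}}=x_0$ is the image, under the continuous scale-invariant map $z\mapsto z_D/\sum_{i\in[p]}z_{\{i\}}$, of
$$W:=\Bigl\{z\in[0,\infty)^p\setminus\{\boldsymbol{0}_p\}:\;\tfrac{z_{D^{\mathsf c}}}{\sum_{i\in[p]}z_{\{i\}}}=x_0,\;f_Z(z)>0\Bigr\}.$$
By~(ii), $W$ is path-connected, and (since $f_Z$ is continuous) $W$ is relatively open in the surface $\{z:z_{D^{\mathsf c}}/\sum_i z_{\{i\}}=x_0\}$, which is parametrized by $z_D\in[0,\infty)^D\setminus\{\boldsymbol{0}\}$ and on which that map becomes the radial projection onto the simplex face $H:=\{y\in[0,\infty)^D:\sum_{i\in D}y_{\{i\}}=1-\sum_{i\in D^{\mathsf c}}x_{0,\{i\}}\}$. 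This radial projection restricts to an open submersion onto $\mathrm{relint}(H)$, so the image $T^{\circ}$ of $W\cap(0,\infty)^D$ is relatively open and connected in $H$ (removing the orthant boundary from a relatively open path-connected subset of $[0,\infty)^D$ keeps it path-connected), and $T^{\circ}$ differs from the full conditional support only on $\partial H$ (where some coordinate vanishes), a null set for the continuous conditional law.

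The crux — and, I expect, the main obstacle — is to upgrade this path-connectivity to the \emph{single-equivalence-class} property, which is not automatic (a circle in $\mathbb{R}^2$ is path-connected yet has infinitely many equivalence classes). I would argue that $T^{\circ}$ is a single equivalence class through $P$ and $\{v\}$ as follows: two points of $H$ differing by a multiple of $e_i-e_j$ are coordinatewise connected (through $P$ and $\{v\}$) whenever $\{i,j\}$ does not meet both $P$ and $\{v\}$; because $w\in D$ lies in neither $P$ nor $\{v\}$ — this is exactly where the overlap in $w$ of the two combined sets $C_\ell$ and $\{w,v\}$ is used — the ``allowed'' directions $e_i-e_j$ span the entire tangent space of $H$, since the only forbidden ones, $e_v-e_p$ with $p\in P$, equal $(e_v-e_w)+(e_w-e_p)$. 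Hence the relation ``joined inside $T^{\circ}$ by a polygonal path all of whose segments are allowed directions'' has relatively open equivalence classes in $H$, so connectedness of $T^{\circ}$ forces a single such class; as consecutive vertices along such a path are coordinatewise connected, $T^{\circ}$ — hence, by the null-set remark, $X_D\mid X_{D^{\mathsf c}}$ — has a single equivalence class through $P$ and $\{v\}$, i.e., $\{C_\ell,\{w,v\}\}\in\Delta$. The care-demanding details will be the openness/submersion claims, the ``open equivalence classes'' argument, and the treatment of the simplex boundary where coordinates vanish.
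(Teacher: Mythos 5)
Your proposal is correct and follows the same overall skeleton as the paper's proof: reduce to \Cref{thm: uniqueness of MB}, use assumption (i) to assemble $\S^\mathsf{c}$ from the pairs in $\I$ by repeated set actions (the paper does this with an informal ``chain of overlapping pairs'' argument where your spanning-tree ordering is actually more explicit and careful), and use assumption (ii) to show that each pair $\{C_\ell,\{w,v\}\}$ with $|C_\ell\cap\{w,v\}|=1$ lies in $\Delta$ by first establishing path-connectedness and relative openness of the conditional support via the lift to $Z$ (the paper isolates the lift as \Cref{lemma: support z to x}). Where you genuinely diverge is the crux step of upgrading path-connectedness to a single equivalence class. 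The paper extracts a finite subcover of balls of radius $\epsilon_t/\sqrt{2}$ along the path and, for adjacent centers $w,w'$ with common point $w^\star$, explicitly constructs an intermediate point $w^\dag$ that dumps the sum discrepancy into the overlap coordinate $A\cap B$, verifying by an $L^2$ computation that $w^\dag$ stays in the support; your version packages the same geometric insight --- that $e_v-e_{p}=(e_v-e_w)+(e_w-e_{p})$, so the ``allowed'' coordinatewise moves span the tangent space of the slice --- into a cleaner clopen-type argument (equivalence classes under allowed polygonal paths are relatively open, so connectedness forces one class). Your route avoids the paper's explicit case analysis and distance bounds, at the cost of the openness/submersion and ``boundary excision'' details you flag. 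One point to tighten: because you establish the single-equivalence-class property only for $T^{\circ}$, which omits a null subset of the conditional support, you are not literally verifying the hypothesis of \Cref{Lemma: 1 equiv class} as stated (which quantifies over the support); you should either note that the lemma's conclusion is an a.s.\ statement so a co-null equivalence class suffices, or extend your allowed-directions argument to boundary points of the orthant (which it does handle, since moving off a face is an allowed direction), as the paper implicitly does by working with the full positive-density support and relative openness in the closed simplex.
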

The proof of \Cref{cor:continuous distributions} from \Cref{thm: uniqueness of MB} in \Cref{appendix:proof-continuous-covariates} requires a nontrivial adaptation of \citet[Proposition 1]{Peters} to the compositional setting.
Assumption \textit{(i)} places restrictions on the distribution of $Y \mid X$ and precludes pathological cases like \Cref{ex:linear dependence} which had $\{\H_{1,2},\H_{3,4}\}$ as the (non-path-connected) set of true nulls. On the other hand, assumption \textit{(ii)} is a condition on the marginal distribution of $X$ and is completely agnostic to the distribution $Y\mid X$. Its main role is to exclude the possibility of additional constraints on the distribution of $X$ aside from the sum constraint due to compositionality. Standard continuous compositional distributions such as the Dirichlet distribution (which is a normalization of independent Gamma distributed random variables) can be easily shown to satisfy assumption \textit{(ii)}.

\subsubsection{Categorical or factor covariates}\label{subsubsec: simplification for factor covariates}

In addition to relative abundance data, compositional data can arise when covariates represent categorical factors, which is common, for instance, in experimental design. The simplest case is when the covariates represent a single factor with $p$ levels; such covariates arise, e.g., in one-way ANOVA experimental designs.
In this case, $X_{\{j\}}$ is the indicator variable for the $j^{\text{th}}$ level of the factor; this is also sometimes called ``one-hot" encoding. 
Since the factor can only take one level at a time, $X$ sums to 1. We can generalize this to $K$ factors (as would arise in a $K$-factor design) by considering $X$ as the concatenation of $K$ single-factor versions of $X$, and 
notationally, we will say that the $p$ entries of $X$ can be partitioned into sets $F_1, \dots, F_K$, corresponding to each factor, so that for each $k\in [K]$, 
$\sum_{j\in F_k}X_{\{j\}} = 1$.\footnote{While this means $\sum_{j=1}^p X_{\{j\}} = K$ instead of 1, we still refer to $X$ as compositional.}
While this imposes $K$ constraints, contradicting the intuition of \Cref{remark:trivariate}, the disjointness of the constraints (since each only applies to $X_{F_k}$ and the $F_k$ are disjoint) makes it a special case that still permits the application of \Cref{thm: uniqueness of MB}. 

Note that, due to compositionality within each factor, $\H_{i,j}$ is always true if $i$ and $j$ correspond to different factors. 
This immediately implies $\S=\emptyset$ when $K>1$, so we present a modified definition of $\S$ for this setting:
$$\S_F = \bigcup_{k\in [K]}\left\{j\in F_k: \H_{i,j} \text{ is false for all } i\in F_k\setminus \{j\}\right\}.$$ 
Note that when $K=1$, $\S_F=\S$.
We also generalize the definition of a nontrivial Markov boundary: for a Markov boundary $\M$, we define the factored Markov boundary corresponding to the factor $k$ as $\M_k=\M\cap F_k$ and we call a factored Markov boundary $\M_k$ nontrivial if $|\M_k| < |F_k|-1$. 
\begin{corollary}[Uniqueness of nontrivial factored Markov boundary for regression with $K$ factors]\label{cor:factor covariates}
Consider regression with $K$-factor covariates, and fix some $k\in[K]$. If $F_k\subseteq \S_F$, then no nontrivial factored Markov boundary corresponding to factor $k$ exists. Otherwise, if the following two assumptions hold:
\begin{itemize}
    \item[(i)]For all $i,j\in \S_F^\mathsf{c} \cap F_k$, there exists a finite $t$ and a sequence $l_1, l_2, \dots, l_t \in F_k$ such that $\H_{i,l_1}$, $\H_{l_1,l_2}$, $\dots$, $\H_{l_{t-1},l_t}$, $\H_{l_t,j}$ are all true,
    \item[(ii)] Positive probability is assigned to all levels of factor $k$,
\end{itemize}
then $\S_F\cap F_k$ is the unique nontrivial Markov boundary corresponding to factor $k$.
\end{corollary}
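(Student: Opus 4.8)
The plan is to reduce \Cref{cor:factor covariates} to \Cref{thm: uniqueness of MB}, applied ``within factor $k$'' --- that is, re-running the development of \Cref{subsec:uniqueness theorem} with the other factors' variables $X_{F_k^\mathsf{c}}$ appended to every conditioning set. Under this reduction, $X_{F_k}$ plays the role of a single-constraint compositional covariate, the within-factor bivariate nulls $\H_{i,j}$ with $i,j\in F_k$ are exactly the relevant conditional bivariate independences, and $\S_F\cap F_k$ plays the role of $\S$. Before that I would dispose of two easy points. First, the ``parity'' fact that $|\S_F\cap F_k|\ne |F_k|-1$: if $F_k\setminus\S_F=\{j_0\}$ were a singleton, then $\H_{i,j_0}$ would be false for every $i\in\S_F\cap F_k$ (by definition of $\S_F$), and since $F_k\setminus\{j_0\}=\S_F\cap F_k$ this would force $j_0\in\S_F$, a contradiction; hence $F_k\not\subseteq\S_F$ already gives $|\S_F\cap F_k|\le|F_k|-2$, so $\S_F\cap F_k$ is itself nontrivial. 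Second, the case $F_k\subseteq\S_F$: if some global Markov boundary $\M$ had $|\M\cap F_k|<|F_k|-1$, then $|\M^\mathsf{c}\cap F_k|\ge 2$, and the weak union property applied to $Y\indep X_{\M^\mathsf{c}}\mid X_{\M}$ makes $\H_{i,j}$ true for all $\{i,j\}\subseteq\M^\mathsf{c}\cap F_k$, so $\M^\mathsf{c}\cap F_k$ is disjoint from $\S_F$, contradicting $F_k=\S_F\cap F_k\subseteq\M$.

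For the main case $F_k\not\subseteq\S_F$, the same weak union argument (now needing only $|\M\cap F_k|<|F_k|-1$) yields the analog of \Cref{Lemma: M supset S}: any global Markov boundary $\M$ with $\M\cap F_k$ nontrivial satisfies $\M\cap F_k\supseteq\S_F\cap F_k$. The reverse containment reduces, just as in \Cref{subsec:uniqueness theorem}, to establishing $Y\indep X_{F_k\cap\S_F^\mathsf{c}}\mid X_{(F_k\cap\S_F^\mathsf{c})^\mathsf{c}}$, i.e.\ that $(\S_F\cap F_k)\cup F_k^\mathsf{c}$ is a Markov blanket of $Y$; combining this with the forward containment and property 2 of \Cref{def: Markov boundary} exactly as in the proof of \Cref{thm: uniqueness of MB}, but with $X_{F_k^\mathsf{c}}$ carried along in every conditioning set, pins down $\M\cap F_k=\S_F\cap F_k$ for every factor-$k$-nontrivial $\M$, and thinning the Markov blanket $(\S_F\cap F_k)\cup F_k^\mathsf{c}$ down to a Markov boundary --- which is factor-$k$-nontrivial by the parity fact, hence has factored part exactly $\S_F\cap F_k$ --- shows such a nontrivial factored Markov boundary exists. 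To obtain $Y\indep X_{F_k\cap\S_F^\mathsf{c}}\mid X_{(F_k\cap\S_F^\mathsf{c})^\mathsf{c}}$ I would run the recursive set-action argument inside $F_k$: cross-factor nulls $\H_{i,j}$ (with $i,j$ in different factors) are vacuously true and contribute nothing, assumption \textit{(i)} supplies the path-connected chains of true within-factor nulls that let $(\Delta\circ)$ grow a conditionally independent index set up to $F_k\cap\S_F^\mathsf{c}$, and assumption \textit{(ii)} supplies the full support of $X_{F_k}$ needed to put the relevant pairs $\{A,B\}$ of subsets of $F_k$ into $\Delta$, i.e.\ to make the (one-hot) conditional law of $X_{A\cup B}$ given $X_{(A\cup B)^\mathsf{c}}$ (with $X_{F_k^\mathsf{c}}$ in the conditioning set) have a single equivalence class so that \Cref{Lemma: 1 equiv class} applies.

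The step I expect to require the most care is exactly this verification of the single-equivalence-class / $\Delta$-membership condition for the \emph{discrete}, one-hot conditional law of $X_{F_k}$: it is the combinatorial counterpart of the path-connectedness argument underlying \Cref{cor:continuous distributions}, and it must also confirm that conditioning on the remaining factors $X_{F_k^\mathsf{c}}$ does not collapse the support within factor $k$, so that the per-factor hypothesis \textit{(ii)} genuinely suffices even though $X$ obeys $K$ simultaneous constraints. A secondary, more bookkeeping obstacle is making the ``re-run \Cref{subsec:uniqueness theorem} with $X_{F_k^\mathsf{c}}$ in every conditioning set'' reduction rigorous --- in particular handling the fact that the conditional law given $X_{F_k^\mathsf{c}}$ is only defined almost everywhere and transferring the conclusion back to the global Markov boundary and its factored parts --- which is why this needs a dedicated proof rather than following immediately from \Cref{thm: uniqueness of MB}.
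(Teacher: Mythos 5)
Your overall strategy is the paper's: weak union gives $\M\cap F_k\supseteq \S_F\cap F_k$ for any factor-$k$-nontrivial Markov boundary, and the set-action/intersection-property machinery, fed by assumption \textit{(i)} (the chains of true within-factor nulls) and assumption \textit{(ii)} (which makes the one-hot conditional supports single-equivalence-class so that pairs of subsets of $F_k$ land in $\Delta$), yields $Y\indep X_{F_k\cap\S_F^{\mathsf{c}}}\mid X_{(\S_F\cap F_k)\cup F_k^{\mathsf{c}}}$. Your handling of the case $F_k\subseteq\S_F$ matches the paper's, and your explicit ``parity'' observation and the existence argument by thinning the blanket are correct additions the paper leaves implicit.

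The genuine gap is the final step. You claim that once $(\S_F\cap F_k)\cup F_k^{\mathsf{c}}$ is known to be a Markov blanket, minimality of $\M$ forces $\M\cap F_k=\S_F\cap F_k$ ``exactly as in the proof of \Cref{thm: uniqueness of MB}.'' In \Cref{thm: uniqueness of MB} that step works because \Cref{Lemma: M supset S} gives $\S\subseteq\M$, so $\S\subsetneq\M$ exhibits a Markov blanket that is a \emph{proper subset of} $\M$, contradicting item 2 of \Cref{def: Markov boundary}. Here the blanket you have built is $(\S_F\cap F_k)\cup F_k^{\mathsf{c}}$, which is generally \emph{not} contained in $\M$ (a global Markov boundary need not contain all of $F_k^{\mathsf{c}}$), so no contradiction with minimality follows. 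To close this you must produce a blanket inside $\M$: the paper does so by combining $Y\indep X_{\S_F^{\mathsf{c}}\cap F_k}\mid X_{(\S_F\cap F_k)\cup F_k^{\mathsf{c}}}$ with $Y\indep X_{\M^{\mathsf{c}}\cap F_k^{\mathsf{c}}}\mid X_{\M\cup F_k}$ (weak union on $\M$) via one more application of \Cref{Lemma: 1 equiv class}, for which it must first verify the single-equivalence-class condition for \emph{cross-factor} pairs $\{A,B\}$ with $A\subseteq F_k$ and $B\subseteq F_k^{\mathsf{c}}$ (its Step 2). Your proposal only puts pairs of subsets of $F_k$ into $\Delta$, so this cross-factor intersection step is unavailable; the ``transfer back to the global Markov boundary,'' which you describe as bookkeeping, is in fact a substantive missing argument requiring its own support verification.
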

Assumption \textit{(i)} is on $Y\mid X$ while \textit{(ii)} is only on the marginal distribution of $X$. 
The proof of \Cref{cor:factor covariates} in \Cref{appendix:proof of factor covariates} generalizes that of \Cref{cor:continuous distributions} to multiple compositional groups of covariates.

A generalization of $K$-factor covariates enforces that each observation take $L\ge 1$ levels from each factor; this arises frequently in gene knockout experiments where exactly $L>1$ out of $p$ genes are knocked out per individual, resulting in $\sum_{j\in F_k}X_{\{j\}}=L$ for all $k\in [K]$. \Cref{cor:gene knockout experiments} in \Cref{sec:gene knockout experiments} proves uniqueness of the Markov boundary for this setting.

\begin{remark}
    Although the Markov boundary can usually be thought of as a property of $Y\mid X$, when $X$ has highly restricted support $\mathrm{supp}(X)$, the Markov boundary can also depend on $\mathrm{supp}(X)$ as well as $Y\mid X$. This phenomenon is perhaps most intuitively clear when thinking about interactions: if $Y=X_{\{1\}} X_{\{2\}}$ and $X_{\{1\}},X_{\{2\}}\in\{0,1\}$, then we expect the Markov boundary will be $\{1,2\}$ and indeed it will be if the support of $(X_{\{1\}},X_{\{2\}})$ contains $(1,1)$ and at least one other point. But if instead $X_{\{1\}}+X_{\{2\}}=1$, then $Y\stackrel{\mathrm{a.s.}}{=}0$ and the Markov boundary will be empty (though well-defined). Thus, analysts should think carefully about the choice of $\mathrm{supp}(X)$ in an experimental design and how it may impact the Markov boundary, especially when interactions are of interest~\citep{dean1999design}. 
\end{remark}

\section{Hypothesis testing and variable selection}\label{sec: methodology}
\Cref{sec:well-defined variable selection target} identified sufficient conditions for the existence of a unique nontrivial Markov boundary and argued that it provides a natural definition of important compositional covariates in regression. It also argued that when no nontrivial Markov boundary exists, the natural definition of important compositional covariates in regression is simply $[p]$. In particular, \Cref{thm: uniqueness of MB} established two possibilities: either (a) $\S=[p]$ and no nontrivial Markov boundary exists, or (b) $\S\subsetneq [p]$ and is (under the conditions of \Cref{thm: uniqueness of MB}) the unique nontrivial Markov boundary. Note that the possibility that no nontrivial Markov boundary exists and $\S\subsetneq [p]$ is precluded by the assumption of the theorem that if $\S\subsetneq [p]$, then $\S^\mathsf{c} \in (\Delta\circ)^{k}\, \I$ for some $k \in \mathbb{Z}_{\geq0}$ and $\S$ is the unique nontrivial Markov boundary. 
So while \Cref{sec:well-defined variable selection target} was focused conceptually on the Markov boundary, what it really did was justify $\S$, which is always unique and well-defined, as an appropriate definition of the set of important covariates via its connection with the Markov boundary, under mild conditions. 
Thus, turning to methodology in this section, we will always assume the conclusion of \Cref{thm: uniqueness of MB}:
\begin{assumption}\label{asm:mb}
Either (a) $\S=[p]$ and no nontrivial Markov boundary exists, or (b) $\S\subsetneq [p]$ and is the unique nontrivial Markov boundary.
\end{assumption}
Then, we simply treat $\S$ as our inferential target, and in particular we will develop hypothesis tests for
\[ \H_{0j}: j\notin\S \]
as well as multiple testing procedures combining such tests for the task of controlled variable selection.

\subsection{Testing for the importance of a compositional covariate}\label{sec:testing single covariate}
First, as is always the case in constructing hypothesis tests, we need only consider the properties of a test when the hypothesis is true (since if the hypothesis is false, it cannot be falsely rejected and hence the Type I error is automatically controlled). And the existence of a true $\H_{0j}$ precludes case (a) of \Cref{asm:mb}, so we only need to construct hypothesis tests that are valid under case (b) of \Cref{asm:mb}. Thus, in the remainder of this subsection, we will focus solely on the case when $\S\subsetneq [p]$ is the unique nontrivial Markov boundary, as this is the only case we need to consider to establish valid hypothesis tests.


Just based on its definition, it may not be immediately clear how to test $\H_{0j}$. But since $\S$ is defined via the hypotheses $\H_{i,j}$, which are standard conditional independence hypotheses which many methods exist to test, our first goal in this section is to connect $\H_{0j}$ with the hypotheses $\H_{i,j}$. In particular, by definition, $\H_{0j} = \H_{0j}^{p-1}$, where for any integer $r$,  $\H_{0j}^r$ is defined as
\begin{equation}\label{eq: bcpch}
\H_{0j}^r : |\{i \in \{j\}^{\mathsf{c}}: \H_{i,j}\text{ false}\}|<r.
\end{equation}
But in fact, letting $s:=|\S|$, since we are only considering the case when $s<p$ (recall also that by definition, $s\neq p-1$, so $s<p \Rightarrow s<p-1$), then because $\S$ is a Markov boundary ($Y\indep X_{\S^{\mathsf{c}}} \mid X_{\S}$) and by the weak union property of conditional independence, $\H_{i,j}$ is true for all $i,j \in \S^{\mathsf{c}}$. So (under \Cref{asm:mb}) $\H_{0j} = \H_{0j}^{s+1}$, and combining this with the fact that $\H_{0j}^{r} \Rightarrow \H_{0j}^{r'}$ for any $r \le r'$, we get two useful conclusions: first, a hypothesis test for $\H_{0j}^r$ constitutes a valid test for $\H_{0j}$ so long as $r>s$, and second, tests for $\H_{0j}^r$ will tend to have power nonincreasing in $r$. Thus, if an analyst has any domain knowledge upper-bounding the size of $\S$, then they should test $\H_{0j}^{\overline{s}}$ where $\overline{s}$ is the smallest (strict) upper-bound on $s$ they are confident about, and that test will be valid for $\H_{0j}$ as long as $\overline{s}>s$ (or $\overline{s}=p-1$ remains always a valid choice, regardless of the value of $s$). 

In fact it is not uncommon to have at least a coarse upper-bound for $s$ (see \cite{filzmoser2018methods, chen2013variable, rivera2018balances} for examples of regression with compositional covariates in which such an upper-bound is available), since sparsity is a common belief in many applications. As we show in our simulations, even a relatively weak upper-bound on $s$ of $\overline{s}=p/2$ is sufficient to gain substantial power over using just $\overline{s}=p-1$. However, we emphasize that extra knowledge upper-bounding $s$ is not necessary for the methods we propose, as the choice $\overline{s}=p-1$ is always valid 
and can still provide good power.

Having equated our hypothesis of interest, $\H_{0j}$, with the hypothesis $\H_{0j}^{r}$ from \Cref{eq: bcpch} with $r=\overline{s}$, we now point out that $\H_{0j}^{r}$ is a \emph{partial conjunction hypothesis} (PCH) \citep{benjamini2008screening}. For a given integer $r$, a PCH states that strictly fewer than $r$ base hypotheses are false out of some collection of base hypotheses (in our case, the set of $p-1$ hypotheses $\{\H_{i,j}: i\in\{j\}^{\mathsf{c}}\}$), aligning exactly with \Cref{eq: bcpch}. PCH testing is a well-studied topic in the statistical literature (applications include meta-analyses~\citep{owen2009karl}, genomics~\citep{wang2019admissibility}, and neuroscience~\citep{friston1999multisubject}), allowing us to apply existing PCH tests to $\H_{0j}^{\overline{s}}$ and thus to $\H_{0j}$. 

PCH tests assume access to a p-value for each of the base hypotheses. Since our base hypotheses $\H_{i,j}$ are standard conditional independence hypotheses, there exist numerous methods for testing them given, say, a data set of $n$ i.i.d. observations of the random vector $(X,Y)$ (certain forms of dependence among observations, such as those arising from common experimental designs, are also compatible with conditional independence testing; see \Cref{sec: obtaining p-values}). 
This paper does not attempt to contribute to the substantial literature on conditional independence testing, and instead we take a conditional-independence-test-agnostic approach by simply assuming that for each $\H_{i,j}$, one of the many excellent available conditional independence tests has been applied to obtain a p-value $P_{i,j}$ that is valid for $\H_{i,j}$.\footnote{While one may expect $P_{i,j}=P_{j,i}$ since $\H_{i,j}=\H_{j,i}$, we will not assume this, though we will allow it.} In \Cref{sec: obtaining p-values} we discuss a bit further the topic of testing $\H_{i,j}$ and describe the particular choice we make for our simulations.

In this paper, we will consider two PCH tests as ways to combine our base p-values $P_{i,j}$ into a valid test for $\H_{0j}$. 
Recognizing that the PCH $\H_{0j}^{r}$ is equivalent to the statement that at least $(p-1)-(r-1)$ base hypotheses are null, the standard PCH testing strategy is to apply a global null hypothesis test to the $(p-1)-(r-1)$ largest base p-values, and the validity of such an approach follows from the super-uniformity of such p-values \citep{benjamini2008screening}.
The first PCH test we consider applies the Bonferroni global test, which, given $r=\overline{s}\le p-1$ as discussed earlier (which can either be $p-1$ or a known strict upper-bound on $s$), gives the p-value
\begin{align}\label{eq:bonferroni pch p-val}
P_{j}^{\mathrm{B}} = (p-\overline{s}) P_{(\overline{s}),j},
\end{align}
where for any $i\in [p-1]$, $P_{(i),j}$ denotes the $i^{\text{th}}$ smallest p-value among $\{P_{i^\prime,j}: i^\prime\in \{j\}^\mathsf{c}\}$. The following result, which given our results so far is an immediate consequence of \citet[Theorem 2]{benjamini2008screening}, proves the validity of $P_j^\mathrm{B}$ for testing $\H_{0j}$.
\begin{corollary}\label{cor: validity of bonferroni p-val}
Under \Cref{asm:mb}, if $\overline{s}>s$ or $\overline{s}=p-1$, then for any $\alpha\in[0,1]$, $\mathbb{P}_{\H_{0j}}(P_j^\mathrm{B}\le \alpha)\le \alpha$.
\end{corollary}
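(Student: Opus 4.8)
The plan is to reduce the claim to \citet[Theorem 2]{benjamini2008screening} via the reductions already developed in this subsection, so the only real work is to verify that, whenever $\H_{0j}$ holds, the combining p-value $P_j^{\mathrm B}$ is a valid p-value for the partial conjunction hypothesis $\H_{0j}^{\overline s}$ of \Cref{eq: bcpch} and that $\H_{0j}\Rightarrow\H_{0j}^{\overline s}$. First I would note that there is nothing to prove unless $\H_{0j}$ actually holds, i.e.\ $j\notin\S$; as observed in the text this rules out case (a) of \Cref{asm:mb}, so we may work in case (b), where $\S\subsetneq[p]$ is the unique nontrivial Markov boundary, and in particular $s:=|\S|<p-1$, hence $s\le p-2$, and $|\S^{\mathsf c}|=p-s\ge 2$ with $j\in\S^{\mathsf c}$.

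Next I would carry out the translation of $\H_{0j}$ into a partial conjunction statement. Since $\S$ is a Markov boundary, $Y\indep X_{\S^{\mathsf c}}\mid X_{\S}$, so by the weak union property $\H_{i,i'}$ is true for every $\{i,i'\}\subseteq\S^{\mathsf c}$; applying this with $i'=j$ shows that any $i$ with $\H_{i,j}$ false must lie in $\S$, so $\{i\in\{j\}^{\mathsf c}:\H_{i,j}\text{ false}\}\subseteq\S$ and thus at most $s$ of the $p-1$ hypotheses $\{\H_{i,j}:i\in\{j\}^{\mathsf c}\}$ are false, i.e.\ $\H_{0j}^{s+1}$ holds. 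Using the monotonicity $\H_{0j}^{r}\Rightarrow\H_{0j}^{r'}$ for $r\le r'$: if $\overline s>s$ then $\H_{0j}^{s+1}\Rightarrow\H_{0j}^{\overline s}$, while if $\overline s=p-1$ then $s\le p-2<\overline s$ already gives $\H_{0j}^{\overline s}$ directly. Either way $\H_{0j}\Rightarrow\H_{0j}^{\overline s}$, so it suffices to show $\mathbb P_{\H_{0j}^{\overline s}}(P_j^{\mathrm B}\le\alpha)\le\alpha$.

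Now $\H_{0j}^{\overline s}$ is precisely the partial conjunction null over the $n:=p-1$ base hypotheses $\{\H_{i,j}:i\in\{j\}^{\mathsf c}\}$ at threshold $r:=\overline s$, for which the assumed-valid (super-uniform) base p-values $P_{i,j}$ are available, and $P_j^{\mathrm B}=(p-\overline s)P_{(\overline s),j}=(n-r+1)p_{(r)}$ is exactly the Bonferroni partial conjunction combining p-value of \citet{benjamini2008screening}, whose validity under $\H_{0j}^{\overline s}$ is \citet[Theorem 2]{benjamini2008screening}. For completeness one can reprove it in a line: under $\H_{0j}^{\overline s}$ the number $n_1$ of false $\H_{i,j}$ satisfies $n_1\le\overline s-1$, so $n_0:=p-1-n_1\ge p-\overline s$ of the base p-values are super-uniform; the event $\{P_{(\overline s),j}\le x\}$ forces at least $\overline s-n_1\ge 1$ of these $n_0$ p-values to be $\le x$, and Markov's inequality applied to their count gives $\mathbb P(P_{(\overline s),j}\le x)\le n_0 x/(\overline s-n_1)\le(p-\overline s)x$, the last step because $n_1\mapsto(p-1-n_1)/(\overline s-n_1)$ is nondecreasing on $n_1\le\overline s-1$ (as $\overline s\le p-1$) and hence maximized at $n_1=\overline s-1$. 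Taking $x=\alpha/(p-\overline s)$ yields $\mathbb P_{\H_{0j}^{\overline s}}(P_j^{\mathrm B}\le\alpha)\le\alpha$, and combined with $\H_{0j}\Rightarrow\H_{0j}^{\overline s}$ this is the claim.

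There is no substantive obstacle here: all of the probabilistic content sits in \citet[Theorem 2]{benjamini2008screening} (or the one-line Markov argument above), and the novelty is purely the structural step of recognizing, via \Cref{asm:mb} and the weak union property, that $\H_{0j}$ is implied by the genuine partial conjunction hypothesis $\H_{0j}^{\overline s}$. The only points needing care are bookkeeping: splitting into the cases $\overline s>s$ and $\overline s=p-1$, and using that the nontrivial Markov boundary has size strictly below $p-1$ (so $s\le p-2$) rather than merely at most $p-1$.
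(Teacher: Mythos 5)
Your proposal is correct and follows essentially the same route as the paper: the paper's "proof" is exactly the in-text reduction $\H_{0j}\Rightarrow\H_{0j}^{s+1}\Rightarrow\H_{0j}^{\overline s}$ (via the weak union property and the fact that a nontrivial Markov boundary has size $<p-1$) followed by a citation of \citet[Theorem 2]{benjamini2008screening} for the Bonferroni partial conjunction p-value. Your one-line Markov-inequality reproof of that cited theorem is a correct bonus but not a departure from the paper's argument.
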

The appeal of the Bonferroni approach is that it requires no assumptions about the dependence among the base p-values $P_{i,j}$. This is always a nice property to have, but is perhaps of particular value in our setting where $P_{i,j}$ and $P_{i',j}$ are testing hypotheses which are quite strongly linked, since they are computed from the same data and both contain $X_{\{j\}}$ in the pair of covariates whose conditional independence with $Y$ is being tested. However, the Bonferroni approach tends to have low power when $r$ is much less than the number of base hypotheses, and, perhaps surprisingly and as we discuss in \Cref{appendix: simulation bonferroni}, $P^{\mathrm{B}}_j$ is not monotonic with respect to $\overline{s}$. 

The other PCH test we consider applies the Simes global test \citep{simes1986improved}, which requires some assumptions about the dependence among the $P_{i,j}$ but can have substantially more power than the Bonferroni approach. Given $r=\overline{s}$ as discussed earlier, the Simes approach to testing $\H_{0j}$ gives the p-value
\begin{align}\label{eq:Simes pch p-val}
P_{j}^{\mathrm{S}} = \min _{\overline{s} \leq i \leq p-1}\left\{\frac{p-\overline{s}}{i-\overline{s}+1} P_{(i),j}\right\}.
\end{align}
Validity of the Simes p-value requires a form of positive dependence among the null base p-values, namely, \emph{positive regression dependency on each one from a subset} (PRDS) \citep{benjaminiyekutieli}. Positive dependence may be plausible (if not easily provable) for the base p-values in our setting because the shared structure of their corresponding hypotheses is a form of alignment, not disalignment, and indeed in our simulations we never observed Type I error violations for the Simes approach despite having no formal proof of PRDS for the null base p-values in any setting. Formally, we have the following validity result for $P_j^\mathrm{S}$, which given our results so far follows from \citet[Theorem 1]{benjamini2008screening}.
\begin{corollary}\label{cor: validity of simes p-val}
Under \Cref{asm:mb}, if either $\overline{s}>s$ or $\overline{s}=p-1$, and the base p-values $\{P_{i,j}:\ i\in \S^\mathsf{c}\setminus\{j\}\}$ are $\mathrm{PRDS}$, then for all $\alpha\in[0,1]$, $\mathbb{P}_{\H_{0j}}(P_j^\mathrm{S}\le \alpha)\le \alpha$.
\end{corollary}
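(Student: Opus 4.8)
The plan is to derive \Cref{cor: validity of simes p-val} as essentially a direct corollary of the Simes-based partial conjunction test validity established in \citet[Theorem 1]{benjamini2008screening}, following exactly the same template already used for the Bonferroni case in \Cref{cor: validity of bonferroni p-val}. The key structural fact, already laid out in the text preceding the statement, is that under \Cref{asm:mb} and the assumption that $\H_{0j}$ is true (so we are in case (b) with $s = |\S|$), the hypothesis $\H_{0j}$ coincides with the partial conjunction hypothesis $\H_{0j}^{s+1}$ over the $p-1$ base hypotheses $\{\H_{i,j} : i \in \{j\}^{\mathsf c}\}$, and moreover all base hypotheses indexed by $i \in \S^{\mathsf c}\setminus\{j\}$ are true (this uses that $\S$ is a Markov boundary plus the weak union property). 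Since $\H_{0j}^{r} \Rightarrow \H_{0j}^{r'}$ for $r \le r'$, testing at level $r = \overline{s}$ with $\overline{s} > s$ (or $\overline{s} = p-1$) gives a valid test for $\H_{0j}^{s+1}$ and hence for $\H_{0j}$.

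First I would fix $j$, assume $\H_{0j}$ holds, and invoke \Cref{asm:mb} to place ourselves in case (b), writing $s = |\S| < p-1$ and noting $j \in \S^{\mathsf c}$. Second, I would record that the set of indices $i$ for which $\H_{i,j}$ is true contains $\S^{\mathsf c}\setminus\{j\}$, which has cardinality $(p-1) - s$; thus at most $s$ of the base hypotheses are false, i.e.\ $\H_{0j}^{\overline{s}}$ holds whenever $\overline{s} \ge s+1$ (and trivially when $\overline{s} = p-1$ since then $\H_{0j}^{p-1} = \H_{0j}$). Third, I would note that $P_j^{\mathrm S}$ as defined in \eqref{eq:Simes pch p-val} is exactly the Simes partial conjunction p-value of \citet{benjamini2008screening} at threshold $r = \overline{s}$ applied to the base p-values $\{P_{i,j} : i \in \{j\}^{\mathsf c}\}$. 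The hypothesis of \citet[Theorem 1]{benjamini2008screening} requires each $P_{i,j}$ with $\H_{i,j}$ true to be (super-)uniform — which holds since each $P_{i,j}$ is a valid p-value for $\H_{i,j}$ — and requires the collection of \emph{true-null} base p-values to be PRDS; here the relevant true-null indices are contained in $\S^{\mathsf c}\setminus\{j\}$, so the assumption $\{P_{i,j} : i \in \S^{\mathsf c}\setminus\{j\}\}$ PRDS is exactly what is needed. Applying that theorem yields $\mathbb P(P_j^{\mathrm S} \le \alpha) \le \alpha$ under $\H_{0j}^{\overline{s}}$, and since $\H_{0j}$ implies $\H_{0j}^{\overline{s}}$ under the stated conditions, the conclusion follows.

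I do not expect any real obstacle here — the work has already been done in \Cref{sec:testing single covariate}, and the only things to be careful about are bookkeeping: (1) making sure the PRDS assumption is imposed only on the sub-collection of p-values indexed by the true nulls $\S^{\mathsf c}\setminus\{j\}$, matching the form of \citet{benjamini2008screening}'s theorem rather than requiring PRDS of all $p-1$ base p-values; (2) checking the edge cases $\overline{s} = p-1$ (valid unconditionally) and the reduction $s < p \Rightarrow s < p-1$ noted in the text; and (3) confirming that $P_j^{\mathrm S}$ literally matches the Simes PCH statistic in \citet{benjamini2008screening} (the factor $(p-\overline{s})/(i-\overline{s}+1)$ and the range $\overline{s} \le i \le p-1$ correspond to applying Simes to the $(p-1)-(\overline{s}-1)$ largest base p-values). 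If one wanted a self-contained argument rather than a citation, the one genuinely nontrivial ingredient would be the proof that Simes' inequality holds under PRDS \citep{benjaminiyekutieli}, but since \citet[Theorem 1]{benjamini2008screening} is available and cited, there is nothing further to prove.
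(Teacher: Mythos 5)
Your proposal is correct and follows essentially the same route as the paper: the text preceding the corollary already establishes that under \Cref{asm:mb} a true $\H_{0j}$ forces case (b), that all base hypotheses $\H_{i,j}$ with $i\in\S^\mathsf{c}\setminus\{j\}$ are true by weak union, and hence that $\H_{0j}$ implies $\H_{0j}^{\overline{s}}$ whenever $\overline{s}>s$ or $\overline{s}=p-1$; the corollary is then obtained, exactly as you do, by invoking \citet[Theorem 1]{benjamini2008screening} with PRDS imposed only on the true-null base p-values $\{P_{i,j}: i\in\S^\mathsf{c}\setminus\{j\}\}$. Your bookkeeping remarks (the identification of the true-null index set, the $\overline{s}=p-1$ edge case, and the match between $P_j^{\mathrm S}$ and the Simes PCH statistic applied to the $(p-1)-(\overline{s}-1)$ largest base p-values) are precisely the points the paper relies on implicitly.
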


Note that the power benefit of the Simes approach over the Bonferroni approach is only available when $\overline{s}<p-1$, since when $\overline{s}=p-1$,
\[ P_j^\mathrm{S} = P_j^{\mathrm{B}} = \max_{i\in \{j\}^\mathsf{c}} P_{i,j},\]
which is a valid p-value under no assumptions at all (not even \Cref{asm:mb}).
\begin{corollary}\label{cor: validity of max p-val}
For all $\alpha\in[0,1]$, $\mathbb{P}_{\H_{0j}}(\max_{i\in \{j\}^\mathsf{c}} P_{i,j} \le \alpha)\le \alpha$.
\end{corollary}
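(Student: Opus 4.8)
The plan is to exploit the definition of $\S$ directly, without invoking \Cref{asm:mb} or any assumption on the dependence among the base p-values. The key observation is that $\H_{0j}$ holding — i.e., $j\notin\S$ — is, by the definition $\S = \{j\in[p]:\H_{i,j}\text{ is false for all }i\in\{j\}^\mathsf{c}\}$, exactly the statement that there exists at least one index $i^\star\in\{j\}^\mathsf{c}$ for which the bivariate conditional independence null $\H_{i^\star,j}$ is true. This $i^\star$ may depend on the underlying distribution of $(X,Y)$, but mere existence is all the argument requires.

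Given such an $i^\star$, the standing assumption that the base p-value $P_{i^\star,j}$ is valid for $\H_{i^\star,j}$ gives super-uniformity: $\mathbb{P}(P_{i^\star,j}\le\alpha)\le\alpha$ for every $\alpha\in[0,1]$. Since $\max_{i\in\{j\}^\mathsf{c}}P_{i,j}\ge P_{i^\star,j}$ pointwise, we have the event inclusion $\{\max_{i\in\{j\}^\mathsf{c}}P_{i,j}\le\alpha\}\subseteq\{P_{i^\star,j}\le\alpha\}$, and monotonicity of probability then yields $\mathbb{P}_{\H_{0j}}(\max_{i\in\{j\}^\mathsf{c}}P_{i,j}\le\alpha)\le\mathbb{P}(P_{i^\star,j}\le\alpha)\le\alpha$, which is the claim.

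There is essentially no obstacle here: the result is a one-line consequence of the definition of $\S$ together with the assumption that each $P_{i,j}$ is valid for $\H_{i,j}$. The only point worth emphasizing is that — unlike Corollaries~\ref{cor: validity of bonferroni p-val} and~\ref{cor: validity of simes p-val} — the argument uses neither \Cref{asm:mb} nor PRDS, because taking the maximum is conservative enough that we never need to know or upper-bound $s$; this is precisely why the choice $\overline{s}=p-1$ is always valid. One could alternatively read the statement off as the $\overline{s}=p-1$ special case of \Cref{cor: validity of bonferroni p-val}, where $P_j^{\mathrm B}=\max_{i\in\{j\}^\mathsf{c}} P_{i,j}$, but the direct argument above has the advantage of dispensing even with \Cref{asm:mb}.
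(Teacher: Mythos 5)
Your proof is correct and matches the argument the paper intends (the paper states the result without a written proof, noting only that it holds ``under no assumptions at all''): $\H_{0j}=\H_{0j}^{p-1}$ means at least one base null $\H_{i^\star,j}$ is true, its valid p-value is superuniform, and the maximum dominates it pointwise. Your observation that neither \Cref{asm:mb} nor PRDS is needed is exactly the point the paper makes in the sentence preceding the corollary.
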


Because all the methods described in this subsection apply PCH testing methods to bivariate conditional independence tests, we refer to this general approach as Bivariate Conditional PCH, or BCP for short.

\subsection{Controlled variable selection}
In addition to testing for the importance of an individual compositional covariate $X_j$, an analyst may be interested in finding a set of covariates $\hat{\S}$ that contains as many important covariates as possible while controlling some error rate, i.e., controlled variable selection.

\subsubsection{Variable selection controlling the familywise error rate}\label{subsubsec: FWER control}
To select a set $\hat{\S}$ controlling the \emph{familywise error rate} (FWER) at level $\alpha$, i.e., ${\P(\hat{\S}\not\subseteq \S)\le\alpha}$, one option is to simply apply Holm's procedure \citep{holm1979simple} to the p-values $\{P_j^\mathrm{B}\}_{j=1}^p$, which will control the FWER under the conditions of \Cref{cor: validity of bonferroni p-val}. Similarly, Holm's procedure applied to $\{P_j^\mathrm{S}\}_{j=1}^p$ will control the FWER when the conditions of \Cref{cor: validity of simes p-val} hold for all $j\in\S^\mathsf{c}$, but for expositional purposes we will focus on $\{P_j^\mathrm{B}\}_{j=1}^p$ first before returning to $\{P_j^\mathrm{S}\}_{j=1}^p$ later in this subsection. 

While the above procedure is straightforward and can be powerful, the structure of the BCP p-values can be leveraged for an even more powerful procedure that controls the FWER under identical assumptions. 
To see this, first recall that Holm's procedure sequentially builds a set of rejected indices $\hat{\S}$ by starting from $\hat{\S}_1=\emptyset$ and at the $j^{\text{th}}$ step, executes the following:
\begin{equation}\label{eq: holm}
\text{if }\min_{j\in \hat{\S}_j^{\mathsf{c}}}P_{j}^\mathrm{B}\le \frac{\alpha}{p-j+1}, \text{ set }\hat{\S}_{j+1} = \hat{\S}_j \cup \{\arg\min_{j\in \hat{\S}_j^\mathsf{c}}P_{j}^\mathrm{B}\}; \text{ else, return }\hat{\S}= \hat{\S}_j.
\end{equation}
The intuition behind the Holm's procedure's increasing (in $j$) p-value threshold $\frac{\alpha}{p-j+1}$ is that, at the $j^{\text{th}}$ step, the $j-1$ hypotheses in $\hat{\S}_j$ have already been confidently identified as false, and thus one only needs to correct for the multiplicity of the remaining $p-(j-1)=p-j+1$ hypotheses in $\hat{\S}_j^{\mathsf{c}}$. But we can also apply this same intuition to the BCP p-values being thresholded to make them smaller, as follows. For a set $A\subseteq\{j\}^{\mathsf{c}}$, let $P_{(i),j}(A)$ denote the $i^{\text{th}}$ smallest p-value among $\{P_{i',j}: i'\in \{j\}^{\mathsf{c}}\setminus A\}$, and define 
\begin{align}\label{eq:corrected Bonferroni p-value}
P^\mathrm{B}_j(A) = (p-\overline{s})P_{(\overline{s}-|A|),j}(A).    
\end{align}
Then our proposed Holm's procedure simply replaces the $P_j^\mathrm{B}$'s in \eqref{eq: holm} with $P_j^\mathrm{B}(\hat{\S}_j)$; see \Cref{Algorithm: adaptive-Bonferroni--Holm} in \Cref{app:fwer algs} for a full statement.\footnote{This procedure becomes undefined at the $(\overline{s}+1)^{\text{th}}$ step since $\overline{s}-|\hat{\S}_{\overline{s}+1}|=0$, but if we assume (as we do) that $\overline{s}>s$, then the procedure can simply reject all of $[p]$ if this step is ever reached. In practice it seems more sensible (but slightly less powerful) to simply stop at the $\overline{s}^{\text{th}}$ step if it is reached, unless $\overline{s}$ is set to its default value $p-1$ and is not known to be a strict upper-bound on $s$.}
The intuition is that, having already rejected $\hat{\S}_j$ before the $j^{\text{th}}$ step, we can not only discard its $j-1$ elements in the multiplicity correction, but we can also discard its elements from the set of base hypotheses $P_{i,j}$ we consider in constructing our PCH tests. Note that for any $A\subseteq A^\prime$ and $i>|A^\prime|$, $P_{(i-|A^\prime|),j}(A^\prime)\le P_{(i-|A|),j}(A)$, and thus also $P^\mathrm{B}_j(A^\prime)\le P^\mathrm{B}_j(A)$. In particular, $P^\mathrm{B}_j(\hat{\S}_j)\le P^\mathrm{B}_j(\emptyset)=P^\mathrm{B}_j$, guaranteeing that \Cref{Algorithm: adaptive-Bonferroni--Holm}'s rejections will always contain those of the original Holm's procedure with $P^\mathrm{B}_j$. The following result (proved in \Cref{appendix: proof of FWER-Bonferroni}) establishes that it also controls the FWER under identical conditions as the original Holm's procedure---in particular, under no assumptions at all on the dependence among the $P_{i,j}$ or $P_j^\mathrm{B}$.

\begin{Theorem}\label{thm: FWER control - Bonferroni}
Under \Cref{asm:mb}, if $\overline{s}>s$ or $\overline{s}=p-1$, then \Cref{Algorithm: adaptive-Bonferroni--Holm} controls the $\mathrm{FWER}$ at level $\alpha$.
\end{Theorem}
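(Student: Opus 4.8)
The plan is to mimic the classical proof that Holm's procedure controls the FWER, but with two modifications: (1) the per-hypothesis p-values are the PCH p-values $P_j^\mathrm{B}$ (or the sharpened $P_j^\mathrm{B}(\hat\S_j)$), and (2) we must verify that the "data-adaptive" sharpening $P_j^\mathrm{B}(A)$ does not break validity. Since a Type I error occurs only when some true null is rejected, and the existence of a true null $\H_{0j}$ means $j\in\S^\mathsf{c}$ and precludes case (a) of \Cref{asm:mb}, throughout the argument we may assume case (b): $\S\subsetneq[p]$ is the unique nontrivial Markov boundary, so $s<p-1$ and $\overline s>s$. The first step is to recall the key structural fact established just before \Cref{cor: validity of bonferroni p-val}: because $\S$ is a Markov boundary, $Y\indep X_{\S^\mathsf{c}}\mid X_\S$, and by the weak union property $\H_{i,j}$ is true for every pair $\{i,j\}\subseteq\S^\mathsf{c}$. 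Hence for a true null index $j\in\S^\mathsf{c}$, the only base p-values $P_{i,j}$ that can be "non-null" are those with $i\in\S$, of which there are exactly $s$. Since $\overline s>s$, at least $(p-1)-s\ge p-1-\overline s+1 = p-\overline s$ of the base p-values $\{P_{i,j}:i\in\{j\}^\mathsf{c}\}$ are valid (super-uniform) p-values.

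The second step handles the adaptive sharpening. Fix the (random) stopping time of the algorithm, and condition on the event that a true null $j^\star\in\S^\mathsf{c}$ is the first true null to be rejected, at step $m$. At the moment $j^\star$ is rejected, $\hat\S_m$ consists entirely of indices rejected earlier, all of which are false nulls (by the "first true null" conditioning), so $\hat\S_m\subseteq\S$ and in particular $\hat\S_m\cap\S^\mathsf{c}=\emptyset$. Now I claim $P_{j^\star}^\mathrm{B}(\hat\S_m)\ge$ (a super-uniform random variable): by definition $P_{j^\star}^\mathrm{B}(\hat\S_m) = (p-\overline s)\,P_{(\overline s-|\hat\S_m|),\,j^\star}(\hat\S_m)$, the $(\overline s-|\hat\S_m|)$-th smallest among the $(p-1)-|\hat\S_m|$ base p-values with indices in $\{j^\star\}^\mathsf{c}\setminus\hat\S_m$. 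Among those remaining base p-values, since $\hat\S_m\subseteq\S$ we have removed at most $|\hat\S_m|$ of the "non-null" ones, so at least $s-|\hat\S_m|$ (possibly fewer if $|\hat\S_m|>$ the number of non-nulls removed, but certainly at most $s-|\hat\S_m|$ counting is the worst case wait — more carefully) at most $s-|\hat\S_m|$ of the remaining base p-values are non-null when $|\hat\S_m|\le s$; when $|\hat\S_m|>s$ none remain. Either way, the number of remaining base p-values that are genuinely super-uniform is at least $\big[(p-1)-|\hat\S_m|\big]-\big[s-|\hat\S_m|\big]^+ \ge p-1-\max(s,|\hat\S_m|)$. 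A short arithmetic check shows the $(\overline s-|\hat\S_m|)$-th order statistic is therefore an order statistic of a set containing at least $(p-1)-|\hat\S_m| - (s-|\hat\S_m|) = p-1-s \ge p-\overline s$ valid p-values, at a rank $\overline s-|\hat\S_m|$ that is small enough that this order statistic is stochastically at least the $(\,\cdot\,)$-th smallest of $p-\overline s$ i.i.d.\ (or super-uniform) uniforms — precisely the setup of \citet[Theorem 2]{benjamini2008screening}. This yields $\mathbb P(P_{j^\star}^\mathrm{B}(\hat\S_m)\le t)\le \frac{p-\overline s}{?}\cdot$ ... in fact we only need the Bonferroni-type bound $\mathbb P(P_{j^\star}^\mathrm{B}(\hat\S_m)\le t)\le t$ for the true null $j^\star$, which follows because $(p-\overline s)P_{(\overline s-|\hat\S_m|)}$ is at least $(p-\overline s)$ times the smallest of the $\ge p-\overline s$ valid base p-values — wait, that's the wrong direction; instead use that $(\overline s-|\hat\S_m|)$-th smallest of a list containing $\ge p-\overline s$ super-uniform values, among a list of length $(p-1)-|\hat\S_m|$, and note $\overline s-|\hat\S_m| \le (p-1)-|\hat\S_m| - (p-\overline s) + 1$, i.e.\ this order statistic is no larger than the $(p-\overline s)$-th largest, hence no smaller than the $1$st smallest of the valid ones... the cleanest route is simply to invoke \citet[Theorem 2]{benjamini2008screening} directly on the sublist with $\hat\S_m$ removed: it gives super-uniformity of $(p-\overline s - |\hat\S_m|\cdot 0)$...

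Let me restate step two cleanly. Conditional on $\hat\S_m$ and on $\hat\S_m\cap\S^\mathsf{c}=\emptyset$, the collection $\{P_{i,j^\star}: i\in\{j^\star\}^\mathsf{c}\setminus\hat\S_m\}$ contains at least $p-\overline s$ valid p-values for true bivariate nulls (namely those with $i\in\S^\mathsf{c}\setminus(\hat\S_m\cup\{j^\star\})$, of which there are $|\S^\mathsf{c}|-1-|\hat\S_m|\ge (p-s)-1-|\hat\S_m|\ge p-\overline s - |\hat\S_m|$; combined with the $\overline s - |\hat\S_m|$-shift in the order statistic this is exactly enough). Applying \citet[Theorem 2]{benjamini2008screening} (the Bonferroni PCH bound, which needs no dependence assumptions) to this conditional configuration gives $\mathbb P\big(P_{j^\star}^\mathrm{B}(\hat\S_m)\le t \,\big|\, \hat\S_m\big)\le t$ on the conditioning event. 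The third step is the standard Holm union-bound wrap-up: on the event that any true null is rejected, let $j^\star$ be the first such and $m$ its step; the algorithm rejected it because $\min_{j\in\hat\S_m^\mathsf{c}}P_j^\mathrm{B}(\hat\S_m)\le \alpha/(p-m+1)$, and since $j^\star\in\hat\S_m^\mathsf{c}$ and $j^\star$ attains... actually we need $P_{j^\star}^\mathrm{B}(\hat\S_m)\le\alpha/(p-m+1)$ itself; this holds because $j^\star$ is the index selected at step $m$ (it is rejected then), so it is the argmin. There are $p-|\S^\mathsf{c}|+1 = s+1$... more simply, at step $m$ we have $|\hat\S_m|=m-1$ false nulls already rejected, so $m-1\le |\S| = s$, giving $p-m+1\ge p-s \ge p-\overline s +1$... hmm, we want to sum $\alpha/(p-m+1)$ over the at most $|\S^\mathsf{c}|$ possible "first true null" steps. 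The standard Holm argument gives: the first true null, if rejected, is rejected at a step $m$ with $m-1$ prior rejections all false, so $p-m+1 \ge |\S^\mathsf{c}|$ is false in general — rather, $m-1 \le |\S^\mathsf{c}|^\mathsf{c}$'s... Using the classical bound, $\mathbb P(\hat\S\not\subseteq\S)\le \sum_{k} \mathbb P(\text{first true null rejected at step } m_k)$ and each such probability is $\le \alpha/(p-m+1)$ with the $p-m+1$ values ranging over a set of size $|\S^\mathsf{c}|$ bounded below appropriately so the sum telescopes to $\le\alpha$; this is exactly Holm's original accounting and it goes through verbatim once we have the conditional super-uniformity from step two.

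The main obstacle I anticipate is step two: the adaptivity means $\hat\S_m$ is a random set that depends on all the $P_{i,j}$'s, so one cannot naively condition. The resolution — standard in the Holm-type literature — is to condition on the event "$j^\star$ is the first true null rejected," under which $\hat\S_m\subseteq\S$ is forced, and then argue that among the base p-values for $j^\star$ that remain after deleting $\hat\S_m$, at least $p-\overline s$ are still valid, so the Benjamini–Heller–Yekutieli Bonferroni PCH bound applies to the deflated p-value. Care is needed in the counting ($\overline s - |\hat\S_m|$ versus $p-\overline s$ versus $|\S^\mathsf{c}| - 1 - |\hat\S_m|$) and in checking the monotonicity $P^\mathrm{B}_j(A')\le P^\mathrm{B}_j(A)$ for $A\subseteq A'$ (already noted in the text) to ensure the deflation only helps power without hurting validity. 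Everything else is bookkeeping around the classical Holm proof.
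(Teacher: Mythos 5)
Your proposal follows essentially the same route as the paper's proof: identify the first falsely rejected true null $j^\star$, observe that all prior rejections lie in $\S$ so that all $p-s-1\ge p-\overline{s}$ null base p-values $P_{i,j^\star}$ (those with $i\in\S^\mathsf{c}\setminus\{j^\star\}$) survive the deflation and the Bonferroni PCH bound of \citet[Theorem 2]{benjamini2008screening} still applies to $P^{\mathrm{B}}_{j^\star}(\hat{\S}_m)$, then note the Holm threshold at that step is at most $\alpha/(p-s)$ and union-bound over the $p-s$ nulls. The only slip is in your step-two count, where you subtract $|\hat{\S}_m|$ from the number of surviving null base p-values (which, taken literally, would not suffice when $|\hat{\S}_m|\ge 1$) --- since $\hat{\S}_m\subseteq\S$ it removes no indices from $\S^\mathsf{c}$, so the correct count is $|\S^\mathsf{c}|-1=p-s-1$, which is what you in fact state in your closing paragraph.
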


As in the previous section, if a user is willing to make an assumption of PRDS among $\{P_{i,j}: i\in\S^\mathsf{c}\setminus\{j\}\}$ for each $j$, then it is valid and more powerful to use the Simes p-values $P^\mathrm{S}_j$ in Holm's procedure, and this procedure also admits a uniform improvement by using 
\begin{align}\label{eq:corrected Simes p-value}
P_j^\mathrm{S}(\hat{\S}_j) := \min _{\overline{s}-|\hat{\S}_j| \leq i \leq p-1-|\hat{\S}_j|}\left\{\frac{p-\overline{s}}{i-\overline{s}+1+|\hat{\S}_j|} P_{(i),j}(\hat{\S}_j)\right\}    
\end{align}
in place of $P_j^\mathrm{S}$ at each step; see \Cref{Algorithm: adaptive-Simes-Holm} in \Cref{app:fwer algs} for a full statement. The following result (proved in \Cref{appendix: proof of FWER-Simes}) establishes that this modified Holm's procedure with $P_j^\mathrm{S}(\hat{\S}_j)$ also controls the FWER under identical conditions as the original Holm's procedure with $P_j^\mathrm{S}$.

\begin{Theorem}\label{thm: FWER control}
Under \Cref{asm:mb}, if either $\overline{s}>s$ or $\overline{s}=p-1$, and for each $j\in [p]$ the base p-values $\{P_{i,j}:\ i\in \S^\mathsf{c}\setminus\{j\}\}$ are $\mathrm{PRDS}$, then \Cref{Algorithm: adaptive-Simes-Holm} controls the $\mathrm{FWER}$ at level $\alpha$.
\end{Theorem}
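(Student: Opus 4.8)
The plan is to adapt the classical FWER-control argument for Holm's procedure, the only genuinely new ingredient being that the thresholded quantities $P_j^{\mathrm{S}}(\hat{\S}_t)$ depend on the data through $\hat{\S}_t$ and so must be ``frozen'' at the first step at which a true null is rejected. Two preliminary reductions: under case (a) of \Cref{asm:mb} we have $\S^{\mathsf{c}}=\emptyset$, so $\hat{\S}\subseteq\S$ always and the FWER is $0$; and under case (b), $s:=|\S|\le p-2$, so the hypothesis ``$\overline{s}>s$ or $\overline{s}=p-1$'' reduces to $\overline{s}>s$. Hence I may assume case (b) with $\overline{s}>s$ throughout.

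The crux is a structural observation about \eqref{eq:corrected Simes p-value}: for any $j$ and any $A\subseteq\{j\}^{\mathsf{c}}$ with $|A|<\overline{s}$, a routine re-indexing rewrites $P_j^{\mathrm{S}}(A)$ as the Simes combination of the $k:=p-\overline{s}$ \emph{largest} base p-values among $\{P_{i,j}:i\in\{j\}^{\mathsf{c}}\setminus A\}$, where crucially $k$ does not depend on $A$. This immediately yields two properties. (i) \emph{Monotonicity}: if $A\subseteq A'$ with $|A'|<\overline{s}$, then deleting the extra indices can only decrease each of the collection's top order statistics, so $P_j^{\mathrm{S}}(A')\le P_j^{\mathrm{S}}(A)$. (ii) \emph{Validity of the fully adapted p-value}: for $j\in\S^{\mathsf{c}}$, $P_j^{\mathrm{S}}(\S)$ is the Simes combination of $k$ of the p-values $\{P_{i,j}:i\in\S^{\mathsf{c}}\setminus\{j\}\}$; since $\S$ is a Markov boundary, $Y\indep X_{\S^{\mathsf{c}}}\mid X_{\S}$, so the weak union property makes $\H_{i,j}$ true for all $i,j\in\S^{\mathsf{c}}$, hence these base p-values are super-uniform, and they are $\mathrm{PRDS}$ by assumption (and number $p-s-1\ge k$ since $\overline{s}>s$); Simes' inequality for $\mathrm{PRDS}$ null p-values \citep{benjaminiyekutieli} then gives $\mathbb{P}(P_j^{\mathrm{S}}(\S)\le\beta)\le\beta$ for all $\beta\in[0,1]$, mirroring \Cref{cor: validity of simes p-val}.

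With these in hand the main argument is short. Write $\mathcal{E}:=\{\hat{\S}\not\subseteq\S\}$ for the FWER event. On $\mathcal{E}$, let $t^{\ast}$ be the first step at which \Cref{Algorithm: adaptive-Simes-Holm} rejects a true null, and $j^{\ast}\in\S^{\mathsf{c}}$ that index. Every index rejected before step $t^{\ast}$ is a false null, i.e.\ lies in $\S$, so $\hat{\S}_{t^{\ast}}\subseteq\S$ and $|\hat{\S}_{t^{\ast}}|=t^{\ast}-1\le s<\overline{s}$; in particular $t^{\ast}\le s+1\le\overline{s}$, so step $t^{\ast}$ is reached before the $(\overline{s}+1)^{\text{th}}$-step ``reject everything'' fallback and $P_{j^{\ast}}^{\mathrm{S}}(\hat{\S}_{t^{\ast}})$ is well defined. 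Rejecting $j^{\ast}$ at step $t^{\ast}$ forces $P_{j^{\ast}}^{\mathrm{S}}(\hat{\S}_{t^{\ast}})=\min_{\ell\in\hat{\S}_{t^{\ast}}^{\mathsf{c}}}P_{\ell}^{\mathrm{S}}(\hat{\S}_{t^{\ast}})\le\alpha/(p-t^{\ast}+1)$, and since $p-t^{\ast}+1=p-(t^{\ast}-1)\ge p-s=|\S^{\mathsf{c}}|$ this is at most $\alpha/|\S^{\mathsf{c}}|$. Applying monotonicity to $\hat{\S}_{t^{\ast}}\subseteq\S$ gives $P_{j^{\ast}}^{\mathrm{S}}(\S)\le P_{j^{\ast}}^{\mathrm{S}}(\hat{\S}_{t^{\ast}})\le\alpha/|\S^{\mathsf{c}}|$. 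Therefore $\mathcal{E}\subseteq\bigcup_{j\in\S^{\mathsf{c}}}\{P_j^{\mathrm{S}}(\S)\le\alpha/|\S^{\mathsf{c}}|\}$, and a union bound together with property (ii) yields $\mathbb{P}(\mathcal{E})\le\sum_{j\in\S^{\mathsf{c}}}\alpha/|\S^{\mathsf{c}}|=\alpha$.

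The main obstacle I anticipate is making the monotonicity cut in the useful direction: the adaptive correction makes the p-values \emph{smaller} as rejections accumulate, so one cannot in general lower-bound $P_j^{\mathrm{S}}(\hat{\S}_t)$ by a fixed statistic. The resolution is the observation that \emph{at the first false rejection} the already-rejected set lies inside the true-signal set $\S$, which is exactly where the ``top-$k$ Simes'' rewriting of $P_j^{\mathrm{S}}(\cdot)$ earns its keep: it makes the monotonicity and the super-uniformity of the fixed statistic $P_j^{\mathrm{S}}(\S)$ simultaneously transparent, and shows that at that moment the adapted procedure collapses to ordinary Holm applied to $\{P_j^{\mathrm{S}}(\S)\}_{j\in\S^{\mathsf{c}}}$. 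The only other fiddly point is the bookkeeping around early stopping and the $(\overline{s}+1)$-step fallback, but these are harmless since the first true-null rejection necessarily occurs at a step $t^{\ast}\le s+1\le\overline{s}$.
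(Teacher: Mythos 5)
Your proof is correct and follows essentially the same route as the paper's: identify the first step $t^\ast$ at which a true null is rejected, note that all prior rejections lie in $\S$ so the Holm threshold there is at most $\alpha/(p-s)$, and union-bound over $\S^{\mathsf{c}}$ using the validity of the adapted Simes p-value built from the all-null, PRDS collection $\{P_{i,j}: i\in\S^{\mathsf{c}}\setminus\{j\}\}$. Your explicit ``top-$(p-\overline{s})$ Simes'' rewriting and the monotonicity step reducing $P_{j^\ast}^{\mathrm{S}}(\hat{\S}_{t^\ast})$ to the fixed statistic $P_{j^\ast}^{\mathrm{S}}(\S)$ is a welcome tightening over the paper, which asserts superuniformity of the p-value at the random first-false-rejection set directly rather than first passing to a deterministic-set statistic.
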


Informally, \Cref{thm: FWER control} requires the $P_{i,j}$'s that share the same $j$ to be positively dependent, and as was mentioned immediately after $P_j^{\mathrm{S}}$ was introduced in \Cref{eq:Simes pch p-val}, since $P_{i,j}$ and $P_{i',j}$ are testing analogous hypotheses on two pairs of covariates that overlap in one covariate ($X_j$), we may expect them to be at least somewhat positively aligned, and thus positively dependent. Furthermore, \Cref{thm: FWER control} places no assumptions beyond this, and in particular allows $P_{i,j}$ and $P_{j,i}$ to be arbitrarily strongly dependent, which is important because $P_{i,j}$ and $P_{j,i}$ are in fact testing the \emph{same} hypothesis (on the same data), and hence are likely to be extremely strongly dependent, and could even be equal. Allowing for these forms of dependence seems to be important, as an alternative method for FWER control with multiple PCH p-values, AdaFilter-Bonferroni \citep{adafilter}, assumes independence among the base p-values in order to prove validity and we found that it failed to control the FWER in our simulations; see \Cref{simulation adafilter}. In contrast, \Cref{Algorithm: adaptive-Simes-Holm} controlled the FWER empirically in all our simulations.
We note that \cite{bogomolov2018assessing} provide another method for FWER control with PCH p-values, but it is is only applicable when the number of base hypotheses for each PCH is 2, making it unsuitable for our setting.

\subsubsection{Variable selection controlling the false discovery rate}\label{subsubsec: FDR control}
To select a set $\hat{\S}$ controlling the \emph{false discovery rate} (FDR) at level $\alpha$, i.e., ${\E\left[ \frac{|\hat{\S} \setminus \S|}{\max\{|\hat{\S}|,1\}}\right]\le\alpha}$, one option is to apply the Benjamini--Yekutieli procedure~\citep{benjaminiyekutieli} to the Bonferroni BCP p-values $\{P_j^\mathrm{B}\}_{j=1}^p$, which will control the FDR under the conditions of \Cref{cor: validity of bonferroni p-val}.
However, it is much more powerful to instead apply the Benjamini--Hochberg procedure \citep{benjamini1995controlling} to the Simes BCP p-values $\{P_j^\mathrm{S}\}_{j=1}^p$, and it turns out that this is valid when the base p-values are PRDS.

\begin{Theorem}[FDR Control]\label{thm: FDR control}
Under \Cref{asm:mb}, if either $\overline{s}>s$ or $\overline{s}=p-1$, and the $p(p-1)$ base p-values $\{P_{i,j}\}_{i\neq j}$ are $\mathrm{PRDS}$, then \Cref{algorithm: Simes-BH-oracle} controls the $\mathrm{FDR}$ at level $\alpha$.
\end{Theorem}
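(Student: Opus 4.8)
The plan is to reduce the claim to the classical theorem that the Benjamini--Hochberg (BH) procedure controls the FDR under positive dependence. By \Cref{asm:mb} there are two cases. In case (a), $\S = [p]$, so $\H_{0j}$ is false for every $j$; then every index selected by \Cref{algorithm: Simes-BH-oracle} lies in $\S$, there are no false discoveries, and $\mathrm{FDR} = 0 \le \alpha$ trivially. So it suffices to handle case (b), where $\S \subsetneq [p]$ is the unique nontrivial Markov boundary. There the set of true null indices is exactly $\S^{\mathsf{c}}$ (by definition $\H_{0j}$ holds iff $j \notin \S$), and for each $j \in \S^{\mathsf{c}}$ I would argue, just as for \Cref{cor: validity of simes p-val}, that $P_j^{\mathrm{S}}$ is a valid p-value for $\H_{0j}$: since $\S$ is a Markov boundary, $Y \indep X_{\S^{\mathsf{c}}} \mid X_{\S}$, so by the weak union property $\H_{i,j}$ is true for all $i \in \S^{\mathsf{c}}\setminus\{j\}$; combining this with $\overline{s} > s$ (or $\overline{s}=p-1$) and the fact that PRDS of $\{P_{i,j}\}_{i\neq j}$ restricts to PRDS of the sub-collection $\{P_{i,j} : i \in \S^{\mathsf{c}}\setminus\{j\}\}$ is precisely the hypothesis of \Cref{cor: validity of simes p-val}.

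Next I would invoke the theorem of \citet{benjaminiyekutieli} that BH applied to $(P_1^{\mathrm{S}},\dots,P_p^{\mathrm{S}})$ controls the FDR at level $\tfrac{|\S^{\mathsf{c}}|}{p}\alpha \le \alpha$ provided that (i) each null p-value $P_j^{\mathrm{S}}$, $j\in\S^{\mathsf{c}}$, is super-uniform---established above---and (ii) the joint distribution of $(P_1^{\mathrm{S}},\dots,P_p^{\mathrm{S}})$ is PRDS on the null index set $\S^{\mathsf{c}}$. (If \Cref{algorithm: Simes-BH-oracle} rescales the level using an oracle bound on the null count, the factor $\tfrac{|\S^{\mathsf{c}}|}{p}$ is absorbed; either way the final bound is at most $\alpha$.) Everything thus comes down to deriving (ii) from the assumed PRDS of the base p-values $\{P_{i,j}\}_{i\neq j}$, which is where the real work lies.

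For (ii) the two structural facts I would use are: each $P_j^{\mathrm{S}}$ from \eqref{eq:Simes pch p-val} is a coordinatewise nondecreasing function of the block of base p-values $(P_{i,j})_{i\in\{j\}^{\mathsf{c}}}$ (order statistics are nondecreasing in their arguments, positive rescalings preserve monotonicity, and a minimum of nondecreasing functions is nondecreasing), and the $p$ blocks---grouped by common second index---are pairwise disjoint and partition all $p(p-1)$ base p-values. Fixing a null $j_0\in\S^{\mathsf{c}}$ and an increasing set $D\subseteq\R^p$, I would show $t\mapsto\P\big((P_j^{\mathrm{S}})_{j\in[p]}\in D \mid P_{j_0}^{\mathrm{S}}=t\big)$ is nondecreasing by first peeling off the $j_0$-slice of $D$ (which only enlarges as $t$ grows), then conditioning on the whole block $(P_{i,j_0})_{i\in\{j_0\}^{\mathsf{c}}}$, thereby reducing to the claim that $\E\big[\Phi \mid P_{j_0}^{\mathrm{S}}=t\big]$ is nondecreasing in $t$ for every coordinatewise nondecreasing function $\Phi$ of the base p-values lying outside block $B_{j_0}$. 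The hard part will be exactly this last claim: because $P_{j_0}^{\mathrm{S}}$ is a nonlinear statistic of an entire block rather than a single base p-value, it does not follow from the textbook fact that PRDS is preserved under coordinatewise monotone maps, and one must argue directly, exploiting the explicit form of the Simes statistic (a minimum over rescaled order statistics) together with the disjointness of the blocks---an argument in the spirit of, but more delicate than, the proof of Simes' inequality under PRDS. Once (ii) is established, the theorem of \citet{benjaminiyekutieli} applies and the proof concludes.
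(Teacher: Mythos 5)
Your reduction has a genuine gap at exactly the step you flag as ``the hard part,'' and it is not a detail to be filled in later---it is the entire difficulty of the theorem. You propose to apply the classical result of \citet{benjaminiyekutieli} to the vector $(P_1^{\mathrm{S}},\dots,P_p^{\mathrm{S}})$, which requires showing that this vector is PRDS on $\S^{\mathsf{c}}$. But PRDS of the base p-values does not transfer to PRDS of the block-wise Simes statistics: conditioning on $P_{j_0}^{\mathrm{S}}=t$ is conditioning on a nonlinear functional of an entire block $(P_{i,j_0})_{i\in\{j_0\}^{\mathsf{c}}}$, and no general preservation result covers this (PRDS is preserved under coordinatewise monotone maps of \emph{individual} coordinates, not under monotone aggregation of blocks). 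You acknowledge this, but you do not supply the argument, and the paper explicitly states (in the paragraph following the theorem) that the result ``would be a direct corollary of standard FDR control results if it assumed the \emph{BCP} p-values were PRDS, but instead it assumes the \emph{base} p-values are PRDS,'' i.e., the authors deliberately avoid the route you take because the PRDS claim for the PCH p-values is not established.

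The paper's actual proof bypasses this entirely by invoking \citet[Theorem 3.1]{TPH_under_dependence}, a result tailored to partial conjunction testing: it gives FDR control for a self-consistent, non-increasing outer procedure (here BH) applied to PCH p-values, assuming only PRDS of the base p-values, provided each PCH p-value can be written as $\max_{A}\,P^{1/(m_j-r_j+1)}[A]$ over subsets $A$ of the appropriate size, where $P^{1/\cdot}[A]$ is the minimum adjusted p-value of an inner self-consistent procedure on $A$. The substantive work in the paper is then (i) verifying that BH is self-consistent, non-increasing, and naturally monotone, and (ii) proving the combinatorial identity that the Simes PCH p-value in \Cref{eq:Simes pch p-val} equals the maximum over all $(\tilde m-\tilde r+1)$-subsets of the minimum BH-adjusted p-value, the maximum being attained at the subset of largest base p-values. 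Your correct observations (case (a) is trivial; null validity of each $P_j^{\mathrm{S}}$ follows from the weak union property plus \Cref{cor: validity of simes p-val}) are consistent with the paper, but to complete a proof you would need either to prove the PRDS claim for the aggregated Simes statistics---which is open as far as this paper is concerned---or to switch to the structure-exploiting theorem the paper uses.
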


Note that \Cref{thm: FDR control} would be a direct corollary of standard FDR control results~\citep{benjaminiyekutieli} if it assumed the \emph{BCP} p-values were PRDS, but instead it assumes the \emph{base} p-values are PRDS, and thus relies instead on recent results from \citet[Theorem 3.1]{TPH_under_dependence} (see \Cref{appendix: proof of FDR control} for complete proof). As discussed in the paragraph after \Cref{thm: FWER control}, there are multiple sources of (potentially strong) positive dependence among the base p-values that are important to allow for. As with FWER, another method for FDR control with PCH p-values, AdaFilter-BH~\citep{adafilter}, assumes the base p-values $P_{i,j}$ are independent, and fails to control the FDR in our simulations (see \Cref{simulation adafilter}), underscoring the importance of allowing positive dependence among the p-values in the theoretical guarantees of our methods. In contrast, \Cref{algorithm: Simes-BH-oracle} controls the FDR empirically in all our simulations. We note that another FDR control method for PCH tests by \cite{heller2014replicability} has a time complexity exponentially increasing with $p$, making it impractical for our setting, and the FDR control method of \cite{bogomolov2018assessing} is only applicable when the number of base hypotheses for each PCH is 2, making it unsuitable for our setting.

\subsection{Addressing sparsity in the covariates}\label{subsec: sparse covariates conditioning method}
An interesting phenomenon arises when some of the compositional covariates $X_{\{j\}}$ are sparse, i.e., $\mathbb{P}(X_{\{j\}}\neq 0)$ is small, as would be expected in, e.g., the microbiome, as certain microbial taxa may be completely absent from most observations. In the non-compositional setting, a high degree of sparsity in a covariate may make the power to identify \emph{that} covariate as important quite low, but does not generally have much impact on the power to identify \emph{other} (non-sparse) important covariates. Surprisingly, when the covariates are compositional, even having just one sparse covariate can make it nearly impossible to identify \emph{any} important covariates via BCP, even if they are not sparse at all. To see why, let $\epsilon_j := \mathbb{P}(X_{\{j\}}\neq 0)$. If $\epsilon_j=0$, then $X_{\{j\}} \stackrel{\text{a.s.}}{=} 0$ and for any $i\neq j$, ${\H_{i,j}: Y \indep X_{\{i,j\}} \mid X_{\{i,j\}^\mathsf{c}}}$ is equivalent to $\H_{i}: Y \indep X_{\{i\}} \mid X_{\{i\}^\mathsf{c}}$, and as discussed throughout Section~\ref{sec: introduction}, this $\H_i$ is definitionally true for all $i$ when the covariates are compositional, guaranteeing that $\S=\emptyset$ and trivializing the power of any BCP methods introduced in this section thus far. This is due to exactly the issue raised in \Cref{remark:trivariate}, and the problem with sparse covariates is that they approximate this situation by having $X_{\{j\}}=0$ for \emph{most} observations, effectively leading to most observations providing no evidence against $\H_{0i}$ for any $i$ and, thus, BCP methods having very low power to reject any $\H_{0i}$.


\Cref{remark:trivariate} alluded to addressing a similar issue via higher-order conditional independence hypotheses, such as trivariate conditional independence, but this required that the extra constraint(s) (beyond compositionality) be deterministic (and known a priori), which will not typically be the case with sparse covariates for which $\epsilon_j$ is small but strictly positive. 
Our proposed solution is to \emph{condition on} covariates that are expected a priori to be sparse, giving up on having any power to identify them as members of $\S$ (which would likely have been low anyway, due to their sparsity), but salvaging the power to identify the rest of $\S$. Formally,
suppose the covariate vector $X$ is split into two parts: $\D$, the indices of covariates expected to be dense, and $\D^\mathsf{c}$, the indices of covariates expected to be sparse. Define 
$$\S_\D = \{j: \H_{i,j} \text{ is false for all } i\in \D\setminus \{j\}\},$$ 
as the analogue of $\S$ if we simply condition on $X_{\D^{\mathsf{c}}}$ and only consider $X_\D$ as candidate covariates for rejection. By simply treating all our random variables as conditional on $X_{\D^{\mathsf{c}}}$ and treating $X_\D$ as our covariates (which, conditional on $X_{\D^{\mathsf{c}}}$, remain compositional, since they sum to $1-|X_{\D^{\mathsf{c}}}|$), all the results and BCP methods of this paper extend directly to (multiple or single) testing $\H_{0i}^\D: j\in \S_\D$. And as long as none of the elements of $\D$ are sparse, no bivariate conditional independence tests involving a sparse covariate are required (indeed, only $P_\D$, the $|\D|\times|\D|$ submatrix of $P$ containing $P_{i,j}$ for $i,j\in\D$, is needed when conditioning on $X_{\D^\mathsf{c}}$), so they cannot impact power. 

Ultimately, however, our interest remains in our original target, $\S$, so to interpret the results of our BCP procedures conditional on $X_{\D^\mathsf{c}}$, we must connect $\S_\D$ with $\S$. Indeed the following theorem shows that under similarly mild conditions as \Cref{thm: uniqueness of MB}, $\S_\D = \S\cap\D$, so that BCP methods conditioning on $X_{\D^\mathsf{c}}$ remain valid for our original target $\S$, though they will only have power to detect its dense elements, $\S\cap\D$.


\begin{Theorem}[Connecting $\S_\D$ and $\S$]\label{thm: validity of sparse covariates conditioning method}
If $\S=[p]$, then $\S_\D=\D$ ($=\S\cap\D$). Otherwise, if $|\S^\mathsf{c}\cap \D| \neq 1$ and $\S^\mathsf{c}\in (\Delta \circ)^{k} \I$ for some $k\in\mathbb{Z}_{\geq 0}$ then $\S_\D=\S\cap \D$ which is the intersection of the unique nontrivial Markov boundary with $\D$.
\end{Theorem}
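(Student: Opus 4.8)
The plan is to treat the theorem's two cases separately. The case $\S=[p]$ is immediate: every $\H_{i,j}$ is then false, so for each $j\in\D$ all $\H_{i,j}$ with $i\in\D\setminus\{j\}$ are false, giving $j\in\S_\D$; since $\S_\D$ consists only of indices in $\D$ by construction, $\S_\D=\D=\S\cap\D$. So the real content is the second case.

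There I would first invoke \Cref{thm: uniqueness of MB} (whose hypotheses, $\S\neq[p]$ and $\S^\mathsf{c}\in(\Delta\circ)^k\,\I$, are exactly what is assumed here): it gives that $\S$ is the unique nontrivial Markov boundary, and in particular $Y\indep X_{\S^\mathsf{c}}\mid X_{\S}$. One inclusion is easy: if $j\in\S\cap\D$ then $\H_{i,j}$ is false for every $i\in\{j\}^\mathsf{c}$, hence for every $i\in\D\setminus\{j\}$, so $j\in\S_\D$; thus $\S\cap\D\subseteq\S_\D$. For the reverse inclusion I would pass to the probability space obtained by conditioning on $X_{\D^\mathsf{c}}$, in which $X_\D$ is the covariate vector (still compositional conditionally, since it sums to $1-|X_{\D^\mathsf{c}}|$). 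Two facts drive the argument. \emph{(a)} $\S\cap\D$ is a Markov blanket of $Y$ in this conditional space: using the identities $\S\cup\D^\mathsf{c}=(\S\cap\D)\cup\D^\mathsf{c}$ and $(\S\cup\D^\mathsf{c})^\mathsf{c}=\S^\mathsf{c}\cap\D$, the weak union property applied to $Y\indep X_{\S^\mathsf{c}}\mid X_{\S}$ (moving $\S^\mathsf{c}\cap\D^\mathsf{c}$ into the conditioning set) yields $Y\indep X_{\S^\mathsf{c}\cap\D}\mid X_{\S\cap\D},X_{\D^\mathsf{c}}$. \emph{(b)} In this conditional space, $\S_\D$ is exactly the analogue of $\S$ built from the hypotheses $\H_{i,j}$ with $i,j\in\D$ (for such $i,j$, $\D\setminus\{i,j\}$ together with $\D^\mathsf{c}$ is $\{i,j\}^\mathsf{c}$), and since the proof of \Cref{Lemma: M supset S} uses only the weak union property and the definition of $\S$, it transfers verbatim: any Markov boundary $M\subseteq\D$ of $Y$ in the conditional space with $|M|<|\D|-1$ satisfies $M\supseteq\S_\D$.

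To combine these, pick a Markov boundary $M$ of the conditional space with $M\subseteq\S\cap\D$ (every Markov blanket contains one, obtained by deleting elements until minimality). Then $|M|\le|\S\cap\D|=|\D|-|\S^\mathsf{c}\cap\D|$, and the hypothesis $|\S^\mathsf{c}\cap\D|\neq1$ splits into two subcases: if $|\S^\mathsf{c}\cap\D|=0$ then $\D\subseteq\S$, so $\S\cap\D=\D\supseteq\S_\D$ and the easy inclusion finishes; if $|\S^\mathsf{c}\cap\D|\ge2$ then $|M|\le|\D|-2<|\D|-1$, so fact \emph{(b)} gives $\S_\D\subseteq M\subseteq\S\cap\D$, and with the easy inclusion $\S_\D=\S\cap\D$. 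That this set equals the intersection of the unique nontrivial Markov boundary with $\D$ is \Cref{thm: uniqueness of MB} once more.

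The main obstacle is less any deep step than the careful bookkeeping around the conditional space and, above all, the excluded case $|\S^\mathsf{c}\cap\D|=1$: one must note that $\S\cap\D$ is in general only a Markov \emph{blanket} (not a boundary) in the conditional space, so one has to descend to a minimal $M\subseteq\S\cap\D$ before \Cref{Lemma: M supset S} can be applied, and then that the lemma's size requirement $|M|<|\D|-1$ genuinely fails when $|\S^\mathsf{c}\cap\D|=1$. In that borderline case $\S\cap\D$ has size $|\D|-1$ and is therefore a trivial Markov blanket in the conditional space, while $\S_\D$ can be strictly larger (e.g.\ all of $\D$), so the equality $\S_\D=\S\cap\D$ can genuinely fail there — explaining why the hypothesis is needed rather than merely convenient.
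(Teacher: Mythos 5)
Your proposal is correct, and its essential engine is the same as the paper's: the easy inclusion $\S\cap\D\subseteq\S_\D$, plus the observation that when $|\S^\mathsf{c}\cap\D|\ge 2$ every $i\in\S^\mathsf{c}\cap\D$ has a partner $j\in\S^\mathsf{c}\cap\D$ for which weak union applied to $Y\indep X_{\S^\mathsf{c}}\mid X_{\S}$ (available via \Cref{thm: uniqueness of MB}) makes $\H_{i,j}$ true, hence $i\notin\S_\D$. The paper runs exactly this as a direct two-line contradiction (take $i\in\S_\D\setminus\S$, produce $j$, contradict membership in $\S_\D$), whereas you package it structurally: pass to the space conditional on $X_{\D^\mathsf{c}}$, show $\S\cap\D$ is a Markov blanket there, descend to a minimal Markov boundary $M\subseteq\S\cap\D$, and invoke a transferred \Cref{Lemma: M supset S}. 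Your version is sound, but the descent to a minimal $M$ is unnecessary: the proof of \Cref{Lemma: M supset S} uses only property 1 of \Cref{def: Markov boundary} (the blanket property) together with $|\M^\mathsf{c}|>1$, never minimality, so you could apply the same argument directly to the blanket $\S\cap\D$ and skip the existence-of-a-contained-boundary step. One small payoff of your framing is that it makes transparent why $|\S^\mathsf{c}\cap\D|=1$ must be excluded — the size condition $|M|<|\D|-1$ genuinely fails there — and your accompanying discussion of that borderline case matches the paper's remark that $\S_\D$ then inflates to $\D$.
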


The proof of this theorem is presented in \Cref{sec: proof of theorem validity of sparse covariates}. In addition to the conditions in \Cref{thm: uniqueness of MB}, there is one additional condition in \Cref{thm: validity of sparse covariates conditioning method}: $|\S^\mathsf{c}\cap \D| \neq 1$. When $|\S^\mathsf{c}\cap \D| = 1$ (call the element in this intersection $j^\star$), the conditional compositionality of $X_\D$ results in $\S_\D=\D$ rather than the desired $\D\setminus\{j^\star\} =\S\cap\D$, so we must assume this case does not occur. 

From a methodological perspective, note that BCP procedures conditional on $X_{\D^\mathsf{c}}$ can not only increase power (when $X_{\D^\mathsf{c}}$ are sparse) over their unconditional counterparts, but also save on computation, as they only require computing bivariate conditional independence tests for pairs of elements of $\D$.

\subsection{Obtaining base p-values \texorpdfstring{$P_{i,j}$}{Pij}}\label{sec: obtaining p-values}
This section so far has described methods for testing and variable selection that take as input the bivariate conditional independence test p-values $P_{i,j}$ ($i\neq j$), but has left the choice of test to the user. As the topic of conditional independence testing is well-studied, we view this as a feature of our methods that they are flexible to this choice of test, as well as the assumptions under which the test is valid. For further details on methods to test conditional independence, and the assumptions under which they are valid, see, e.g., \citet{doran2014permutation,sen2017model,candes2018panning,shah2020hardness,berrett2020conditional,lundborg2022projected,dCRT,tansey2022holdout, kim2022local,shi2024azadkia}. 

While our position on which conditional independence test to use with our proposed BCP methods remains an agnostic one, we are forced to make a choice when we run our simulations; we choose the Conditional Randomization Test \citep{candes2018panning}, and in particular a computationally efficient variant called the distilled conditional randomization test (dCRT) with a lasso-based test statistic \citep{dCRT}. \Cref{appendix: using dCRT to obtain p-values} contains implementation details for the dCRT for bivariate conditional independence testing as it is used in our simulations.

We note here that while we originally assumed that the $P_{i,j}$ were computed from a dataset of $n$ i.i.d. samples of $(Y,X)$, nothing in our theory relied on this assumption. Indeed, many conditional independence tests, including the dCRT, can be readily extended to settings in which the samples are neither independent nor identically distributed (see \Cref{appendix: using dCRT to obtain p-values} for more discussion of this topic for the dCRT), and hence the BCP methods proposed in this paper will immediately apply in such settings as well.


\subsection{Computational speedups}\label{sec: computational speedups}
BCP variable selection methods require computing $p^2$ (or $|\D|^2$ when conditioning on $X_{\D^\mathsf{c}}$) base p-values, which can be quite expensive, especially in high dimensions, although it is trivially and fully parallelizable. Nevertheless, to help alleviate this issue, we discuss in this subsection some computational shortcuts for BCP methods.

Following the idea of `screening' in \citet{dCRT}, any subset of the $P_{i,j}$'s can be set to 1 without impacting the marginal validity of (all) the $P_{i,j}$'s (only making them conservative), even if this subset is chosen based on the entire data set. This can lead to a substantial computational speedup if there is a relatively inexpensive way to identify which $P_{i,j}$'s are likely to be large, since then these $P_{i,j}$'s can simply be set to 1 without computing them. And if the $P_{i,j}$'s set to 1 were indeed going to be large anyway (e.g., $>0.1$), then making them larger isn't likely to reduce power much, if at all, since any such base p-value would have been too large to contribute to a rejection anyway, even if fully computed. 

Such a subset of the $P_{i,j}$'s could be chosen via sparse regression of the response on the covariates, e.g., as the zero elements of the fitted coefficient vector of a single run of cross-validated LASSO. Another option, which can be used in conjunction with the aforementioned sparse regression approach, is possible if the $P_{i,j}$'s are being computed serially: when about to compute a given $P_{i,j}$, if enough other base p-values, $P_{i',j}$, in the same column have been computed already and are large, then there will already be no hope of the PCH p-value combining the base p-values of that column being small, and hence it is faster and no less powerful to simply set $P_{i,j}$ (and any other as-yet-uncomputed base p-values in the same column) to 1 without wasting computation on it.

In some cases, further speedups are possible within the computation of the conditional independence tests themselves; we detail one such for the dCRT in \Cref{appendix: simulation computational speedups} which also takes the form of setting some $P_{i,j}$'s to 1 without (fully) computing them, i.e., another form of data-dependent screening.

While data-dependent screening provably retains marginal validity of the $P_{i,j}$'s \citep{dCRT}, it does in general change the dependence structure of the base p-values. This will have no impact on the validity of our BCP procedures that do not assume anything about the dependence among the base p-values (e.g., Holm's procedure applied to Bonferroni PCH p-values), but could in principle impact the validity of BCP procedures that assume a form of positive dependence. However, we found all the speedups used in our simulations never led to empirical error control violations, and in fact led to nearly identical sets of rejections as the same procedures without any speedups (average Jaccard indices $>98\%$ in each setting we tried; see \Cref{appendix: simulation computational speedups}). Furthermore, putting these speedups together made a big computational difference, with about a $5\times$ speedup with $n=p=100$, and more than a $30\times$ speedup for $n=p=1000$; see \Cref{appendix: simulation computational speedups} for details.

\section{Numerical Experiments}\label{sec: simulations}
In this section we demonstrate our methods' performance in compositional settings (\Cref{subsec: comp-data-simulation}), non-compositional settings (for comparison with non-compositional methods; \Cref{subsec: non-compositional sims}), and compositional settings with sparse covariates as described in \Cref{subsec: sparse covariates conditioning method} (\Cref{subsec:  simulation conditioning on sparse covariates}). 
The appendix contains many more simulations, details, and plots, in particular: additional simulations to assess robustness under covariate model misspecification (\Cref{subsec: robustness study}), evaluate the performance of the multiple PCH testing method AdaFilter (\Cref{simulation adafilter}), and examine the impact of computational speedups on validity and average power  (\Cref{appendix: simulation computational speedups}). For each simulation presented in the main text, we include only key, representative figures to demonstrate the main findings, with additional figures and results provided in Appendices~\ref{appendix: simulation dirichlet}--\ref{appendix: simulation effect of conditioning}. Code to replicate all numerical experiments is available at \url{https://github.com/Ritwik-Bhaduri/Compositional_Covariate_Importance_Testing/}.

While \Cref{sec: methodology} considered combining base p-values into PCH p-values via both Bonferroni's and Simes' method, the simulations in this section will focus only on PCH p-values combined via Simes' approach (i.e., $P^{\mathrm{S}}_j$), while we provide all the same results for Bonferroni's approach in \Cref{appendix: simulation bonferroni}. 
This is because BCP methods using the Simes' approach have significantly more power than those using Bonferroni's while still empirically controlling their respective error rates well. See \Cref{appendix: simulation bonferroni} for more details on the comparison between the Bonferroni and Simes approaches within BCP.


We compare three different values of $\overline{s}$ for our methods: $p-1$, $p/2$, and $s+1$ (in all our simulations, $s+1<p/2$). These values represent increasing levels of knowledge about the sparsity of the data: $p-1$ indicates no knowledge of sparsity, $s+1$ indicates perfect knowledge, and $p/2$ indicates a moderate (and we argue realistic in many settings; see discussion in \Cref{sec:testing single covariate}) level of knowledge by assuming at least half of the covariates are irrelevant. 
Given $\overline{s}$, we call the BCP hypothesis test (with Simes' approach, i.e., $P_j^{\mathrm{S}}$) BCP($\overline{s}$), and for variable selection, we call our methods BCP($\overline{s}$)-BH (\Cref{algorithm: Simes-BH-oracle}) and BCP($\overline{s}$)-Holm (\Cref{Algorithm: adaptive-Simes-Holm}) for FDR and FWER control, respectively. In \Cref{subsec: comp-data-simulation} and \Cref{subsec: non-compositional sims}, we compare BCP methods to case-specific benchmark methods described in those respective subsections.

\subsection{Standard compositional covariate settings}\label{subsec: comp-data-simulation}

In these simulations, the rows of the covariate matrix are generated as independent copies of a compositional random variable $X\in \mathbb{R}^{100}$. 
We consider two different models of $X$ to show the validity of our method for different dependence structures among the compositional covariates. 
First we consider the Dirichlet distribution where $X\sim\text{ Dirichlet}(\alpha_1, \dots, \alpha_{100})$ with $\alpha_1= \dots = \alpha_{100}=2$. The other compositional distribution we consider is the logistic-normal distribution \citep{xia2013logistic}, where $X_{{j}} = \frac{e^{Z_{{j}}}}{\sum_{i=1}^{100} e^{Z_{{i}}}}$ for $j\in \{1,\dots,100\}$ and $Z\sim\mathcal{N}(\boldsymbol{0},\Sigma)$ with $\Sigma_{ij} = 0.6^{|i-j|}$.

The response $Y$ is generated from a linear model $\mathcal{N}(\log(X)\beta, 1)$. Our simulated data sets consist of $n=100$ i.i.d. samples from $(X,Y)$. 
For single testing, the coefficient of the variable being tested is varied on the x-axis of our plots, while the non-null nuisance coefficients are i.i.d. $\mathcal{N}(0,1)$. For variable selection, all non-null coefficients are $\mathcal{N}(0,\mathrm{SNR}^2)$, where the scalar value SNR is varied on the x-axis, and can be thought of as a signal-to-noise ratio as it modulates the signal in the non-nulls. In all these simulations, there is a unique non-trivial Markov boundary, also equal to $\S$, which is given by the set of non-zero coefficients; its size $s$ is always 10.

As we are not aware of any existing methods that can test for or select elements of the Markov boundary with error control guarantees when the covariates are compositional, we compare BCP methods to a benchmark that is invalid, but popular and powerful. Namely, we consider as a benchmark a Leave-One-Out (LOO) approach that first drops a column $j$ uniformly at random from the data, and then analyzes the resulting non-compositional data via a standard (i.e., non-compositional) conditional independence test directly analogous to the bivariate conditional independence tests used within the BCP methods, namely, the dCRT with a lasso-based test statistic (see \Cref{appendix: simulation dirichlet} for more details). For variable selection with FWER control or FDR control, we plug the $p-1$ p-values from this LOO approach into Holm's procedure (LOO-Holm) or the Benjamini--Hochberg procedure (LOO-BH), respectively.
This method is consistent with heuristics currently used in practice (like contrast coding) and is powerful. However, it lacks error control, as discussed near the end of \Cref{sec: related work}.

\begin{figure}[!]
    \centering
    \begin{subfigure}{\textwidth}
        \centering
        \includegraphics[width=\textwidth]{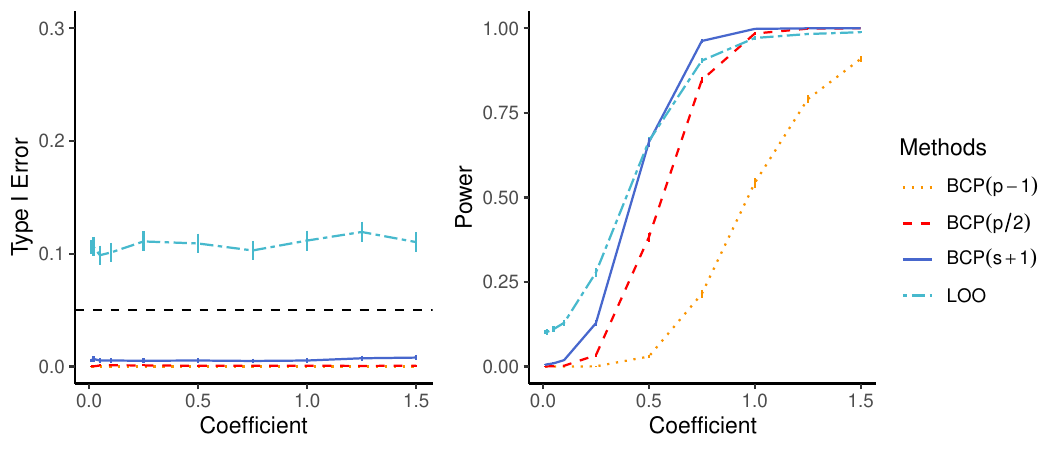}
        \caption{Comparison of type I error and power for single testing with Dirichlet covariates. The target type I error is 5\% and error bars correspond to $\pm 2$ Monte Carlo standard errors.}
        \label{fig:single_test_dirichlet}
    \end{subfigure}
    \begin{subfigure}{\textwidth}
        \centering
        \includegraphics[width=\textwidth]{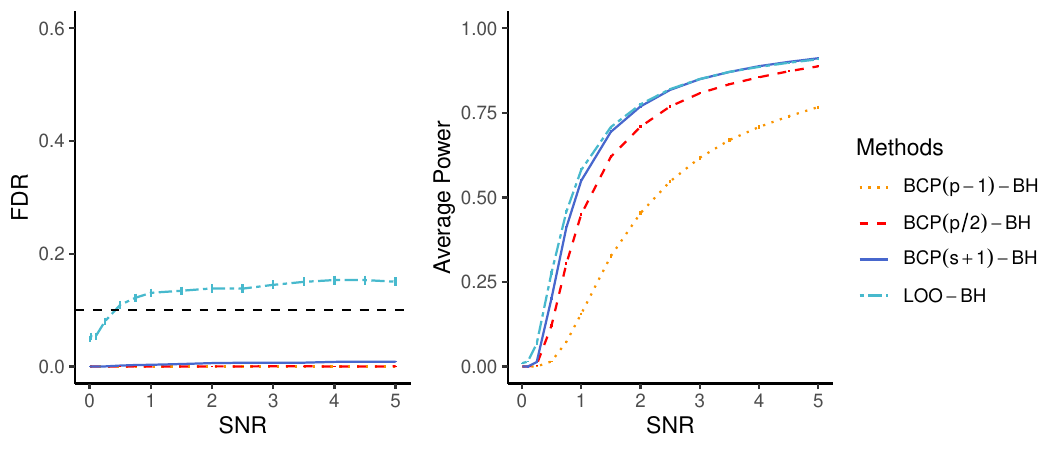}
        \caption{Comparison of FDR and average power for variable selection with Dirichlet covariates. The target FDR level is 10\% and error bars correspond to $\pm 2$ Monte Carlo standard errors.}
        \label{fig:fdr_dirichlet}
    \end{subfigure}%
    \caption{Comparison of methods for (a) single testing and (b) variable selelction with Dirichlet covariates. 
    }
\end{figure}
The left panel of \Cref{fig:single_test_dirichlet} shows that for single testing, the LOO method fails to control type I error at the nominal level of $5\%$, while the BCP methods all control type I error well below 5\%. Indeed, BCP methods were conservative across simulations due to their use of PCH p-values, which are well-known to be conservative
\citep{adafilter}. The right panel of \Cref{fig:single_test_dirichlet} shows the power of the BCP($\overline{s}$) methods decreases with the value of $\overline{s}$, as expected. BCP($s+1$), which assumes complete knowledge of $s$, achieves power similar to the LOO method and, in fact, beats the latter for higher signal strengths. The more realistic method---BCP($p/2$)---achieves power close to the oracle method BCP($s+1$) and LOO, and well above that of BCP($p-1$). \Cref{fig:fdr_dirichlet} shows analogous performance of BCP methods for variable selection:
LOO-BH fails to control FDR, while BCP($\overline{s}$)-BH methods are conservative but have average power similar to LOO-BH. Appendix \ref{appendix: simulation dirichlet} shows the results for FWER control are similar to those for FDR control. Furthermore, Appendix \ref{appendix: simulation logistic-normal} shows that both single testing and variable selection with logistic-normal covariates behave broadly similarly as with Dirichlet covariates. The only notable difference is a substantial drop in power for BCP($p-1$) (but not the other BCP methods), which could be attributed to the additional dependence structure in the data leading to the base p-values \(P_{i,j}\), where \(i\) is null and \(j\) is non-null, being less concentrated around zero. This effect would have more impact on the power of BCP methods with higher \(\overline{s}\).



\subsection{Non-compositional covariate settings}\label{subsec: non-compositional sims}

While BCP methods are primarily designed and recommended for regression problems with compositional covariates, nothing in their development or theory precludes their use on non-compositional covariates. In such a setting, we can benchmark them against state-of-the-art non-compositional methods that are both valid and powerful (unlike in the previous subsection, where there was no available valid benchmark).
In this subsection's simulation, $X\in\mathbb{R}^{100}$ is generated from $\mathcal{N}(\boldsymbol{0}, \Sigma)$ where $\Sigma$ is a Toeplitz matrix such that $\Sigma_{i,j} = 0.6^{|i-j|}$. The rest of the simulation setup including the conditional distribution of the response given the covariates and the sample size remain the same as in \Cref{subsec: comp-data-simulation}. 

As a benchmark, we consider a standard (i.e., non-compositional) conditional independence test directly analogous to the bivariate conditional independence tests used with the BCP methods, namely, the dCRT with a lasso-based test statistic. This Univariate testing approach is valid and powerful for testing whether a covariate is a member of the Markov boundary when the covariates are non-compositional, and for variable selection with FWER control or FDR control, these test's p-values can be plugged into Holm's procedure (Univariate-Holm) or the Benjamini--Hochberg procedure (Univariate-BH), respectively.

\begin{figure}[!]
    \centering
    \begin{subfigure}{\textwidth}
        \centering
        \includegraphics[width=\textwidth]{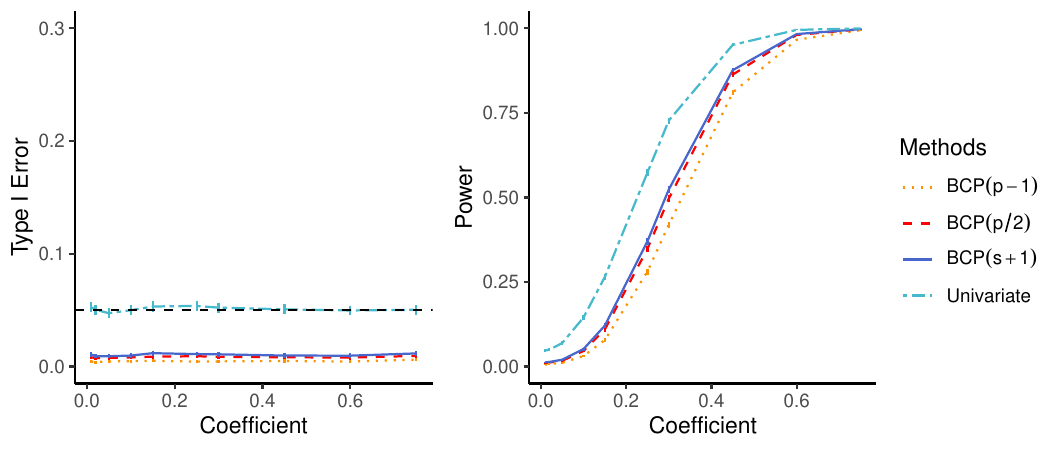}
        \caption{Comparison of type I error and power for single testing with multivariate normal covariates. The target type I error is 5\% and error bars correspond to $\pm 2$ Monte Carlo standard errors.}
        \label{fig:single_test_normal}
    \end{subfigure}
    \begin{subfigure}{\textwidth}
        \centering
        \includegraphics[width=\textwidth]{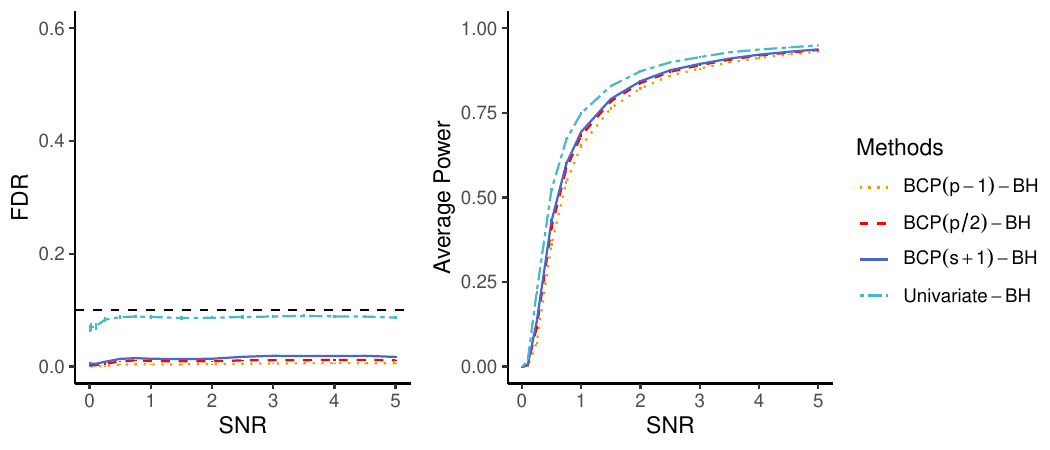}
        \caption{Comparison of FDR and average power for variable selection with multivariate normal covariates. The target FDR level is 10\% and error bars correspond to $\pm 2$ Monte Carlo standard errors.}
        \label{fig:fdr_normal}
    \end{subfigure}%
    \caption{Comparison of methods for (a) single testing and (b) multiple testing with multivariate normal covariates. 
    }
\end{figure}

\Cref{fig:single_test_normal} shows that all single-testing methods control the type I error and the non-compositional Univariate test
has the highest power, closely followed by the proposed BCP methods. 
\Cref{fig:fdr_normal} shows the same pattern for FDR-controlled variable selection except the gap between average power of the BCP methods and Univariate-BH is even narrower; \Cref{appendix: simulation multivariate normal} shows very similar results for FWER control. 
Thus, despite all the methodological overhead in the BCP approach that is needed to accommodate compositional covariates (and the significant corresponding conservativeness that comes with it), we find that in non-compositional settings, BCP methods are nevertheless not far behind state-of-the-art conditional independence tests in terms of power, with particularly competitive variable selection average power. 

\subsection{Sparse compositional covariate settings}\label{subsec:  simulation conditioning on sparse covariates}

To explore the BCP approach to sparse compositional covariates as presented in \Cref{subsec: sparse covariates conditioning method}, we generate covariates in this section 
from a 100-dimensional Dirichlet-Multinomial distribution as follows:
let $\alpha\in \mathbb{R}^{100}$ have entries $\alpha_j = (1+e^{\frac{j-50}{5}})^{-1}$ and sample $\theta \sim$ Dirichlet$(\alpha)$ and $X\sim$ Multinomial$(200, \theta)$. This $X$ distribution overall has about half of its entries non-zero but the sparsity is highly imbalanced, with the last entries much sparser than the first ones.
The response $Y$ is generated from a log-linear model $\mathcal{N}\left(\log(1+X)\beta,1\right)$, where the coefficient vector $\beta$ is zero except for 10 uniformly randomly selected entries which are sampled independently from $\mathcal{N}(0,1)$. The sample size was $n=100$.

The goal of this simulation is to study the effect of $\Dense$ on power, and since the density of the covariates decreases with increasing index, we choose $\Dense = [k]$ and vary $k$ from 30 to 100. 
In this subsection, we will consider BCP methods with the following three values of $\overline{s}$: $\PDense-1$, $p/4 (=25 <30\le \PDense)$, and $s+1$; this is because the conditional BCP approach of \Cref{subsec: sparse covariates conditioning method} requires $\overline{s}<\PDense$.

\begin{figure}[!]
\centering
\includegraphics[width=\textwidth]{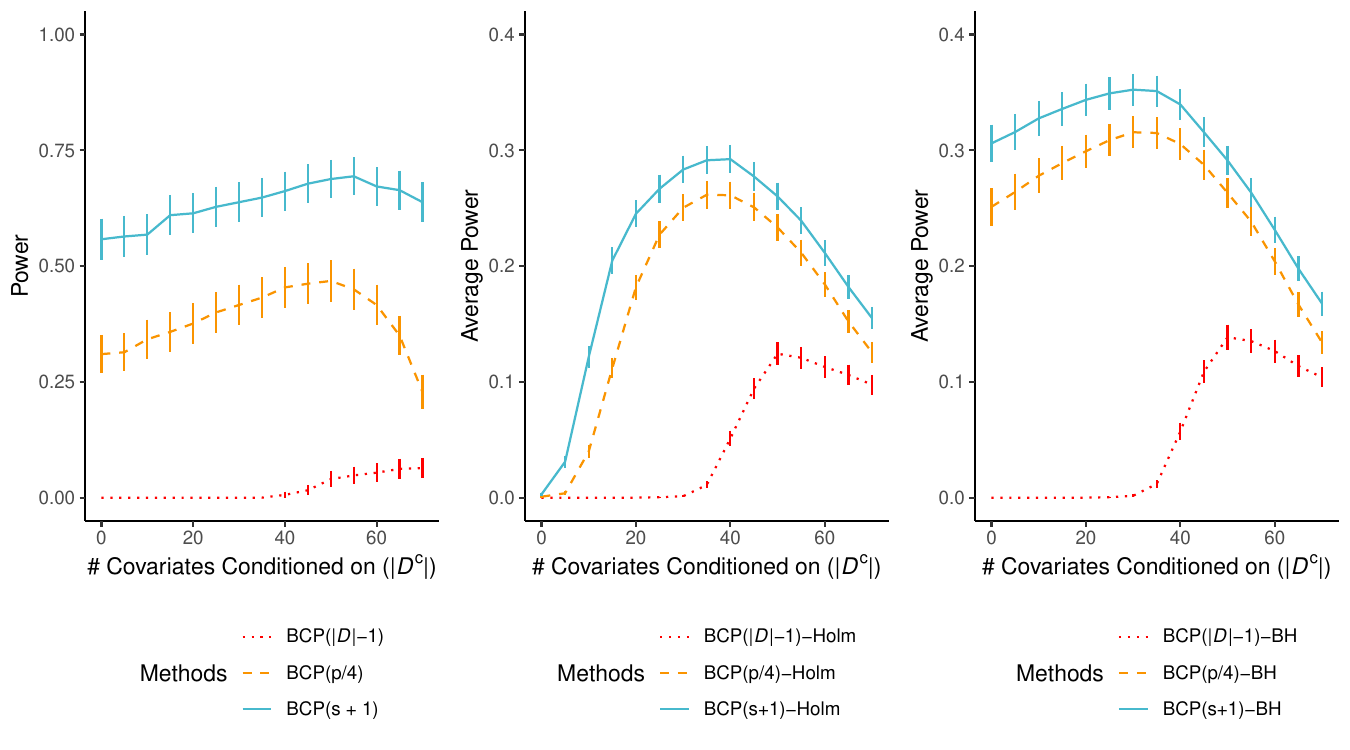} 
\caption{Effect of conditioning out the sparse covariates on power for single testing (left) and average power for FWER control (middle) and FDR control (right), respectively. Error bars correspond to $\pm2$ Monte Carlo standard errors.}
\label{fig:power_effect_of conditioning}
\end{figure}

\Cref{fig:power_effect_of conditioning} shows that conditioning on sparse covariates as proposed in \Cref{subsec: sparse covariates conditioning method} can yield substantial gains in power over not doing so (i.e., $|\Dense^\mathsf{c}|=0$) in both single testing and variable selection. All methods in all settings benefited from conditioning on at least the 35 sparsest covariates (the 35th-sparsest covariate is 8\% non-zero), and in some cases continued to benefit from conditioning on more than half the covariates. However, as expected, for nearly all methods there is a sweet spot, and conditioning on too many covariates starts to harm power as covariates that are conditioned on cannot be discovered (recall that in these simulations, the probability that a covariate is non-null is independent of its sparsity). \Cref{appendix: simulation effect of conditioning} shows the corresponding type I error, FWER, and FDR plots, all of which are controlled below their nominal levels.

\section{Discussion}\label{sec: discussion}
In this work, we have defined an interpretable model-free notion of an important covariate in regression problems with compositional covariates. With this notion as an inferential target, we then developed novel, valid, and powerful procedures for testing and controlled variable selection, integrating ideas from conditional independence testing and partial conjunction hypothesis testing.

While this paper has primarily focused on compositional covariates, its contents in fact apply more generally to covariates that satisfy any single deterministic equality constraint.
For example, the methods and theorems in this paper remain valid when the covariates are constrained to any $(p-1)$-dimensional linear subspace. 
Another allowable constraint is that the covariates lie on the surface of the ($(p-1)$-dimensional) unit $p$-sphere \citep{lin2019nonparametric}; such data arises in, e.g., 
geophysics~\citep{freeden1998constructive}, cosmology~\citep{dodelson2020modern}, plate tectonics~\citep{chang2000regression} and embryology~\citep{tyszka2005new}. 

A natural future direction for this work is the setting of multiple constraints, i.e., when the covariates are constrained to lie in a set of dimension $p-k$ for $k>1$. This setting arises in, e.g., feature engineering~\citep{zheng2018feature}, basis expansion~\citep{oreshkin2019n}, multi-factor experimental designs~\citep{box1957multi}, and experimental designs with interactions~\citep{dean1999design}. \Cref{subsubsec: simplification for factor covariates} showed that this paper's approach can already handle multi-factor experimental design 
since the constraints are on disjoint subsets of the indices, but for other type of multiple constraints, na\"{i}ve extension of the ideas in this paper would lead to computational intractability as discussed in \Cref{remark:trivariate}, with $k$ constraints requiring testing of $\mathcal{O}\left(p^{k+1}\right)$ $k$-variate conditional independence hypotheses. We hope that further insights and improvements along the lines of the computational speedups in \Cref{sec: computational speedups} can be discovered in future work to make such approaches tractable.

\section*{Acknowledgments}
RB and LJ were partially supported by DMS-2045981.

\bibliographystyle{apalike}
\bibliography{reference}

\appendix

\section{Differential Absolute Abundance Methods}
\label{app:absabund}
As mentioned in a footnote in \Cref{sec: related work}, there is a class of differential abundance methods in the microbiome field that test for marginal independence between $Y$ and the \emph{absolute} abundance of the $j^{\text{th}}$ microbe \citep{zhou2022linda, wang2023robust,lin2024multigroup,zong2024mbdecoda}. We refer to such methods collectively as differential absolute abundance (DAA) to distinguish them from canonical DA analysis. To understand how DAA methods relate to this paper, suppose that the observed compositional covariates $X$ represent the normalization of a latent noncompositional random vector $Z$, i.e., $X = Z/\sum_{j}Z_{\{j\}}$. Then DAA methods are designed to test hypotheses of the form $Y\indep Z_{\{j\}}$. Because DAA methods resemble DA methods but with $X_{\{j\}}$ replaced by $Z_{\{j\}}$, they are in some sense a step further removed from this paper than DA methods, which is why we defer their discussion to this appendix. 

First and foremost, DAA methods test a hypothesis in terms of the latent random variables $Z_{\{j\}}$. Thus, it represents fundamentally different \emph{scientific questions} than considered by DA methods or any of the methods in this paper, which are entirely in terms of the observable random variables $Y$ and $X$. In fact, $Z$ is only a meaningful quantity in certain applications where $X$ represents the normalization of some underlying random variable (for example, the microbiome field). Many other compositional random variables (e.g., factors in experimental design, chemical compositions, or time use within a day), however, are compositional but do not represent normalization from underlying noncompositional variables. There is thus no natural scientific question regarding the latent $Z_{\{j\}}$'s for these types of applications.

Beyond the conceptual difference, DAA methods differ from our approach in two additional, important ways. First, like DA methods, DAA methods test \emph{marginal} independence, meaning that unlike the multivariate approach taken in this paper, they do not account for any covariates other than the $j^{\text{th}}$ in their hypothesis. Second, the latency of $Z$ will generally make it (and hence nearly any hypothesis involving it) unidentifiable except under specific assumptions: most DAA methods assume highly parametric distributions on $Z_{\{j\}}$ and $X_{\{j\}}$; all require some form of sparsity conditions that not too many $Z_{\{j\}}$'s can be associated with $Y$. As such, DAA methods also differ from this paper in taking a parametric approach to inference, necessitated in part by the latency of $Z$.

\section{Proof of Theorems}
\subsection{Proof of Lemma~\ref{Lemma: 1 equiv class}}\label{subsec: proof of Lemma: 1 equiv class}
We are first going to prove Lemma \Cref{Lemma: 1 equiv class} holds for coordinatewise connected vectors (through $A\setminus B$ and $B\setminus A$). The proof for equivalent vectors will follow similarly.

Consider two vectors $(u,v,w)$ and $(u^\prime, v^\prime, w^\prime)$ in the support of $X_{A\cup B}$, conditioned on $X_{(A\cup B)^\mathsf{c}}=z$, that are coordinatewise connected through $A\setminus B$ and $B\setminus A$. Here, $(u,v,w)$ is shorthand for $X_{A\cap B} = u$, $X_{A\setminus B} = v$, and $X_{B\setminus A} = w$.

We will use the notation $p(\cdot \mid \cdot)$ to represent the conditional measure of $Y$ given a subvector of $X$. This notation is general and accommodates various scenarios:
\begin{enumerate}
    \item Conditional probability in the discrete case, i.e., $ \mathbb{P}(Y=y \mid X_{A\cup B}=(u,v,w)) $.
    \item Conditional density function in the continuous case, i.e., $f_Y(y \mid X_{A\cup B}=(u,v,w))$.
    \item More generally, using measure-theoretic notation, for any Borel set $\mathscr{B}$ in the sigma-algebra of $Y$ and the event $ X_{A\cup B}=(u,v,w) $, $ p(\mathcal{\mathscr{B}} \mid X_{A\cup B}=(u,v,w)) $ represents the conditional expectation $ \mathbb{E}[1_{\mathscr{B}} \mid \mathcal{F}_{X_{A\cup B}}] $ where $ \mathcal{F}_{X_{A\cup B}} $ is the sigma-algebra generated by $ X_{A\cup B} $. This accommodates cases where our random variables might not have purely discrete or continuous distributions.
\end{enumerate}

Without loss of generality (WLOG), assume the vectors are coordinatewise connected through $B\setminus A$. Then, $w = w^\prime$. So, we have,
\begin{align*}
        &p\left(Y \mid  \ X_{(A\cap B)}=u,X_{(A\setminus B)}=v, X_{(B\setminus A)}=w, X_{(A\cup B)^\mathsf{c}}=z\right)\\
        &=p\left(Y \mid  X_{(B\setminus A)}=w, X_{(A\cup B)^\mathsf{c}}=z\right) \quad \text{(by the conditional independence } Y\indep X_{A}\mid X_{A^\mathsf{c}}\text{)}\\
        &=p\left(Y \mid  X_{(B\setminus A)}=w^\prime, X_{(A\cup B)^\mathsf{c}}=z\right) \quad \text{(as } w=w^\prime\text{)}\\ 
        &=p\left(Y \mid  \ X_{(A\cap B)}=u^\prime,X_{(A\setminus B)}=v^\prime , X_{(B\setminus A)}=w', X_{(A\cup B)^\mathsf{c}}=z\right).
\end{align*}
Instead of the vectors being coordinatewise connected through $B\setminus A$, if they were connected through $A\setminus B$, we would have used the conditional independence $Y\indep X_{B}\mid X_{B^\mathsf{c}}$. So, it follows that $p(Y\mid u,v,w,z) = p(Y\mid u',v',w',z)$ whenever the vectors $(u,v,w)$ and $(u',v',w')$ from the support of $X_{A\cup B}$ conditioned on $X_{(A\cup B)^\mathsf{c}}=z$ are coordinatewise connected (via $A\setminus B$ and $B\setminus A$).

Now suppose $(u,v,w)$ and $(u^\prime,v^\prime,w^\prime)$ are equivalent (through $A\setminus B$ and $B\setminus A$). So there exist points $(u_i, v_i, w_i),\ i = 1, \dots, L$ for some $L$ such that $((u,v,w), (u_1,v_1,w_1))$, $((u_1,v_1,w_1), (u_2,v_2,w_2))$, $\dots$, $((u_{L-1},v_{L-1},w_{L-1}), (u_L,v_L,w_L))$ and $((u_L,v_L,w_L), (u^\prime,v^\prime,w^\prime))$ are coordinatewise connected. From the above argument it follows that 
\begin{align*}
    p\left(Y\mid u,v,w,z\right) = p\left(Y\mid u_1,v_1,w_1,z_1\right) = \dots = p\left(Y\mid u_L,v_L,w_L,z_L\right) = p\left(Y\mid u',v',w',z\right).
\end{align*}
This completes the proof.

\subsection{Proof of Theorem~\ref{thm: uniqueness of MB}}\label{subsec: proof of uniqueness Theorem}

We begin by proving the following result first.

\begin{proposition}\label{prop: uniqueness of MB 2}
    Suppose there exists $i\neq j$ such that $Y\indep X_{\{i,j\}}\mid X_{\{i,j\}^\mathsf{c}}$. If $\S^\mathsf{c}\in (\Delta \circ)^k \I$ for some $k\in \mathbb{Z}_{\geq 0}$ then there exists a unique set $\M$ which satisfies the following:
    \begin{enumerate}
        \item $Y\indep X_{\M^\mathsf{c}}\mid X_{\M}$
        \item For all $\M^\prime$ such that $Y\indep X_{{\M^\prime}^\mathsf{c}}\mid X_{{\M^\prime}}$, $|\M|\leq |\M^\prime|$
    \end{enumerate}
    Further, this unique set is given by $\M=\S$.
\end{proposition}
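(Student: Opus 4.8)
The plan is to prove all three assertions of the proposition—existence of a set $\M$ satisfying conditions 1 and 2, its uniqueness, and the identification $\M=\S$—simultaneously, via two facts: \textbf{(a)} $\S$ is itself a \emph{Markov blanket}, meaning $Y\indep X_{\S^\mathsf{c}}\mid X_{\S}$ (this is condition 1 of \Cref{def: Markov boundary}), and \textbf{(b)} every Markov blanket of cardinality at most $|\S|$ equals $\S$. Granting (a) and (b): $\S$ satisfies condition 1 by (a); condition 2 holds for $\S$ because any strictly smaller Markov blanket would, by (b), be forced to equal $\S$ and hence not be smaller; and if $\M$ is any set satisfying conditions 1 and 2, then condition 2 applied with $\S$ in the role of $\M'$ (legitimate since $\S$ is a Markov blanket by (a)) gives $|\M|\le|\S|$, whereupon (b) gives $\M=\S$.

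For (a) I would run the recursive argument sketched in the paragraph preceding \Cref{thm: uniqueness of MB}, made precise by induction on $k$: the statement ``$Y\indep X_C\mid X_{C^\mathsf{c}}$ for all $C\in(\Delta\circ)^k\,\I$'' holds at $k=0$ by the definition of $\I$, and the inductive step takes $C=A\cup B$ with $A,B\in(\Delta\circ)^k\,\I$ and $\{A,B\}\in\Delta$, uses the inductive hypothesis to get $Y\indep X_A\mid X_{A^\mathsf{c}}$ and $Y\indep X_B\mid X_{B^\mathsf{c}}$, and then applies \Cref{Lemma: 1 equiv class} (whose hypothesis is exactly the membership $\{A,B\}\in\Delta$) to conclude $Y\indep X_{A\cup B}\mid X_{(A\cup B)^\mathsf{c}}$. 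Since $\S^\mathsf{c}\in(\Delta\circ)^{k}\,\I$ by hypothesis, instantiating $C=\S^\mathsf{c}$ yields (a). I would also record here that every element of $\I$, hence of every $(\Delta\circ)^k\,\I$, has at least two elements, so $|\S^\mathsf{c}|\ge 2$, i.e.\ $|\S|\le p-2<p-1$; this inequality is what makes \Cref{Lemma: M supset S} applicable to the Markov boundaries that arise in (b). The hypothesis that some $i\ne j$ has $Y\indep X_{\{i,j\}}\mid X_{\{i,j\}^\mathsf{c}}$ is in fact implied by $\S^\mathsf{c}\in(\Delta\circ)^k\,\I$ and serves only to guarantee non-vacuousness (e.g.\ that $\S\subsetneq[p]$).

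For (b), let $\M'$ be a Markov blanket with $|\M'|\le|\S|$. Since $[p]$ is finite, one may greedily delete elements of $\M'$, each time checking that $Y$ remains conditionally independent of the complement given the retained set, until no further single deletion is possible; by the weak union property this terminates at a set $\M''\subseteq\M'$ that is a Markov boundary in the sense of \Cref{def: Markov boundary}. Then $|\M''|\le|\M'|\le|\S|\le p-2<p-1$, so $\M''$ is a nontrivial Markov boundary and \Cref{Lemma: M supset S} gives $\M''\supseteq\S$; combined with $|\M''|\le|\S|$ this forces $\M''=\S$, and then $\S=\M''\subseteq\M'$ together with $|\M'|\le|\S|$ gives $\M'=\S$, proving (b).

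The main obstacle lies entirely in step (a): it rests on \Cref{Lemma: 1 equiv class} together with the bookkeeping of the set action $\Delta\circ$, which jointly substitute for the classical intersection property of conditional independence (unavailable here because the support of a compositional $X$ has Lebesgue measure zero). Once (a) is in hand, step (b) is a short combinatorial argument reusing the already-proved \Cref{Lemma: M supset S}; the only things to be careful about are the cardinality inequalities that keep the Markov boundaries encountered genuinely nontrivial (size $<p-1$) and the degenerate cases, which the hypotheses preclude.
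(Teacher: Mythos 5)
Your proof is correct and follows essentially the same route as the paper's: establish $Y\indep X_{\S^\mathsf{c}}\mid X_{\S}$ by recursively applying \Cref{Lemma: 1 equiv class} along the set action $(\Delta\circ)^k\,\I$, then use \Cref{Lemma: M supset S} (together with $|\S^\mathsf{c}|\ge 2$) to force any minimum-cardinality Markov blanket to coincide with $\S$. The only differences are organizational — your greedy-deletion reduction to a Markov boundary and your uniform treatment that subsumes the paper's separate $|\I|=1$ case — and both are sound.
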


\begin{proof}
Recall that we defined $\mathcal{I}$ as $\I = \{\{i, j\} : \H_{i,j} \text{ is true}\}$ and $\S$ was defined as $\S = \{i \in [p] : \H_{i,j}$ \text{ is false for all} $\ j \in \{i\}^{\mathsf{c}}\}$. Finally, $\Delta = \{\{A,B\} : X_{A \cup B} \mid  X_{(A \cup B)^\mathsf{c}} \text{ has 1 equivalence class (through } A\setminus B \text{ and } B\setminus A)\}$. If $\H_{i,j}$ is true for some $i\neq j$, $\I$ is non-empty. 

Consider the case $|\I| = 1$, where $\I = \{\{i,j\}\}$. Define $\M = \{i,j\}^{\mathsf{c}}$. $\M$ satisfies condition \textit{1} by definition of $\I$. Now for assumption \textit{2}, consider any $\M^\prime$ such that $Y\indep X_{{\M^\prime}^\mathsf{c}}\mid X_{{\M^\prime}}$. If $|\M^\prime| < |\M| = p-2$, for any three distinct elements of ${\M^\prime}^\mathsf{c}$, $i^\prime,j^\prime,\text{ and }k^\prime $, by the weak union property~\citep{PEARLbook}, $Y\indep X_{\{i^\prime,j^\prime,k^\prime\}}\mid X_{\{i^\prime,j^\prime,k^\prime\}^\mathsf{c}}$. Using the weak union property again, $\{i^\prime,j^\prime\}, \ \{j^\prime,k^\prime\},\ \{k^\prime,i^\prime\} \in \I$ which contradicts $|\I|=1$. So, $\M$ satisfies both conditions \textit{1} and \textit{2}. To show that no other set satisfies these conditions, consider any other set $\M^\prime$ which satisfies them. Due to the previous argument $|\M^\prime| \geq p-2$. Also, $|\M^\prime|< p-1$ since $\M$ satisfies condition \textit{2}. If $|\M^\prime| = p-2$ but $\M^\prime \neq \M$, then this implies $\M^{\prime\mathsf{c}}\in \I$ and $\M^{\prime\mathsf{c}} \neq \{i,j\}$ which again contradicts $\I$. Therefore, when $|\I| = 1$ and $\I = \{\{i,j\}\}$, $\M = \{i,j\}^{\mathsf{c}}$ is the unique set satisfying conditions \textit{1} and \textit{2}. Also, in this case, $\S = \{i,j\}^{\mathsf{c}}$ making $\M=\S$ which proves the statement of the proposition for the case $|\I|=1$. 

In the case $|\I|\geq 2$, consider any set $C\in \I$. Note that $C$ is a set of two elements. For such a $C$, $Y \indep X_C \mid  X_{C^\mathsf{c}}$. If we take any two sets $C$ and $D$ in $\I$ such that $X_{C \cup D} \mid  X_{(C \cup D)^\mathsf{c}}$ has 1 equivalence class (through $C \setminus D$ and $D \setminus C$), then \Cref{Lemma: 1 equiv class} implies the following is true:
$$Y \indep X_{C \cup D} \mid  X_{(C \cup D)^\mathsf{c}}.$$
Since $\Delta$ is defined as all the sets of sets with one equivalence class, the same logic implies that for any set $C \in \Delta \circ \I$, $Y\indep X_C \mid X_{C^\mathsf{c}}$. This argument can be applied to elements of $\Delta \circ \I$ also to get the following, 
$$Y\indep X_C \mid X_{C^\mathsf{c}} \ \text{for all }C \in (\Delta\circ)^2 \I.$$
Proceeding recursively, we get, for any $k \in \mathbb{Z}_{\geq0}$,
$$Y\indep X_C \mid X_{C^\mathsf{c}} \ \text{for all }C \in (\Delta\circ)^k \I.$$
So, if $\S^\mathsf{c} \in (\Delta\circ)^k \I$ for some integer $k$, then $Y\indep X_{\S^\mathsf{c}} \mid X_{\S}$ and $\S$ satisfies condition \textit{1}, and thus $|\M| \leq |\S|$ for any $\M$ satisfying conditions \textit{1} and \textit{2}. Now, conditions \textit{1} and \textit{2} imply $\M$ is a Markov boundary. Hence, by \Cref{Lemma: M supset S}, for any such $\M$, $\S \subseteq \M$. Combining them, we get under the assumptions of the proposition, $\M = \S$ is unique set satisfying conditions \textit{1} and \textit{2}.
\end{proof}

Now that we have proven the proposition, we shall use it along with \Cref{Lemma: M supset S} to prove \Cref{thm: uniqueness of MB}. Under the assumptions of \Cref{thm: uniqueness of MB}, we have two cases: $\S = [p]$ and $|\S| < p$. If $\S = [p]$, then $Y\nindep X_{\{i,j\}}\mid X_{\{i,j\}^\mathsf{c}}$ for all $i\neq j$. For the sake of contradiction, suppose there exists a nontrivial Markov boundary, i.e., some set $\M$ such that $|\M|<p-1$ and $Y\indep X_{\M^\mathsf{c}}\mid X_{\M}$. Then there exists some $\{i,j\}\subseteq \M^\mathsf{c}$ such that $i\neq j$. Then, the weak union property~\citep{PEARLbook} implies $Y\indep X_{\{i,j\}}\mid X_{\{i,j\}^\mathsf{c}}$ which contradicts $\S=[p]$. So, in this case there does not exist a nontrivial Markov boundary. 

In the other case, that is when $|\S| < p$, there must exist some $i\neq j$ such that ${Y\indep X_{\{i,j\}}\mid X_{\{i,j\}^\mathsf{c}}}$. 
Now, consider any subset $\M^\prime$ of $[p]$ which is a Markov boundary and $|\M^\prime| < p-1$. So, $Y\indep X_{{\M^\prime}^\mathsf{c}} \mid X_{{\M^\prime}}$. \Cref{Lemma: M supset S} and $|{\M^\prime} ^\mathsf{c}| > 1$ imply $\M^\prime \supseteq \S$. So, any nontrivial Markov boundary must be a superset of $\S$. The only thing left to show is that it is indeed equal to $\S$. But \Cref{prop: uniqueness of MB 2} already tells us that $\S$ is the smallest set such that $Y\indep X_{\S}\mid X_{\S^\mathsf{c}}$. If $\M^\prime \supsetneq \S $, then $\M^\prime$ violates item \textit{2} in \Cref{def: Markov boundary}. Hence, $\M^\prime$ has to be equal to $\S$. So, any Markov boundary must either have size $= p-1$ or be equal to $\S$ which proves the Theorem.

\subsection{Proof of Corollary~\ref{cor:continuous distributions}}\label{appendix:proof-continuous-covariates}
Corollary~\ref{cor:continuous distributions} presents an application of \Cref{thm: uniqueness of MB} for the case of continuous distributions with further assumptions on the support of the conditional distribution. First the case $\S = [p]$ is identical to that in \Cref{thm: uniqueness of MB} and is omitted in this proof. 

For the case $|\S| < p$, we will apply \Cref{thm: uniqueness of MB} to prove the corollary. First, suppose that for any two subsets $A$ and $B$ of $[p]$ with $|B| = 2$ and $|A \cap B| = 1$, the support of the conditional distribution $X_{A\cup B} \mid X_{(A\cup B)^\mathsf{c}}=c$ has only one equivalence class through $A\setminus B$ and $B\setminus A$; we will prove this supposition shortly. Then $\Delta = \{\{A,B\}:\ |A\cap B| =1, \ |B|=2\}$. Due to assumption \textit{(i)}, for any two (two element) sets $C_1$ and $C_2$ in $\I$ there exists a sequence of sets $D_1, \dots, D_k$ for some $k\in \mathbb{Z}_{\geq0}$ such that $|C_1\cap D_1|=1$, $|D_1\cap D_2|=1$, $|D_2\cap D_3|=1$, $\dots$, $|D_{k-1}\cap D_k|=1$ and $|D_k\cap C_2|=1$. Consider all pairs of elements of $\I$. For each of them we get a sequence of sets $(D_i:\ i\in [k])$ connecting them for some $k\in \mathbb{Z}_{\geq0}$. Consider $k^\star$ to be the maximum such $k$. Note that $k^\star$ is finite as it is the maximum over $\binom{|\I|}{2}$ integers. Therefore, $\S^\mathsf{c}\in (\Delta\circ)^{k^\star}\I$ which satisfies the assumptions of \Cref{thm: uniqueness of MB}.

The only thing left to prove is that for any $A,B \subset [p]$ such that $|A\cap B|=1$ and $|B|=2$, the support of the conditional distribution $X_{A\cup B}\mid X_{(A\cup B)^\mathsf{c}}=c$ has only one equivalence class. First note that if $c < 0$ or $c> 1$, the support is empty and when $c = 1$, the support is equal to the singleton $\{\boldsymbol{0}_{|A\cup B|}\}$ and hence, has a single equivalence class. So, we consider the only nontrivial case which is $0\geq c < 1$. 
To do that, define the sets 
\begin{align*}
    \mathrm{Supp}_{>0}(Z) &= \{z\in \mathbb{R}^p:\ f_{Z}(z)>0\}\\
     \text{ and }\mathrm{Supp}_{>0}(X) &= \{x\in \mathbb{R}^p: \sum_{j\in[p]}x_{\{j\}}=1,\ f_{X}(x)>0\}.
\end{align*}
Here $f_Z$ is the density of $Z$ and $f_X$ is the density of $X$ with respect to the $p-1$--dimensional Hausdorff measure induced by $f_Z$. Note that $\mathrm{Supp}_{>0}(X)$ represents a set related to the support of $f_X$, however with the difference that $\mathrm{Supp}_{>0}(X)$ only includes points with positive density, as opposed to its closure which is commonly defined as the support of $f_X$. For some $C\subseteq [p]$ (think of $C$ as $(A\cup B)^\mathsf{c}$), and some $c\in \mathbb{R}^{|C|}$ such that there exists $x\in \mathrm{Supp}_{>0}(X)$ with $x_C=c$, consider the following set
$$\mathrm{Supp}_{>0}(X\mid X_C=c) = \left\{x \in \mathrm{Supp}_{>0}(X): \ x_C=c\right\}.$$
We want to show that $\mathrm{Supp}_{>0}(X\mid X_C=c)$ is open and path connected in $$\left\{x\in \mathbb{R}^{p}: \sum_{j\in [p]}x_{\{j\}}=1,\ x_C = c\right\}.$$

To do that, we will be using the following Lemma (see \Cref{appendix: proof of lemma support z to x} for proof):

\begin{Lemma}\label{lemma: support z to x}
Let $Z$ be a $p$-dimensional random vector. Define the function $g: [0,\infty)^p \setminus \{\boldsymbol{0}_p \}\rightarrow [0,\infty)^p $ as follows
$$x = g(z) := \left(\frac{z_{\{1\}}}{\sum_{k=1}^p z_{\{k\}}},\frac{z_{\{2\}}}{\sum_{k=1}^p z_{\{k\}}}, \dots, \frac{z_{\{p\}}}{\sum_{k=1}^p z_{\{k\}}}\right).$$
Let $f_Z$ denote the density of $Z$ with respect to the Lebesgue measure and $f_X$ that of $X$ ($:=g\circ Z$) with respect to the $(p-1)$--dimensional Hausdorff measure induced by $f_Z$. Then, if $f_Z$ is continuous in $[0,\infty)^p\setminus \{\boldsymbol{0}_p\}$, $\mathrm{Supp}_{>0}(X) = g\left(\mathrm{Supp}_{>0}(Z)\right)$.
\end{Lemma}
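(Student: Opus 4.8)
The plan is to obtain an explicit pointwise formula for $f_X$ by a polar-type change of variables, and then read off the claimed set identity directly from that formula. Write $\mathcal{K} := \{x\in[0,\infty)^p : \sum_{j\in[p]}x_{\{j\}}=1\}$ for the standard simplex, which is where $X$ (hence $f_X$) lives; note that $Z\in[0,\infty)^p\setminus\{\boldsymbol{0}_p\}$ almost surely (the density is zero off that set), so $X=g(Z)$ is well defined. The first observation is that the fibers of $g$ are exactly the open rays: for $x\in\mathcal{K}$ we have $g^{-1}(x)=\{tx:t>0\}$, since $g(z)=x$ forces $z_{\{j\}}=x_{\{j\}}\sum_k z_{\{k\}}$, i.e.\ $z=(\sum_k z_{\{k\}})\,x$, while conversely $g(tx)=tx/(t\sum_j x_{\{j\}})=x$ because $\sum_j x_{\{j\}}=1$.

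Next I would set up the map $\Phi:(0,\infty)\times\mathcal{K}\to[0,\infty)^p\setminus\{\boldsymbol{0}_p\}$, $\Phi(t,x)=tx$, which is a bijection and a diffeomorphism onto its image away from the relative boundary of $\mathcal{K}$. Decomposing $\mathbb{R}^p$ into the radial direction and the tangent space $\{v:\sum_j v_{\{j\}}=0\}$ of $\mathcal{K}$, a direct computation gives $|\det D\Phi(t,x)|=t^{p-1}/\sqrt{p}$. Plugging this into the area formula, and using that $g(\Phi(t,x))=x$, yields for every Borel $A\subseteq\mathcal{K}$
\[
\mathbb{P}(X\in A)=\int_{g^{-1}(A)}f_Z(z)\,dz=\int_A\Big(\tfrac{1}{\sqrt{p}}\int_0^\infty f_Z(tx)\,t^{p-1}\,dt\Big)\,d\mathcal{H}^{p-1}(x),
\]
so the (canonical) density of $X$ with respect to the induced $(p-1)$-dimensional Hausdorff measure is
\[
f_X(x)=\frac{1}{\sqrt{p}}\int_0^\infty f_Z(tx)\,t^{p-1}\,dt,\qquad x\in\mathcal{K};
\]
this is the version of $f_X$ meant in the statement, and any other version agrees with it off an $\mathcal{H}^{p-1}$-null set.

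Given this formula the two inclusions are short. For $g(\mathrm{Supp}_{>0}(Z))\subseteq\mathrm{Supp}_{>0}(X)$: if $f_Z(z_0)>0$, set $r_0:=\sum_k (z_0)_{\{k\}}>0$ and $x_0:=g(z_0)\in\mathcal{K}$, so $z_0=r_0x_0$; since $f_Z$ is continuous on $[0,\infty)^p\setminus\{\boldsymbol{0}_p\}$ and $t\mapsto tx_0$ is continuous into that set for $t>0$, the map $t\mapsto f_Z(tx_0)$ is continuous and positive at $t=r_0$, hence positive on a neighborhood of $r_0$, making the integral defining $f_X(x_0)$ strictly positive. For $\mathrm{Supp}_{>0}(X)\subseteq g(\mathrm{Supp}_{>0}(Z))$: if $x\in\mathcal{K}$ has $f_X(x)>0$, then the nonnegative integrand $f_Z(tx)\,t^{p-1}$ has positive integral, so $f_Z(t_0x)>0$ for some $t_0>0$; then $z_0:=t_0x\in\mathrm{Supp}_{>0}(Z)$ and $g(z_0)=x$, giving $x\in g(\mathrm{Supp}_{>0}(Z))$. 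Combining the two inclusions proves the lemma.

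I expect the main obstacle to be the first step rather than the support argument: rigorously justifying the polar change of variables onto the $(p-1)$-dimensional simplex, computing the Jacobian factor, and pinning down precisely which version of $f_X$ the statement refers to (the co-area one), since $\mathrm{Supp}_{>0}(X)$ is only meaningful once a version is fixed. The relative boundary faces of $\mathcal{K}$ (where some $x_{\{j\}}=0$ and $\Phi$ fails to be a diffeomorphism) require a brief remark, but they are $\mathcal{H}^{p-1}$-null, and for $x_0$ on such a face the ray $\{tx_0:t>0\}$ still lies in $[0,\infty)^p\setminus\{\boldsymbol{0}_p\}$ where $f_Z$ is continuous, so the pointwise formula for $f_X$ — and hence both inclusions — extends to these points without additional difficulty.
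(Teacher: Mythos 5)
Your proposal is correct and follows essentially the same route as the paper: both derive the radial-integral formula $f_X(x)\propto\int_0^\infty f_Z(tx)\,t^{p-1}\,dt$ via a polar-type change of variables onto the simplex (your Jacobian factor $t^{p-1}/\sqrt{p}$ matches the paper's $\det(J^\mathsf{T}J)^{-1/2}=t^{p-1}/\sqrt{p}$), and then prove the two inclusions by exactly the same continuity-along-the-ray and positive-integral arguments.
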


\paragraph{Step 1: Path connectedness of conditional support}
We begin by demonstrating that the support is path-connected. Consider any two points $x$ and $x'$ in $\mathrm{Supp}_{>0}(X \mid X_C = c)$. Since $x \in \mathrm{Supp}_{>0}(X \mid X_C = c)$, it follows that $f_X(x) > 0$. By \Cref{lemma: support z to x}, this implies there exists a $z \in \mathrm{Supp}_{>0}(Z)$ such that $x = g(z)$, where $g$ is defined as in the Lemma. Similarly, for $x'$, there exists a point $z' \in [0,\infty)^p \setminus \boldsymbol{0}_p$ such that $f_Z(z') > 0$ and $x' = g(z')$. Assumption \textit{(ii)} implies that there exists a path $z(t)$ in $\{ z \in \mathrm{Supp}_{>0}(Z) : g(z)_C = c \}$ connecting $z$ and $z'$. Formally, $z(\cdot)$ is a continuous map from $[0,1]$ to $[0,\infty)^p \setminus \boldsymbol{0}_p$ such that $z(0) = z$, $z(1) = z'$, $z(t) \in \mathrm{Supp}_{>0}(Z)$, and $g(z(t))_C = c$ for all $t \in [0,1]$.

Define $x(t) = g(z(t))$ for $t \in [0,1]$. For any $t \in [0,1]$, since $z(t) \in \mathrm{Supp}{>0}(Z)$, by \Cref{lemma: support z to x}, we have $x(t) \in \mathrm{Supp}_{>0}(X)$. Additionally, given that $g(z(t))_C = c$, it follows that $x(t) \in \mathrm{Supp}_{>0}(X \mid X_C = c)$. To demonstrate that $\mathrm{Supp}_{>0}(X \mid X_C = c)$ is path-connected, it is sufficient to show that $x(\cdot)$ is continuous on $[0,1]$. This holds because $x(\cdot) = g \circ z(\cdot)$ is the composition of two continuous functions. Therefore, $x$ and $x'$ are connected via the path $x(t)$, which proves that $\mathrm{Supp}_{>0}(X \mid X_C = c)$ is path-connected in $\{ x \in \mathbb{R}^p : \sum_{j=1}^p x_j = 1,\, x_C = c \}$.

\paragraph{Step 2: Openness of conditional support}
Now, we will show that $\mathrm{Supp}_{>0}(X \mid X_C=c)$ is open in $\left\{x\in \Delta^{p-1}:\ x_C = c\right\}$. To show that, consider any point $x\in \mathrm{Supp}_{>0}(X\mid X_C=c)$. By \Cref{lemma: support z to x}, there exists $z\in \mathrm{Supp}_{>0}(Z)$ such that $x=g(z)$. By assumption \textit{(ii)}, since $f_Z$ is continuous in $[0,\infty)^p\setminus\{\boldsymbol{0}_p\}$, there exists an open ball $B_{\epsilon}(z)$ (open in $[0,\infty)^p\setminus\{\boldsymbol{0}_p\}$) such that $B_{\epsilon}(z) \subseteq \mathrm{Supp}_{>0}(Z)$. Define $$B_{\frac{\epsilon}{t}}(x) = \left\{x^\prime \in \Delta^{p-1}:\ x^\prime_C = c,\ ||x-x^\prime|| < \frac{\epsilon}{t}\right\}, \quad t = \sum_{j=1}^p z_j.$$ 
It is enough to show that $B_{\frac{\epsilon}{t}}(x) \subseteq \mathrm{Supp}_{>0}(X \mid X_C=c)$. Consider some $x^\prime \in B_{\frac{\epsilon}{t}}(x)$. Define $z^\prime = t \cdot x^\prime$. Since $||z-z^\prime|| = ||tx-tx^\prime|| = t ||x-x^\prime|| < t\cdot \frac{\epsilon}{t} = \epsilon$, $z^\prime \in B_{\epsilon}(z)$ and hence $z^\prime \in \mathrm{Supp}_{>0}(Z)$. Therefore, by \Cref{lemma: support z to x}, $x^\prime\in \mathrm{Supp}_{>0}(X)$. Finally, since $x^\prime_C = c$, $x^\prime\in \mathrm{Supp}_{>0}(X\mid X_C = c)$. This shows that $\mathrm{Supp}_{>0}(X\mid X_C = c)$ is open in $\left\{x\in \Delta^{p-1}:\ x_C = c\right\}$.

\paragraph{Step 3: Proving uniqueness of equivalence class of conditional support}

Now that we have shown $\mathrm{Supp}_{>0}(X\mid X_C=c)$ is open and path connected in 
\sloppy$\left\{x\in \mathbb{R}^{p}: \sum_{j\in [p]}x_{\{j\}}=1,\ x_C = c\right\}$ for all $C\subseteq [p]$, and all $c\in \mathbb{R}^{|C|}$ such that there exists $x\in \mathrm{Supp}_{>0}(X)$ with $x_C=x$, we shall show that for any $A,B \subset [p]$ such that $|A\cap B| =1$ and $|B|=2$ and for all $c\in \mathbb{R}^{p-|A\cup B|}$ such that $\sum_{j\in [p-|A\cup B|]}c_{\{j\}} < 1$, $X_{A\cup B}\mid X_{(A\cup B)^\mathsf{c}}=c$ has only one equivalence class through $A\setminus B$ and $B\setminus A$. Note that we only need to consider 2-sized sets $B$ with one intersection with $A$ because we want to ``combine" the sets $\{i,j\}$ for $\{i,j\} \in \I$. Consider any two points $x$ and $x^\prime$ in $\mathrm{Supp}_{>0}(X)$ such that $x_{(A\cup B)^\mathsf{c}}= x^\prime_{(A\cup B)^\mathsf{c}}= c$. We shall show that $x$ and $x^\prime$ are equivalent through $A\setminus B$ and $B\setminus A$. Since $\mathrm{Supp}_{>0}(X\mid X_{(A\cup B)^\mathsf{c}}=c)$ is path connected in $\left\{x\in \mathbb{R}^{p}: \sum_{j\in [p]}x_{\{j\}}=1,\ x_C = c\right\}$, there exists a path $x(t)$ in the latter set connecting $x$ and $x^\prime$. Since $\mathrm{Supp}_{>0}(X\mid X_{(A\cup B)^\mathsf{c}}=c)$ is open in $\left\{x\in \mathbb{R}^{p}: \sum_{j\in [p]}x_{\{j\}}=1,\ x_C = c\right\}$, there exist open balls $B_{\epsilon_t}(x(t))$ (open in $\left\{x\in \mathbb{R}^{p}: \sum_{j\in [p]}x_{\{j\}}=1,\ x_C = c\right\}$) for all $t\in [0,1]$, such that $B_{\epsilon_t}(x(t))\subseteq \mathrm{Supp}_{>0}(X\mid X_{(A\cup B)^\mathsf{c}}=c)$ for all $t\in [0,1]$. Here the balls are defined with respect to the standard euclidean norm (or the $L^2$ norm). Now consider the sets $B_{\frac{\epsilon_t}{\sqrt{2}}}(x(t))$ for $t\in [0,1]$. $\bigcup_{t\in [0,1]}B_{\frac{\epsilon_t}{\sqrt{2}}}(x(t))$ is an open cover of $\{x(t): t\in [0,1]\}$. Also, since $[0,1]$ is compact and $x(t)$ is continuous, $\{x(t): \ t\in [0,1]\}$ is compact in $\mathrm{Supp}_{>0}(X\mid X_{(A\cup B)^\mathsf{c}}=c)$. This implies there exists a finite subcover $B_1, \ldots, B_n$ of $\{x(t):\ t\in[0,1]\}$. Call the respective centers $x_{t_1}, \ldots, x_{t_n}$. Remember that for all $t \in \{t_1, \dots, t_n\}$, the ball centered at $x(t)$ has $L^2$ radius $\frac{\epsilon_t}{\sqrt{2}}$. WLOG assume $B_1$ is centered at $x$ and $B_n$ is centered at $x^\prime$. Call the rest of the centers $y_1$, \ldots, $y_{n-2}$. If we can show that the centers of adjacent open balls are equivalent, then we will have shown $x$ and $x^\prime$ are equivalent. To show that, consider any two balls $B_j$ and $B_{j+1}$ with centers $w$ and $w^\prime$, respectively. Since $B_j$ and $B_{j+1}$ are adjacent open balls, they must have a non-empty intersection. Consider a particular point $w^\star \in B_j\cap B_{j+1}$. If we can show $w$ and $w^\star$ are equivalent, we can use the same argument to show $w^\star$ and $w^\prime$ are equivalent, thereby showing the equivalence of $w$ and $w^\prime$, which would complete the proof. Since $$\left(\sum_{j\in A\setminus B}w_{\{j\}} + w_{B\setminus A}^\star\right) + \left(\sum_{j\in A\setminus B}w_{\{j\}}^\star + w_{B\setminus A}\right) \leq \sum_{j\in [p]} w_{\{j\}} + \sum_{j\in [p]}w_{\{j\}}^\star = 2,$$
either $\sum_{j\in A\setminus B}w_{\{j\}} + w_{B\setminus A}^\star \leq 1$ or $\sum_{j\in A\setminus B}w_{\{j\}}^\star + w_{B\setminus A}\leq 1$. 

First, consider the case when $\sum_{j\in A\setminus B}w_{\{j\}} + w_{B\setminus A}^\star \leq 1$. Introduce the notation $w = (w_{A\setminus B}, w_{A\cap B}, w_{B\setminus A},c)$ and $w^\prime = (w_{A\setminus B}^\prime, w_{A\cap B}^\prime, w_{B\setminus A}^\prime,c)$ and define 
$$w^\dag = \left(w_{A\setminus B}, 1-\sum_{j\in A\setminus B}w_{\{j\}}-w^\star_{B\setminus A}-\sum_{j\in [p-|A\cup B|]}c_j, w_{B\setminus A}^\star, c\right).$$
Then the $L^2$ distance between $w$ and $w^\dag$ is as follows
\begin{align*}
||w-w^\dag||_2 &= \sqrt{\sum_{j\in [p]} \left(w_{\{j\}}-w_{\{j\}}^\dag\right)^2}\\
&= \sqrt{\left(w_{A\cap B} - \left(1-\sum_{j\in A\setminus B} w_{\{j\}} - w_{B\setminus A} ^\star - \sum_{j\in [p-|A\cap B|]}c_j\right)\right)^2 + \left(w_{B\setminus A} - w_{B\setminus A}^\star\right)^2 }\\
&= \sqrt{2 \cdot \left(w_{B\setminus A} - w^\star_{B\setminus A}\right)^2}\\
&\leq \sqrt{2 \cdot \sum_{j\in [p]} \left(w_{\{j\}}-w_{\{j\}}^\star\right)^2}\\
&\leq \sqrt{2\cdot \frac{\epsilon_{t_j}^2}{2}} \quad \left(\text{Since }w^\star \in B_{\frac{\epsilon}{\sqrt{2}}}(w)\right)\\
&= \epsilon_{t_j}
\end{align*}
Since $||w-w^\dag||_2 \leq \epsilon_{t_j}$, $w^\dag \in B_{\epsilon_{t_j}}(x(t_j))$ which implies $f_X(w^\dag)>0$ or in other words, $w^\dag \in \mathrm{Supp}_{>0}(X\mid X_C=c)$. Since $w$ and $w^\dag$ are coordinatewise connected through $A\setminus B$ and $w^\dag$ and $w^\star$ are connected through $B\setminus A$, $w$ and $w^\star$ are equivalent through $A\setminus B$ and $B\setminus A$.

Now, consider the other case: $\sum_{j\in A\setminus B}w_{\{j\}}^\star + w_{B\setminus A}\leq 1$. In this case, define $w^\dag$ as follows
$$w^\dag = \left(w^\star_{A\setminus B}, 1-\sum_{j\in A\setminus B}w^\star_{\{j\}}-w_{B\setminus A}-\sum_{j\in [p-|A\cup B|]}c_j, w_{B\setminus A}, c\right).$$
With this $w^\dag$ the $L^2$ distance can be bounded as
\begin{align*}
||w-w^\dag||_2 &= \sqrt{\sum_{j\in [p]} \left(w_{\{j\}}-w_{\{j\}}^\dag\right)^2}\\
&= \sqrt{\sum_{j\in A\setminus B}\left(w_{\{j\}} - w_{\{j\}}^\star\right)^2 + \left(w_{A\cap B} - \left(1-\sum_{j\in A\setminus B} w_{\{j\}}^\star - w_{B\setminus A} - \sum_{j\in [p-|A\cap B|]}c_j\right)\right)^2 }\\
&= \sqrt{\sum_{j\in A\setminus B}\left(w_{\{j\}} - w_{\{j\}}^\star\right)^2 + \left(w_{A\cap B}-w_{A\cap B}^\star + w_{B\setminus A} - w_{B\setminus A}^\star\right)^2}\\
&\leq \sqrt{\sum_{j\in A\setminus B}\left(w_{\{j\}} - w_{\{j\}}^\star\right)^2 + 2\left( w_{A\cap B}-w_{A\cap B}^\star \right)^2 + 2\left( w_{B\setminus A} - w_{B\setminus A}^\star\right)^2}\\
&\leq \sqrt{2 \cdot \sum_{j\in [p]} \left(w_{\{j\}}-w_{\{j\}}^\star\right)^2}\\
&\leq \sqrt{2\cdot \frac{\epsilon_{t_j}^2}{2}} \quad \left(\text{Since }w^\star \in B_{\frac{\epsilon}{\sqrt{2}}}(w)\right)\\
&= \epsilon_{t_j}
\end{align*}
Using the same argument as in the previous case, $||w-w^\dag||_2 \leq \epsilon_{t_j}$ implies $w^\dag \in B_{\epsilon_{t_j}}(x(t_j))$ which implies $w^\dag \in \mathrm{Supp}_{>0}(X\mid X_C=c)$. Further, since $w$ and $w^\dag$ are coordinatewise connected through $B\setminus A$ and $w^\dag$ and $w^\star$ are connected through $A\setminus B$, $w$ and $w^\star$ are equivalent through $A\setminus B$ and $B\setminus A$. So, in both the cases, 
$$w \text{ and } w^\star \text{are equivalent through }A\setminus B \text{ and } B\setminus A.$$
Using a similar construction, we can show that $w^\star$ and $w^\prime$ are equivalent through $A\setminus B$ and $B\setminus A$. Applying this argument recursively, we can show that all the centers of the balls $B_1, \dots, B_n$ are equivalent, thereby proving $x$ and $x^\prime$ are equivalent (through $A\setminus B$ and $B\setminus A$). Since $x$ and $x^\prime$ are chosen to be any two points in $\mathrm{Supp}_{>0}(X\mid X_C=c)$, this proves that $\mathrm{Supp}_{>0}(X\mid X_C=c)$ has only one equivalence class. This proves the conditions of \Cref{thm: uniqueness of MB} which completes the proof of the Corollary.

\subsection{Proof of Corollary \ref{cor:factor covariates}}\label{appendix:proof of factor covariates}
First, consider the case when $F_k \subseteq \S_F$. Suppose, for the sake of contradiction, there exists a nontrivial Markov boundary $\M$ i.e., $|\M^\mathsf{c} \cap F_k| \geq 2$. Then there exists $i\neq j$ such that $\{i,j\} \subseteq \M^\mathsf{c} \cap F_k$. By the weak union property~\citep{PEARLbook}, $Y\indep X_{\{i,j\}}\mid X_{\{i,j\}^\mathsf{c}}$. Therefore $i\notin \S_F$ and $j\notin \S_F$ which implies $|F_k \setminus \S_F|\geq 2$ which contradicts $F_k\subseteq \S_F$.

Now, in the other case, consider some nontrivial Markov boundary $\M$. To show $\M = \S_F \cap F_k$ we will follow the following steps:
\begin{enumerate}
    \item $\Delta$ includes sets of sets of the following form:
    \begin{enumerate}
        \item $\{A,B\}$ where both $A,B\subseteq F_k$ for $|A\cap B|\geq1$.
        \item $\{A,B\}$ where $A\subseteq F_{k}$ and $B\subseteq F_{k}^\mathsf{c}$.
    \end{enumerate}
    \item For any Markov boundary $\M$ with $|\M^\mathsf{c}\cap F_k|\geq 2$, $\M_k \supseteq \S_F\cap F_k$.
    \item If $|\M^\mathsf{c}\cap F_k|\geq 2$, $\M_k = \S_F\cap F_k$.
\end{enumerate}
\paragraph{Step 1: $\Delta$ contains sets of form $\boldsymbol{\{A,B\}}$, $\boldsymbol{A,B \subseteq F_k}$}

Consider two subsets $A$ and $B$ of $F_k$ such that $|A\cap B| \geq 1$. Given $X_{(A\cup B)^\mathsf{c}} = c$, let us consider the support of $X_{A\cup B}$. If $\sum_{j\in (A\cup B)^\mathsf{c} \cap F_k} c_j = 1$, then the support of $X_{A\cup B}\mid X_{(A\cup B)^\mathsf{c}}=c$ only includes the $\boldsymbol{0}_{|A\cup B|}$ vector. So, it trivially contains a single equivalence class. Conversely, when $\sum_{j\in (A\cup B)^\mathsf{c} \cap F_k} c_j = 0$, the support of $X_{A\cup B}\mid X_{(A\cup B)^\mathsf{c}}$ contains all the canonical vectors of length $|A\cup B|$. This is true because the rows of $X$ are randomly selected such that all treatment combinations have positive probability. Consider any two vectors $x$ and $y$ in the support of $X_{A\cup B}\mid X_{(A\cup B)^\mathsf{c}}$. Note that, since both vectors $x$ and $y$ have a single non zero entry (=1), there can only be one of the two following cases, with the roles of $x$ and $y$ being reversible:
\begin{enumerate}[label=(\alph*)]
    \item $x_{A\setminus B}= y_{A\setminus B}=\boldsymbol{0}_{|A\setminus B|}$ or $x_{B\setminus A}= y_{B\setminus A}=\boldsymbol{0}_{|B\setminus A|}.$
    \item $\sum_{j \in A\setminus B}x_{\{j\}}= 1, y_{A\setminus B}=\boldsymbol{0}_{|A\setminus B|}$ and $x_{B\setminus A}=\boldsymbol{0}_{|B\setminus A|}, \sum_{j \in B\setminus A}y_{\{j\}}=1.$
\end{enumerate}

In case (a), the two vectors $x$ and $y$ are connected through $A\setminus B$ and $B\setminus A$ respectively. This shows that any two vectors in the support of $X_{A\cup B}\mid X_{(A\cup B)^\mathsf{c}}$ are coordinatewise connected through $A\setminus B$ and $B\setminus A$, thereby proving that support of $X_{A\cup B}\mid X_{(A\cup B)^\mathsf{c}}$ has only one equivalence class (through $A\setminus B$ and $B\setminus A$). 

In case (b), consider a canonical vector $z$ with $\sum_{j\in A\cap B}z_{\{j\}}=1$. If $|A\cap B|>1$, then $z$ could be any canonical vector with $\sum_{j\in A\cap B}z_{\{j\}}=1$ and if $|A\cap B|=1$, then $z$ is the canonical vector with $z_{A\cap B}=1$ and zeros in all other indices. Then, $x_{B\setminus A}=z_{B\setminus A}$ and $y_{A\setminus B} = z_{A\setminus B}$. Note that $z$ belongs to the support of $X_{A\cup B}\mid X_{(A\cup B)^\mathsf{c}}$. This shows that $x$ and $y$ are equivalent, which shows that support of $X_{A\cup B}\mid X_{(A\cup B)^\mathsf{c}}$ has only one equivalence class (through $A\setminus B$ and $B\setminus A$). 

Therefore, in both cases (a) and (b), the support of $X_{A\cup B}\mid X_{(A\cup B)^\mathsf{c}}$ has only one equivalence class proving $\Delta$ contains all sets of the form $\{A,B\}$ where $|A\cap B| \geq 1$ and $A,B\subseteq F_k$.
    
\paragraph{Step 2: $\Delta$ contains sets of the form $\boldsymbol{\{A,B\},\ A\subseteq F_k,\ B\subseteq F_k^\mathsf{c}}$} 
Consider $A\subseteq F_{k}$ and $B\subseteq F_{k}^\mathsf{c}$. Without loss of generality, we can assume that both $A$ and $B$ are non-empty. If both $A$ and $B$ are singletons, then, the support of $X_{A\cup B}\mid X_{(A\cup B)^\mathsf{c}}$ only has one element which is determined by $X_{(A\cup B)^\mathsf{c}}$. Consider the case when one of $A$ and $B$ is a singleton. Suppose, $A$ is a singleton. Then for all points in the support of $X_{A\cup B}\mid X_{(A\cup B)^\mathsf{c}}$ will be either have $x_{A}=0$ or $x_{A}=1$. In both the cases, any two points in the support is coordinatewise connected through $A\setminus B$ $(=A)$. The case when $B$ is a singleton is same as then $B$ is a proper subset of some other factor ($k^\prime$) and $A$ is a subset of $F_{k^\prime}^\mathsf{c}$. 

So, the only case left is when both $|A|\geq 2$ and $|B|\geq 2$. Consider any point $x$ in the support of $X_{A\cup B}\mid X_{(A\cup B)^\mathsf{c}} = c$ for some vector $c$ in the support of $X_{(A\cup B)^\mathsf{c}}$. If $\sum_{j\in A^\mathsf{c}\cap F_{k}}x_{\{j\}}=1$, then for all points in the support, $x_{A}=\boldsymbol{0}_{|A|}$ making the support coordinatewise connected through $A\setminus B$ $(=A)$. If $\sum_{j\in A^\mathsf{c}\cap F_{k}}x_{\{j\}}=0$, then for any point $x$ in the support $x_A = e_i$ for some $i$. Here $e_i$ represent canonical vectors of length $|A|$. Consider any two points $x$ and $y$ in the support of $X_{A\cup B}\mid X_{(A\cup B)^\mathsf{c}} = c$. Define $z = (x_A, y_B)$. Since $z$ satisfies the criteria $\sum_{j\in F_{k}}z_{\{j\}} = 1$, $z$ lies in the support. $x$ and $z$ are coordinatewise connected through $A\setminus B$ $(=A)$ and $y$ and $z$ are coordinatewise connected through $B\setminus A$ $(=B)$. This implies $x$ and $y$ are equivalent which proves the support of $X_{A\cup B}\mid X_{(A\cup B)^\mathsf{c}}$ has only one equivalence class. Therefore $\Delta$ contains all sets of the form $\{A,B\}$ such that $A\subseteq F_{k}$ and $B\subseteq F_{k^\mathsf{c}}$. 

\paragraph{Step 3: For $\boldsymbol{\M}$ with $\boldsymbol{|\M^\mathsf{c}\cap F_k|\geq 2}$, $\boldsymbol{\M_k\supseteq \S_F\cap F_k}$}
Consider any Markov boundary such that $|\M^\mathsf{c}\cap F_k|\geq 2$ and some $i\in \M^\mathsf{c}\cap F_k$. Since $|\M^\mathsf{c}\cap F_k|\geq 2$, there exists some $j\in (\M^\mathsf{c}\cap F_k)\setminus \{i\}$. By the weak union property, since $\{i,j\}\subseteq \M^\mathsf{c}$, $Y\indep X_{\M^\mathsf{c}}\mid X_{\M}$ implies $Y\indep X_{\{i,j\}}\mid X_{\{i,j\}^\mathsf{c}}$. This means $i\in \S_F^\mathsf{c}\cap F_k$. Therefore, $\M^\mathsf{c}\cap F_k\subseteq \S_F^\mathsf{c}\cap F_k$ which proves $\M_k = \M\cap F_k\supseteq \S_F\cap F_k$.

\paragraph{Step 4: For $\boldsymbol{\M}$ with $\boldsymbol{|\M^\mathsf{c}\cap F_k|\geq 2}$, $\boldsymbol{\M_k = \S_F\cap F_k}$}
To show this, it is enough to show that $\M_k\not\supset \S_F\cap F_k$. We will first begin by showing that $$Y\indep X_{\S_F^\mathsf{c}\cap F_k}\mid X_{(\S_F\cap F_k) \cup F_k^\mathsf{c} }.$$
This is equivalent to showing $\S_F^\mathsf{c}\cap F_k\in (\Delta\circ)^l\I$ for some finite $l$. Consider the sets in $\I$ which have both the elements in $F_k$. We call the collection of those sets $\I_k$. Formally,
$$\I_k = \I\cap \{\{i,j\}:\ i,j\in F_k\}.$$
Then $\S_F^\mathsf{c}\cap F_k $ can be written as the union of all the sets in $\I_k$. Consider any $A, B\in \I_k$. By step 1, if $|A\cap B|\geq 1$, $\{A,B\}\in \Delta$. Since all the elements in $\I_k$ are assumed to be connected by a sequence of true bivariate conditional hypotheses (of the form $\H_{i,j}:\ \{i,j\}\subseteq F_k$), we have $\S^\mathsf{c}\cap F_k \in (\Delta \circ)^{l_k}\I_k$ where $l_k=|\I_k|$. Therefore, $$Y\indep X_{\S_F^\mathsf{c}\cap F_k}\mid X_{(\S_F\cap F_k) \cup F_k^\mathsf{c} }.$$
Now, $\M$ is a Markov boundary implies $Y\indep X_{\M^\mathsf{c}}\mid X_{\M}$ which further implies, by the weak union property, 
$$Y\indep X_{\M^\mathsf{c}\cap F_k^\mathsf{c}}\mid X_{\M\cup F_k}.$$
Define $A = \S_F^\mathsf{c}\cap F_k$ and $B = \M^\mathsf{c}\cap F_k^\mathsf{c}$. Since $A$ and $B$ satisfy the form of sets considered in Step 2, $\{A,B\}\in \Delta$. Therefore $Y\indep X_{A\cup B} \mid X_{A^\mathsf{c} \cap B^\mathsf{c}}$. Putting the expressions of $A$ and $B$ we get,
$$Y\indep X_{(\S_F^\mathsf{c}\cap F_k) \cup (\M^\mathsf{c}\cap F_k^\mathsf{c})}\mid X_{(\S_F\cap F_k) \cup (\M\cup F_k^\mathsf{c})}.$$
So, $(\S_F\cap F_k) \cup (\M\cup F_k^\mathsf{c})$ is a Markov blanket. If $\S_F\cap F_k \subsetneq \M\cap F_k$, then $(\S_F\cap F_k) \cup (\M\cup F_k^\mathsf{c}) \subsetneq \M$ which is a contradiction to the definition of Markov boundary. Therefore $\S_F\cap F_k = \M\cap F_k$. This completes the proof of Step 4 and hence, \Cref{cor:factor covariates}.

\subsection{Proof of Theorem~\ref{thm: validity of sparse covariates conditioning method}}\label{sec: proof of theorem validity of sparse covariates}
If $\S = [p]$, $Y\nindep X_{\{i,j\}}\mid X_{\{i,j\}^\mathsf{c}}$ for all $i\neq j$. In particular, $Y \nindep X_{\{i,j\}} \mid X_{\{i,j\}^\mathsf{c}}$ for all $\{i,j\}\subseteq \D$. Therefore, $\S_\D = \D$ which proves first part of the Theorem. For the other case, first note that $\S_\D \supseteq \S \cap \D$. When $\S^\mathsf{c}\cap \D = \emptyset$, $\D \subseteq \S$ which implies $\S_\D \subseteq \S \cap \D$. Combining them we get, $\S_\D = \S \cap \D$.

Now consider the case $|\S^\mathsf{c} \cap \D| > 1$. For the sake of contradiction, suppose there exists $i \in \S_\D \setminus \S$. Since $|\S^\mathsf{c} \cap \D| \geq 2$, there exists $j\neq i$ such that $j\in \D \setminus \S$. Therefore $\{i, j\} \subseteq \S^\mathsf{c}$. Since the conditions of \Cref{thm: uniqueness of MB} are satisfied in this case, we have $\S = \M$ and hence, $Y\indep X_{\S^\mathsf{c}}\mid X_{\S}$. By the weak union property~\citep{PEARLbook} $Y\indep X_{\{i,j\}}\mid X_{\{i,j\}^\mathsf{c}}$. This implies $i\notin \S_\D$ which is a contradiction. Therefore, $\S_\D\subseteq \S\cap \D$. Since we already showed that $\S_\D \supseteq \S \cap \D$, $\S_\D = \S \cap \D$. 

Finally, as we already mentioned, the conditions of \Cref{thm: uniqueness of MB} are satisfied here. So, $\S = \M$. Combining everything, $\S_\D = \S \cap \D = \M \cap \D$ which proves the Theorem.

\subsection{Proof of Theorem~\ref{thm: FWER control - Bonferroni}}\label{appendix: proof of FWER-Bonferroni}
Since we are considering Bonferroni's p-values after $h-1$ rejections (call the corresponding rejection set $\hat{\S}$), we have the following p-values to consider:
$$P_{j}^\mathrm{B}(\hat{\S})=(p-\overline{s})P_{(\overline{s}-h+1),j}(\hat{\S}) \text{ for all }j\in \hat{\S}^\mathsf{c}.$$
Now, these p-values are valid for the null variables $j$ (i.e., $j \in \S^\mathsf{c}$) if there are at least $p-\overline{s}-1$ null p-values in the vector $P_{\hat{\S}^\mathsf{c}\setminus \{j\} , j}$. If there are no null variables in the rejection set after the $h-1^\text{th}$ step---$\hat{\S}$, then there are indeed $p-s-1 (\geq p-\overline{s}-1)$ null p-values in this vector which is ensured by our assumption. 

So, $P_{j}^\mathrm{B}(\hat{\S})$ are valid for all null $j$ if and only if $\hat{\S}$ does not contain any null variables. Coming back to the proof of validity of the Bonferroni--Holm method (described in \Cref{Algorithm: adaptive-Bonferroni--Holm}), this method introduces an order among the hypotheses $\H_j: j\in [p]$ which is the order in which each column is considered for rejection (if we were to reject all the variables). Let the hypotheses arranged according to this order be represented as $\H_{(1)}, \H_{(2)}, \dots, \H_{(p)}$. Suppose $\H_{(h)}$ is the first falsely rejected null hypothesis and let us denote the corresponding variable as $j^\star$. Since $j^\star$ is a null variable and all the $h-1$ variables corresponding to the hypotheses $\H_{(1)}, \H_{(2)}, \dots, \H_{(h-1)}$ which were rejected prior to it were non-null, Bonferroni's p-value $P_{j^\star}^\mathrm{B}(\hat{\S})$ 
is superuniform. 
So, we have, 
$$P_{j^\star}^\mathrm{B}(\hat{\S}) \succeq \text{Uniform}[0,1].$$
Let us now consider the cutoff values. Since only non-null variables are rejected before step $h$, $h<s+1$. This implies $p-h+1 \geq p-s$. Therefore, $j^\star $ is rejected implies $P_{j^\star}^\mathrm{B}(\hat{\S})\leq \dfrac{\alpha}{p-h+1}\leq \dfrac{\alpha}{p-s}$. Using these results, we can upper bound the FWER as follows:
\begin{align*}
    \mathrm{FWER} = \mathbb P(V > 1) &= \P\left(\bigcup_{j\in \S^\mathsf{c}}\{j\text{ is rejected}\}\right) =  \P\left(P_{j^\star}^\mathrm{B}(\hat{\S}) \leq \dfrac{\alpha}{p_0}\right)\\
    &=  \P\left(\bigcup_{j\in \S^\mathsf{c}} \left\{P_{j}^\mathrm{B}(\hat{\S}) \leq \dfrac{\alpha}{p_0}\right\}\right)\leq \sum_{j\in \S^\mathsf{c}} \mathbb P\left(P_{j}^\mathrm{B}(\hat{\S}) \leq \dfrac{\alpha}{p_0}\right)\\
    &\leq \sum_{j\in \S^\mathsf{c}} \frac{\alpha}{p-s} = \alpha.
\end{align*}

\subsection{Proof of Theorem~\ref{thm: FWER control}}\label{appendix: proof of FWER-Simes}

The proof of this theorem follows the exact same structure as the proof of \Cref{thm: FWER control - Bonferroni}. The only difference is that, we are using Simes' p-values here instead of Bonferroni's p-values. For any column $j$, after $h-1$ rejections and the corresponding rejection set $\hat{\S}$, we have the following p-values to consider:
$$P_j^\mathrm{S}(\hat{\S}) := \min _{\overline{s}-h+1 \leq i \leq p-h}\left\{\frac{p-\overline{s}}{i-\overline{s}+h} P_{(i),j}(\hat{\S})\right\}
$$
For a null variable $j$, this p-value is valid when there are at least $p-\overline{s}-1$ null p-values in the vector $P_{[p]\setminus \{\hat{\S} \cup \{j\}\} , j}$ and $P_{[p]\setminus \{\hat{\S} \cup \{j\}\} , j}$ satisfy the PRDS assumption (see \Cref{appendix: proof of FDR control} for definition) which is true by the assumptions of this Theorem. Therefore, we only need to show that there are at least $p-\overline{s}-1$ p-values in $P_{[p]\setminus \{\hat{\S} \cup \{j\}\} , j}$. This is exactly what we have showed in \Cref{appendix: proof of FWER-Bonferroni}. So, using a similar argument, we can show that FWER $\leq \alpha$.

\subsection{Proof of Theorem~\ref{thm: FDR control}}\label{appendix: proof of FDR control}
Simes p-values are defined as follows:
$$
P_j^\mathrm{S} = \min_{s\leq i \leq p-1} \frac{p-\overline{s}}{i - \overline{s} + 1} P_{(i),j}.
$$
Simes p-values are superuniform under the partial conjunction null assuming the base p-values are PRDS~\citep[Theorem 1]{benjamini2008screening}. Formally, the PRDS assumption, defined in \cite{benjaminiyekutieli}, is as follows: 

\begin{Definition}[Positive Regression Dependency on the Subset (PRDS)]~\citep{benjaminiyekutieli}
  We say that p-values $ P^\prime := (P^\prime_1, P^\prime_2, \dots, P^\prime_m)$ satisfy the $\mathrm{PRDS}$ property on a subset $\mathbf{I}_0 \subseteq [m]$ if, for any increasing set $\mathbf{D}\in [0,1]^m$ and each $j \in \mathbf{I}_0$, the probability $\mathbb{P}(P^\prime \in \mathbf{D} \mid  P^\prime_j = x)$ is non-decreasing in $x$. Here, $\mathbf{D}$ is called a non-decreasing set if $x \in \mathbf{D}$ and $y \geq x$ implies that $y \in \mathbf{D}$ as well. We simply say that the p-values $P^\prime$ are $\mathrm{PRDS}$ if the p-values satisfy are $\mathrm{PRDS}$ on the the entire set $[m]$.
\end{Definition}

To prove the validity of our method, it is enough to show that the PRDS assumption on the base p-values implies that the Simes p-values are also PRDS~\citep[Theorem 1.2]{benjaminiyekutieli}. We will do so using \citet[Theorem 3.1]{TPH_under_dependence}. Specifically, we will be examining conditions outlined in item 1 of the Theorem---we restate a special case of their Theorem below in our notation, but first we need a few new definitions. 

\begin{Definition}[\cite{blanchard2008two}]
A \textbf{threshold collection} $\Delta$ is function
    $$\Delta:(j,r)\in [m]\times \mathbb{R}^{+} \rightarrow \Delta(j,r)\in \mathbb{R}^{+},$$
    which is non-decreasing in its second input. A \textbf{factored threshold collection} is a threshold collection of the form
    $$\Delta(j,r)=\frac{\alpha w_j\beta(r)}{m}\ \forall (j,r)\in [m]\times \mathbb{R}^{+},$$
    where $w_j$ is a prior weight for the $i^{\text{th}}$ hypothesis and $\beta:\mathbb{R}^{+}\rightarrow \mathbb{R}^{+}$ is a non-decreasing function called a shape function. Given a threshold collection $\Delta$, the $\boldsymbol{\Delta}$-\textbf{thresholding-based multiple testing procedure} at rejection volume $r$ is a multiple testing procedure that has the following rejection set when applied to the vector of p-values $P^{\prime} = \{P^{\prime}_1, \dots, P^{\prime}_m\}$:
    $$L_{\Delta}(r) = \{j \in [m]:\ P^\prime_j \leq \Delta(j, r)\}.$$
\end{Definition}

\begin{Definition}[\cite{TPH_under_dependence}] Consider a multiple testing procedure applied on a set of p-values $\{P^{\prime}_j:\ j\in [m]\}$ corresponding to hypotheses $\{\H_{j}:\ j\in [m]\}$ with the set of indices of rejected hypotheses $\mathcal{R} = \mathcal{R}(P^{\prime}) \subseteq [m]$. \begin{enumerate}
    \item The procedure is called \textbf{self consistent} with respect to threshold collection $\Delta$ if the rejection set $\mathcal{R}$ satisfies $\mathcal{R} \subseteq L_\Delta(|\mathcal{R}|)$ almost surely.
    \item The procedure is \textbf{non-increasing} if $|\mathcal{R}(P^{\prime})|$ is non-increasing in each p-value $P^{\prime}_{j},\ j\in [m]$.
    \item The procedure is said to follow the \textbf{natural monotonicity property} if given a set of rejected hypotheses $\mathcal{R}_1$, decreasing a certain p-value will result in rejecting a superset $\mathcal{R}_2$. 
\end{enumerate}
\end{Definition}
\begin{Definition}[\cite{wright1992adjusted}]
The adjusted p-value for a particular hypothesis within
a collection of hypotheses of a certain multiple testing procedure is the smallest overall significance level at which the multiple testing procedure rejects said particular hypothesis.     
\end{Definition}
When the multiple testing procedure is Benjamini--Hochberg at level $\alpha$~(BH) applied on a set of p-values $\{P^{\prime}_j:\ j\in [m]\}$, the $j^{\text{th}}$ adjusted p-value is
$$\max_{k\in [m]:\ P^\prime_k \geq P^\prime_j}\left\{\frac{P^\prime_k \cdot m}{\text{rank}(P^\prime_{k}; P^{\prime})}\right\},$$
where $\text{rank}(P^\prime_{k}; P^{\prime})$ is the rank of $P^\prime_{k}$ in $P^{\prime}$~\citep{yekutieli1999resampling}.

\begin{Theorem}[\cite{TPH_under_dependence}, Theorem 3.1]\label{thm: 3.1 Bogomolov}
Consider $M$ base null hypotheses with p-values ${P_1^\prime, \ldots, P_M^\prime}$, grouped into $m$ groups $A_j$, for $j \in [m]$, with $|A_j| = m_j$.
Let $\mathcal{P}$ be a self-consistent multiple testing procedure with respect to thresholds of the form $\Delta(j,r) = \alpha r/m\ \forall\ j \in [m]$. Consider a family of partial conjunction hypotheses $\{\H_{0,j}^{r_j/m_j},\ j \in [m]\}$ for the $m$ groups of p-values. For each $j\in [m]$, let $\mathcal{P}_j$ be a multiple testing procedure satisfying the natural monotonicity property and for each $j \in [m]$, the partial conjunction p-value $P_j^{r_j/m_j}$ is connected to $\mathcal{P}_j$ in the following way:
$$P_j^{r_j/m_j} = \max\left\{P_j^{1/(m_j - r_j + 1)}[A]:\ A\subseteq A_j, |A|= m_j - r_j + 1\right\},$$
where $P_j^{1/(m_j - r_j + 1)}[A]$ is the minimum adjusted p-value according to $\mathcal{P}_j$ applied on the p-values with indices in $A$. Suppose the procedure $\mathcal{P}$, the procedures $\{\mathcal{P}_j, j\in [m]\}$, and the base p-values satisfy the following conditions:
\begin{enumerate}
    \item[(a)] The p-values $\{P_1^\prime, \ldots, P_M^\prime\}$ are $\mathrm{PRDS}$ on the set of true base null hypotheses.
    \item[(b)] The multiple testing procedure $\mathcal{P}$ is non-increasing.
    \item[(c)] For each $j \in [m]$, the multiple testing procedure $\mathcal{P}_j$ is self-consistent with respect to thresholds of form $\Delta(i,r) = \alpha r/m_j$.
\end{enumerate}    
Then, if $\mathbf{I}_0$ denotes the set of true partial conjunction hypotheses $\H_{0,j}^{r_j/m_j}$ and $\mathcal{R}$ denotes the set of rejections when the procedure $\mathcal{P}$ is applied on $\{P_j^{r_j/m_j}, j\in [m]\}$ at level $\alpha$, the $\mathrm{FDR}$ of the resulting procedure
$$\mathbb{E}\left[\frac{\left|\mathcal{R}\cap \mathbf{I}_0\right|}{\left|\mathcal{R}\right|}\right]\leq \alpha \frac{|\mathbf{I}_0|}{m} \leq \alpha.$$
\end{Theorem}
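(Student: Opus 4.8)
The plan is to prove FDR control by the classical decomposition into per-hypothesis contributions, but combined with a \emph{surrogate reduction} that circumvents the fact that the partial conjunction p-values $P_j^{r_j/m_j}$ need not inherit the $\mathrm{PRDS}$ property of the base p-values. First I would write
\[
\mathbb{E}\left[\frac{|\mathcal{R}\cap\mathbf{I}_0|}{|\mathcal{R}|\vee 1}\right] = \sum_{j\in\mathbf{I}_0}\mathbb{E}\left[\frac{\mathbf{1}\{j\in\mathcal{R}\}}{|\mathcal{R}|}\right],
\]
with the convention that each summand vanishes on $\{\mathcal{R}=\emptyset\}$, so it suffices to show every term is at most $\alpha/m$; summing over $\mathbf{I}_0$ then yields the claimed bound $\alpha|\mathbf{I}_0|/m\le\alpha$. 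Since $\mathcal{P}$ is self-consistent with respect to $\Delta(j,r)=\alpha r/m$, I have $j\in\mathcal{R}\Rightarrow P_j^{r_j/m_j}\le \alpha|\mathcal{R}|/m$, hence the $j$-th term is bounded by $\mathbb{E}\big[\mathbf{1}\{P_j^{r_j/m_j}\le\alpha|\mathcal{R}|/m\}/|\mathcal{R}|\big]$.

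The key step is the surrogate lower bound. Fix $j\in\mathbf{I}_0$: since $\H_{0,j}^{r_j/m_j}$ is true, at most $r_j-1$ of the $m_j$ base nulls in group $A_j$ are false, so there exists a subset $A_j^\star\subseteq A_j$ of exactly $m_j-r_j+1$ indices that are \emph{all} true base nulls. Because the combination formula writes $P_j^{r_j/m_j}$ as a maximum over all size-$(m_j-r_j+1)$ subsets of the inner statistic $P_j^{1/(m_j-r_j+1)}[\,\cdot\,]$, I obtain the pointwise inequality $P_j^{r_j/m_j}\ge Q_j := P_j^{1/(m_j-r_j+1)}[A_j^\star]$, and therefore $\{P_j^{r_j/m_j}\le t\}\subseteq\{Q_j\le t\}$ for every $t$. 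Here $Q_j$ is the minimum adjusted p-value of the inner self-consistent procedure $\mathcal{P}_j$ run on an all-null, $\mathrm{PRDS}$ collection of base p-values (conditions (a) and (c)); since self-consistency with respect to $\Delta(i,r)=\alpha r/|A_j^\star|$ forces any rejection to satisfy a Simes-type inequality, $Q_j$ is at least the Simes combination of these null p-values, which is superuniform by the Simes inequality under $\mathrm{PRDS}$, so $Q_j$ is superuniform.

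It remains to bound $\mathbb{E}\big[\mathbf{1}\{Q_j\le\alpha|\mathcal{R}|/m\}/|\mathcal{R}|\big]\le\alpha/m$, which I would do with the Blanchard--Roquain and Ramdas et al.\ super-uniformity lemma for self-consistent, non-increasing procedures. To invoke it I need $|\mathcal{R}|$ to be non-increasing in exactly the base p-values that $Q_j$ depends on, namely those indexed by $A_j^\star$. This follows from a two-link monotonicity chain: condition (b) makes $|\mathcal{R}|$ non-increasing in each outer p-value $P_\ell^{r_\ell/m_\ell}$, while each such PC p-value, being a maximum of minimum-adjusted-p-values of procedures $\mathcal{P}_j$ satisfying the natural monotonicity property, is coordinatewise non-decreasing in the base p-values; composing, $|\mathcal{R}|$ is non-increasing in every base p-value, in particular those in $A_j^\star$. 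With $Q_j$ superuniform and $|\mathcal{R}|$ non-increasing in its $\mathrm{PRDS}$ arguments, the super-uniformity lemma---proved by the standard layer-cake/Abel-summation argument that writes the expectation as $\sum_k k^{-1}\mathbb{P}(Q_j\le\alpha k/m,\,|\mathcal{R}|=k)$, telescopes $k^{-1}$, and uses $\mathrm{PRDS}$ to replace the joint probabilities by the marginal bound $\mathbb{P}(Q_j\le\alpha k/m)\le\alpha k/m$---gives the required $\alpha/m$.

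I expect the main obstacle to be precisely the monotonicity transfer in the last paragraph: the genuine content of the theorem, beyond a naive Benjamini--Yekutieli argument, is that the PC p-values may fail to be $\mathrm{PRDS}$, so one cannot apply an FDR-control result to them directly. The whole proof hinges on the surrogate reduction to an all-null subset together with the observation that the PC combination is monotone in the base p-values, which lets the super-uniformity lemma operate at the base-p-value level where $\mathrm{PRDS}$ actually holds. Verifying that the minimum-adjusted-p-value combination dominates Simes and is monotone under natural monotonicity, and checking that the super-uniformity lemma's hypotheses are met by the pair $(Q_j,|\mathcal{R}|)$, are the delicate points requiring care.
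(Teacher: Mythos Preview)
The paper does not prove this theorem: it is quoted (as a special case, in the paper's notation) from \cite{TPH_under_dependence}, Theorem~3.1, and invoked as a black box. The only work in \Cref{appendix: proof of FDR control} is to verify that \Cref{algorithm: Simes-BH-oracle} meets the hypotheses---that BH is self-consistent, non-increasing, and naturally monotone, and that the Simes PC p-value admits the required max-over-subsets representation with $\mathcal{P}_j=\mathrm{BH}$. There is thus no in-paper proof to compare against; you are attempting to reprove the cited result.

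Your outline matches the cited proof's strategy---per-hypothesis FDR decomposition, self-consistency of $\mathcal{P}$ to reach $\{P_j^{r_j/m_j}\le\alpha|\mathcal{R}|/m\}$, and the surrogate bound $P_j^{r_j/m_j}\ge Q_j$ via the max over an all-null subset $A_j^\star$---but the final step has a gap, precisely where you flag it. The Blanchard--Roquain / Ramdas et~al.\ super-uniformity lemma is stated for a \emph{single} p-value $P'_i$ in the $\mathrm{PRDS}$ collection: its proof conditions on $P'_i$ and uses the defining monotonicity $u\mapsto\mathbb{P}(D\mid P'_i=u)$. Your $Q_j$ is a nondecreasing function of several base p-values, and $\mathrm{PRDS}$ on $P'$ does not deliver monotonicity of $\mathbb{P}(\,\cdot\mid Q_j=q)$; marginal super-uniformity of $Q_j$ together with coordinatewise monotonicity of $|\mathcal{R}|$ is not by itself enough to run the telescoping you sketch. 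The cited proof closes this with a second decomposition down to individual base p-values: since $\{Q_j\le t\}$ means $\mathcal{P}_j$ at level $t$ on $A_j^\star$ has nonempty rejection set $S_j$, one bounds $\mathbf{1}\{Q_j\le t\}\le\sum_{i\in A_j^\star}\mathbf{1}\{i\in S_j\}/|S_j|$, uses self-consistency of $\mathcal{P}_j$ (condition~(c)) to control each $\{i\in S_j\}$ by an event on the single null $P'_i$, and then applies the lemma to $P'_i$ with the nonincreasing random denominator $|\mathcal{R}|\cdot|S_j|$. That inner unwrapping is the missing substantive piece.
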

For \Cref{algorithm: Simes-BH-oracle}, the multiple testing procedure $\mathcal{P}$ is $\mathrm{BH}$ and $m=p$. We will be showing shortly that the multiple testing procedures $\mathcal{P}_1, \mathcal{P}_2, \dots, \mathcal{P}_m$ connected to the partial conjunction p-values in \Cref{algorithm: Simes-BH-oracle} are also BH. For all $j\in [p]$, the partial conjunction p-value $P_{j}^{r_j/m_j} = P_{j}^{\mathrm{S}}$ with $r_j =\overline{s}$, $m_j=p-1$, and $A_j = \{(i,j):\ i\in \{j\}^\mathsf{c}\}$. 

We will begin the proof of validity of \Cref{algorithm: Simes-BH-oracle} by proving that BH is self-consistent, non-increasing and has the natural monotonicity property. Specifically, we will consider BH applied on a vector of $\tilde{m}$ p-values $\tilde{P} = (\tilde{P}_1, \tilde{P}_2, \dots, \tilde{P}_{\tilde{m}})$. To show self-consistency, we will show the rejection set
$$\mathcal{R} = \mathcal{R}(\tilde{P}) \subseteq \left\{j\in [m]:\ \tilde{P}_j \leq \frac{\alpha |\mathcal{R}|}{m}\right\}.$$
We begin by noting the following identities about the rejection set of the BH procedure.
$$\mathcal{R} = \{j\in [\tilde{m}]: P_{j} \leq \tilde{P}_{(|\mathcal{R}|)}\}$$ and $$|\mathcal{R}| = \max \left\{j\in [\tilde{m}] : \ \tilde{P}_{(j)}\leq \frac{\alpha j}{\tilde{m}}\right\}.$$
Here for any $j\in [\tilde{m}]$, $\tilde{P}_{(j)}$ denotes the $j^{\text{th}}$ ordered p-value among $\{\tilde{P}_{1}, \tilde{P}_{2}, \dots, \tilde{P}_{\tilde{m}}\}$. In particular, $\tilde{P}_{(|\mathcal{R}|)} \leq \frac{\alpha |\mathcal{R}|}{\tilde{m}}$. This, combined with the first identity implies that for any $j\in \mathcal{R}$, $\tilde{P}_{j} \leq \frac{\alpha |\mathcal{R}|}{\tilde{m}}$ which proves self-consistency of BH. This proves that the methods $\mathcal{P}, \mathcal{P}_1, \mathcal{P}_2, \dots, \mathcal{P}_m$ are all self-consistent.

To show that BH is non-increasing and follows the natural monotonicity property, we consider some $j^\star\in [\tilde{m}]$ and define $\tilde{Q} = (\tilde{P}_1, \dots, \tilde{P}_{j^\star-1}, \tilde{Q}_{j^\star}, \tilde{P}_{j^\star+1}, \dots, \tilde{P}_{\tilde{m}})$ where $\tilde{Q}_{j^\star} \geq \tilde{P}_{j^\star}$. To prove BH satisfies the natural monotonicity property, we will show that $\mathcal{R}(\tilde{P}) \supseteq \mathcal{R}(\tilde{Q})$ and to show that BH is non-increasing, we will show that $|\mathcal{R}(\tilde{P})|\geq |\mathcal{R}(\tilde{Q})|$. For the proof, first observe that the ordered vectors of $\tilde{P}$ and $\tilde{Q}$ satisfy $\tilde{P}_{(j)}\leq \tilde{Q}_{(j)}$ for all $j\in [\tilde{m}]$. Therefore 
$$\left\{j\in [\tilde{m}]:\ \tilde{P}_{(j)}\leq \frac{\alpha j}{\tilde{m}} \right\} \supseteq \left\{j\in [\tilde{m}]:\ \tilde{Q}_{(j)}\leq \frac{\alpha j}{\tilde{m}} \right\}.$$
Since $|\mathcal{R}(\tilde{P})| = \max\left\{j\in [\tilde{m}]:\tilde{P}_{(j)}\leq \frac{\alpha j}{\tilde{m}} \right\}$ and  $|\mathcal{R}(\tilde{Q})| = \max\left\{j\in [\tilde{m}]:\tilde{Q}_{(j)}\leq \frac{\alpha j}{\tilde{m}} \right\}$, we get that $|\mathcal{R}(\tilde{P})| \geq |\mathcal{R}(\tilde{Q})|$ proving BH is non-increasing. Then, the natural monotonicity property holds because of the following:
\begin{align*}
\mathcal{R}(\tilde{P}) = \left\{j\in [\tilde{m}]:\ \tilde{P}_{j}\leq \frac{\alpha |\mathcal{R}(\tilde{P})|}{\tilde{m}} \right\} &\supseteq \left\{j\in [\tilde{m}]:\ \tilde{P}_{j}\leq \frac{\alpha |\mathcal{R}(\tilde{Q})|}{\tilde{m}} \right\} \\
&\supseteq \left\{j\in [\tilde{m}]:\ \tilde{Q}_{j}\leq \frac{\alpha |\mathcal{R}(\tilde{Q})|}{\tilde{m}} \right\} = \mathcal{R}(\tilde{Q}).
\end{align*}
The first inclusion ($\supseteq$) holds because $|\mathcal{R}(\tilde{P})| \geq |\mathcal{R}(\tilde{Q})|$, and the second inclusion follows from the fact that $\tilde{Q}_j \geq \tilde{P}_j$ for all $j \in [m]$. This proves that the BH procedure, and by extension the multiple testing procedures $\mathcal{P}_1, \mathcal{P}_2, \dots, \mathcal{P}_m$, follow the natural monotonicity property.

Next, we will establish the connection between the Simes' partial conjunction p-values and BH. 
For that, we again consider a vector of p-values $\tilde{P} = (\tilde{P}_1, \tilde{P}_2, \dots, \tilde{P}_{\tilde{m}})$ and some $\tilde{r} \leq \tilde{m}$. We want to show that the Simes' PCH p-value $P^{\tilde{r},\tilde{m}}$ (as defined in \Cref{eq:Simes pch p-val}, with $P_{\cdot, j}$ replaced by $\tilde P$, $\overline{s}$ by $\tilde r$, and $p-1$ by $\tilde m$) satisfies the following
$$
P^{\tilde{r},\tilde{m}}=\max\left\{P^{1,\tilde{m}-\tilde{r}+1}[A]: A\subseteq [\tilde{m}], |A|=\tilde{m}-\tilde{r}+1\right\},
$$
where $P^{1,\tilde{m}-\tilde{r}+1}[A]$ is the minimum adjusted p-value according to BH applied to the subvector $\tilde{P}_A$. The set of adjusted p-values for BH applied to $\tilde{P}_A$ is 
$$\left\{\frac{\tilde{m}-\tilde{r}+1}{\text{rank}(\tilde{P}_{j}; \tilde{P}_A)}\tilde{P}_j : j\in A\right\},$$
where $\text{rank}(\tilde{P}_{j}; \tilde{P}_A)$ is the rank of $\tilde{P}_{j}$ in $\tilde{P}_A$. Therefore, it is enough to show
\begin{align}\label{eq: proof of theorem 3.3 WTP}
    \min_{\tilde{r}\leq j \leq \tilde{m}} \frac{\tilde{m}-\tilde{r}+1}{i - \tilde{r} + 1} \tilde{P}_{(j)} = \max_{A\subseteq [\tilde{m}]: |A| = \tilde{m}-\tilde{r}+1}\ \min_{j\in A} \frac{\tilde{m}-\tilde{r}+1}{\text{rank}(\tilde{P}_{j}; \tilde{P}_A)}\tilde{P}_{j}.
\end{align}
Now, 
$$\min_{j\in A} \frac{\tilde{m}-\tilde{r}+1}{\text{rank}(\tilde{P}_{j}; \tilde{P}_A)}\tilde{P}_{j} = \min_{j\in [\tilde{m}-\tilde{r}+1]} \frac{\tilde{m}-\tilde{r}+1}{j}(\tilde{P}_{A})_{(j)},$$
where $(\tilde{P}_{A})_{(j)}$ denotes the $j^\mathrm{th}$ ordered p-value in $\tilde{P}_A$. Define $A_{\max}$ as the set of $\tilde{m}-\tilde{r}+1$ maximum p-values in $\tilde{P}$. 
Then $\left(\left(\tilde{P}_{A_{\max}}\right)_{(1)}, \left(\tilde{P}_{A_{\max}}\right)_{(2)},\dots, \left(\tilde{P}_{A_{\max}}\right)_{(\tilde{m}-\tilde{r}+1)}\right)$ is coordinatewise greater than or equal to $\left(\left(\tilde{P}_{A}\right)_{(1)}, \left(\tilde{P}_{A}\right)_{(2)},\dots, \left(\tilde{P}_{A}\right)_{(\tilde{m}-\tilde{r}+1)}\right)$ which implies 
$$\min_{j\in [\tilde{m}-\tilde{r}+1]} \frac{\tilde{m}-\tilde{r}+1}{j}(\tilde{P}_{A_{\max}})_{(j)} \geq \min_{j\in [\tilde{m}-\tilde{r}+1]} \frac{\tilde{m}-\tilde{r}+1}{j}(\tilde{P}_{A})_{(j)}\quad \mathrm{for\ any\ } A\subseteq [\tilde{m}], \ |A| = \tilde{m}-\tilde{r}+1.$$
Therefore, the RHS of \Cref{eq: proof of theorem 3.3 WTP} equals
\begin{align*}
\max_{A \subseteq [\tilde{m}]: |A| = \tilde{m}-\tilde{r}+1} \min_{j \in A} \frac{\tilde{m}-\tilde{r}+1}{\text{rank}(\tilde{P}_{j}; \tilde{P}_A)} \tilde{P}_{j} &= \min_{j \in A_{\max}} \frac{\tilde{m}-\tilde{r}+1}{\text{rank}(\tilde{P}_{j}; \tilde{P}_{A_{\max}})} \tilde{P}_{j} \\
&= \min_{j \in \{\tilde{r}, \dots, \tilde{m}\}} \frac{\tilde{m}-\tilde{r}+1}{j - \tilde{r} + 1} \tilde{P}_{(j)} \\ 
&= \mathrm{LHS\ of\ \Cref{eq: proof of theorem 3.3 WTP}}.
\end{align*}
Therefore, the Simes' PC p-values and BH are connected as stated in \Cref{thm: 3.1 Bogomolov} which proves the connection of the Simes' PC p-values $P^{\mathrm{S}}_{j}$ and BH in \Cref{algorithm: Simes-BH-oracle}.

Finally, we assume that the p-values $\{P_{i,j}: i,j\in [p], i\neq j\}$ are PRDS. We have demonstrated that \Cref{algorithm: Simes-BH-oracle} satisfies all the conditions of \Cref{thm: 3.1 Bogomolov}, which implies that it controls the FDR at level $\alpha$.

\subsection{Proof of Lemma \ref{lemma: support z to x}}\label{appendix: proof of lemma support z to x}
Define the transformation $h: \mathbb{R}^{p}\rightarrow \mathbb{R}^{p+1}$ as follows:
$$g(Z) = (X, T) \quad X_{\{j\}} = \frac{Z_{\{j\}}}{\sum_{k = 1}^p Z_{\{k\}}} \forall j \in [p],\ T = \sum_{k=1}^p Z_{\{k\}}$$

$X$ lies on the $p-1$ dimensional simplex. For any $d \in \mathbb{N}$ define $H^d$ and $\lambda_d$ as the $d$-dimensional Hausdorff measure and Lebesgue measure in $\mathbb{R}^d$ respectively. So $f_Z$ is the Radon--Nikodym derivative or simply the probability density function of $Z$ with respect to $\lambda_p$. $f_X$ is the Radon--Nikodym derivative of the push forward measure of $X$ induced by $h$ and the measure of $Z$ with respect to $H^{p-1}$ if and only if $P(X\in B) = \int_{B} f_X dH^{p-1}$ for all open sets $B$ on the $p-1$ dimensional simplex. Now
\begin{align*}
P(X\in B) &= \int_{h^{-1}(B\times [0,\infty))} f_Z d\lambda_p = \int_{B\times [0,\infty)} f_Z(xt) D(x,t) dH^{p}
\end{align*}
where $D(x,t)$ is the appropriate volume element associated with this change of variables. If $J(x,t)$ is the Jacobian of the transformation $h$ at $(x,t)$, then $D(x,t) = \text{det}\left(J(x,t)^T J(x,t)\right)^{-\frac{1}{2}}$. Now, 
\begin{align*}
    \frac{\partial x_i}{\partial z_i} = \frac{\sum_{j\neq i} z_i}{(\sum_{j=1}^p z_j)^2} = \frac{1}{t}-\frac{x_i}{t}, \quad \frac{\partial x_i}{\partial z_j} = \frac{-z_i}{(\sum_{j=1}^p z_j)^2} = -\frac{x_i}{t}, \quad \frac{\partial t}{\partial z_i} = 1
\end{align*}
implying $J(x,t) = \begin{pmatrix}
    \frac{1}{t}I_p - \frac{1}{t}x\boldsymbol{1}_{p}^T\\
    \boldsymbol{1}_{p}^T
\end{pmatrix}$ and hence $J(x,t)^T J(x,t) = \frac{1}{t^2} I_p+(1 + \frac{1}{t^2} x^T x)\boldsymbol{1}_{p} \boldsymbol{1}_{p}^T - \frac{1}{t^2} \boldsymbol{1}_{p} x^T - \frac{1}{t^2} x\boldsymbol{1}_{p}^T$.
\begin{align*}
    \therefore \text{det}(J(x,t)^T J(x,t)) &= \frac{1}{t^{2p}}\text{det}\left(I_p + (t^2 + x^T x)\boldsymbol{1}_{p} \boldsymbol{1}_{p}^T - \boldsymbol{1}_{p} x^T - x\boldsymbol{1}_{p}^T\right)\\
    &= \frac{1}{t^{2p}}\text{det}\left(I_p + \begin{pmatrix}
        \sqrt{t^2 + x^T x}\boldsymbol{1}_{p} & -\boldsymbol{1}_{p} & - x
    \end{pmatrix} \begin{pmatrix}
        \sqrt{t^2 + x^T x}\boldsymbol{1}_{p}^T \\ x^T \\ \boldsymbol{1}_{p}^T
    \end{pmatrix}\right)\\
    &= \frac{1}{t^{2p}}\text{det}\left(I_3 + \begin{pmatrix}
        p(t^2 + x^T x) & - p \sqrt{t^2 + x^T x} & -\sqrt{t^2 + x^T x}\\
        \sqrt{t^2 + x^T x} & -1 & -x^Tx\\
        p\sqrt{t^2 + x^T x} & -p & -1
    \end{pmatrix}\right)\\
    &= \frac{1}{t^{2p}}\text{det}\left(\begin{pmatrix}
        1+p(t^2 + x^T x) & - p \sqrt{t^2 + x^T x} & -\sqrt{t^2 + x^T x}\\
        \sqrt{t^2 + x^T x} & 0 & -x^Tx\\
        p\sqrt{t^2 + x^T x} & -p & 0
    \end{pmatrix}\right)\\
    &= \frac{1}{t^{2p}} p t^2 = \frac{p}{t^{2(p-1)}}.
\end{align*}
So, $D(x,t)\propto t^{p-1}$ which implies the following
\begin{align*}
P(X\in B) &\propto \int_{B\times [0,\infty)}f_Z(xt)t^{p-1}dH^p\\
&\propto \int_{B}\int_{[0,\infty)}f_Z(xt)t^{p-1}d\lambda_1 dH^{p-1}
\end{align*}
Define$$f_X(x) = c\int_{[0,\infty)}f_Z(xt)t^{p-1}d\lambda_1 = c\int_0^\infty f_Z(xt)t^{p-1}dt$$
where $c$ is a constant $\left(= \int_{\Delta^{p-1}}\int_{[0,\infty)}f_Z(xt)t^{p-1}d\lambda_1 dH^{p-1}\right)$. Then by Radon--Nikodym Theorem, $f_X$ is the density of $X$ with respect to $H^{p-1}$.

Now, to show the Lemma, we first prove the following claims
\paragraph{Claim 1: $\boldsymbol{\mathrm{Supp}_{>0}(X)\supseteq g\left(\mathrm{Supp}_{>0}(Z)\right)}$}
Consider a point $z\in \mathrm{Supp}_{>0}(Z)$ and define $x = g(z)$. We want to prove that $x \in \mathrm{Supp}_{>0}(X)$. Since $f_Z$ is continuous in $\mathbb{R}^p\setminus \boldsymbol{0}_p$ and $f_Z(z)>0$, $f_Z(z^\prime) > 0$ for all $z^\prime$ in some open neighborhood of $z$. Formally, there exists $\epsilon>0$ such that $f_Z(z^\prime) > 0\ \forall \ z^\prime \in B_{\epsilon}(z) \cap [0,\infty)^p \setminus \boldsymbol{0}_p$. Then, we have the following

\begin{align*}
    f_X(x) = c\int_0^\infty f_Z(tx) \cdot t^{p-1} dt \geq c\int_{\min(\{t:\ xt \in B_\epsilon(z)\cap [0,\infty)^p \setminus \boldsymbol{0}_p\})}^{\max(\{t:\ xt \in B_\epsilon(z)\cap [0,\infty)^p \setminus \boldsymbol{0}_p\})} f_Z(tx) \cdot t^{p-1} dt.
\end{align*}
Note that for any $z\in [0,\infty)^p \setminus \boldsymbol{0}_p$ and $x = g(z)$, $\{t:\ xt \in B_\epsilon(z)\cap [0,\infty)^p \setminus \boldsymbol{0}_p\}$ is an open interval. This, together with the fact that $f_Z(tx) > 0 $ and $t > 0$ for all $t \in \{t:\ xt \in B_\epsilon(z)\cap [0,\infty)^p \setminus \boldsymbol{0}_p\}$, imply that, $f_X(x)>0$ which proves Claim 1.

\paragraph{Claim 2: $\boldsymbol{\mathrm{Supp}{>0}(X) \subseteq g\left(\mathrm{Supp}{>0}(Z)\right)}$}

It is enough to show that for any point $x \in \Delta^{p-1}$ such that $f_X(x) > 0$, there exists $z \in [0, \infty)^p \setminus \boldsymbol{0}_p$ such that $f_Z(z) > 0$. Assume the contrapositive, i.e., $f_Z(z) = 0$ for all $z \in \mathbb{R}^n$ such that $x = g(z)$. Then, $\int_0^\infty f_Z(tx) t^{p-1} dt = 0$, which implies $f_X(x) = 0$, contradicting our assumption. Therefore, there must exist $z$ such that $f_Z(z) > 0$ and $x = g(z)$. This proves Claim 2 and, in turn, the Lemma.

\section{Application of Theorem \ref{thm: uniqueness of MB} for gene knockout experiments.}\label{sec:gene knockout experiments}
In \Cref{cor:factor covariates}, we considered regression problems with factor covariates where each observation is assigned a single level from each factor. We can generalize this setup where each observation receives multiple levels from a single factor. This arises frequently in gene knockout experiments where exactly $L$ out of $p$ genes are knocked out per individual, resulting in the sum of each row of the binary design matrix equal to $L$. For simplicity we consider only one factor with $L$ levels and define $\S$ and the nontrivial Markov boundary as in \Cref{subsec:the Markov boundary}.
\begin{corollary}[Uniqueness of nontrivial Markov boundary for gene knockout experiments]\label{cor:gene knockout experiments}
    Consider a regression problem where the covariate $X$ denotes levels of a single factor. If $\S = [p]$ then no nontrivial Markov boundary exists. Otherwise if $(Y,X)$ satisfies the following two assumptions:
    \begin{itemize}
        \item[(i)] For all $i, j\in \S^\mathsf{c}$, there exists a finite $t$ and a sequence $l_1, l_2, \dots, l_t$ such that $\H_{i,l_1}$, $\H_{l_1,l_2}$, $\dots$, $\H_{l_{t-1},l_t}$, $\H_{l_t,j}$ are all true,
        \item[(ii)] Positive probability is assigned to all vectors $x$ such that $\sum_{j\in [p]}X_{\{j\}} = L$,
    \end{itemize}
    then $\S$ is the unique nontrivial Markov boundary.
\end{corollary}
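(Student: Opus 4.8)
The plan is to apply \Cref{thm: uniqueness of MB}, following the same strategy as the proofs of \Cref{cor:continuous distributions} and \Cref{cor:factor covariates}, but now there is a single constraint ($\sum_{j}X_{\{j\}}=L$ on the $0/1$ design vector $X$) and only one ``factor,'' so no cross-factor machinery is needed. First I would dispatch the $\S=[p]$ case exactly as in the proof of \Cref{thm: uniqueness of MB}: if a nontrivial Markov boundary $\M$ existed then $|\M^\mathsf{c}|\ge 2$, and picking $\{i,j\}\subseteq\M^\mathsf{c}$ the weak union property would give $Y\indep X_{\{i,j\}}\mid X_{\{i,j\}^\mathsf{c}}$, contradicting $\S=[p]$. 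So assume $\S\subsetneq[p]$; then $\I\neq\emptyset$ (each true bivariate null contributes its two indices to $\S^\mathsf{c}$), so the hypothesis of \Cref{thm: uniqueness of MB} that some $\H_{i,j}$ is true holds, and it remains only to verify $\S^\mathsf{c}\in(\Delta\circ)^{k}\,\I$ for some finite $k$.

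The technical heart is to show that $\Delta$ contains \emph{every} pair $\{A,B\}$ of subsets of $[p]$ with $A\cap B\neq\emptyset$, i.e.\ that for all such $A,B$ and every $c$ in the support of $X_{(A\cup B)^\mathsf{c}}$, the conditional law $X_{A\cup B}\mid X_{(A\cup B)^\mathsf{c}}=c$ has a single equivalence class through $A\setminus B$ and $B\setminus A$. By assumption (ii), conditionally on $X_{(A\cup B)^\mathsf{c}}=c$ the support of $X_{A\cup B}$ is \emph{all} $0/1$ vectors on $A\cup B$ summing to $r:=L-\|c\|_1$ (with $0\le r\le|A\cup B|$). Any two such vectors are connected by a sequence of moves that each shift one coordinate from $1$ to $0$ and another from $0$ to $1$, and all intermediate vectors are again in the support, so it suffices to show each single such move relates two equivalent points. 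Writing $U=A\setminus B$, $K=A\cap B\neq\emptyset$, $V=B\setminus A$: if the two moved coordinates do not straddle $U$ and $V$, then one of $x_U$, $x_V$ is left unchanged, so the two vectors are already coordinatewise connected. The only remaining case is a move shifting a $1$ from some $a\in U$ to some $b\in V$; here I would insert one intermediate vector obtained by instead shifting a $1$ between a buffer coordinate $k_0\in K$ and either $a$ (if $x_{k_0}=0$) or $b$ (if $x_{k_0}=1$), splitting the $U$–$V$ move into one move fixing $x_V$ and one fixing $x_U$, giving a length-two equivalence path. This small case split, together with checking validity of the intermediate vector, is the step I expect to need the most care, though it is elementary.

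Granting $\Delta\supseteq\{\{A,B\}:A\cap B\neq\emptyset\}$, I would finish as in \Cref{cor:factor covariates}. Since $\S^\mathsf{c}=\bigcup_{\{i,j\}\in\I}\{i,j\}$, and every edge of $\I$ lies within $\S^\mathsf{c}$, assumption (i) says the graph on $\S^\mathsf{c}$ with edge set $\I$ is connected. Take a spanning tree and order its edges $e_1,\dots,e_{s'-1}$ ($s'=|\S^\mathsf{c}|\ge2$) so that each $e_{m+1}$ meets $C_m:=e_1\cup\cdots\cup e_m$ in a vertex. Then by induction $C_m\in(\Delta\circ)^{m-1}\,\I$: both $C_m$ and $e_{m+1}$ lie in $(\Delta\circ)^{m-1}\,\I$ (using $\I\subseteq(\Delta\circ)^{m-1}\,\I$ since $\Delta\circ$ is inflationary) and $\{C_m,e_{m+1}\}\in\Delta$ because $C_m\cap e_{m+1}\neq\emptyset$, so $C_{m+1}=C_m\cup e_{m+1}\in(\Delta\circ)^m\,\I$. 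Hence $\S^\mathsf{c}=C_{s'-1}\in(\Delta\circ)^{s'-2}\,\I$, and \Cref{thm: uniqueness of MB} then gives that $\S$ is the unique nontrivial Markov boundary.
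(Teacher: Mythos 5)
Your proposal is correct and follows essentially the same route as the paper's proof: handle $\S=[p]$ via \Cref{thm: uniqueness of MB}, use assumption (ii) to show the relevant intersecting pairs $\{A,B\}$ lie in $\Delta$ by exhibiting a single equivalence class of the conditional support, then chain the elements of $\I$ together using assumption (i) to get $\S^{\mathsf{c}}\in(\Delta\circ)^{k}\,\I$. The only (harmless) difference is organizational: you prove the equivalence-class claim for all pairs with $A\cap B\neq\emptyset$ via transposition moves and a buffer coordinate in $A\cap B$, whereas the paper restricts to $|B|=2$, $|A\cap B|=1$ and does a direct three-case analysis on the boundary coordinates; both verify the same hypothesis of \Cref{thm: uniqueness of MB}.
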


\begin{proof}[Proof of \Cref{cor:gene knockout experiments}]\label{sec:proof of corollary for gene knockout}
If $\S = [p]$, \Cref{thm: uniqueness of MB} proves there does not exist any nontrivial Markov boundary. Lets consider the other case, that is $|\S| < p$. By assumption \textit{(i)}, to prove \Cref{cor:gene knockout experiments} using \Cref{thm: uniqueness of MB}, using the same logic as in \Cref{appendix:proof-continuous-covariates}, it is enough to show that $\{A,B\}\in \Delta$ for any $A,B\subseteq [p]$ such that $|A\cap B|=1$ and $|B|=2$. So, we choose any such $A,B$ and consider the distribution of $X_{A\cup B}\mid X_{(A\cup B)^\mathsf{c}} = c$. Define $L_1 = L-||c||_1 = |X_{A\cup B}|$ and $L_2=L-L_1$. If $L_1=0$, then the support of $X_{A\cup B}\mid X_{(A\cup B)^\mathsf{c}} = c$ is equal to the set $\{\boldsymbol{0}_p\}$ which is path connected. Otherwise if $L_1 > 0$, consider any two points $x$ and $y$ in the support of $X_{A\cup B}\mid X_{(A\cup B)^\mathsf{c}} = c$. If $x_{B\setminus A}=y_{B\setminus A}$, then they are coordinatewise connected through $B\setminus A$ implying $\{A,B\}\in \Delta$. In the other case, when $x_{B\setminus A}\neq y_{B\setminus A}$, assume WLOG that $x_{B\setminus A} = 1$ and $y_{B\setminus A} = 0$. There are three cases based on the values of $x_{A\cap B}$ and $y_{A\cap B}$:
\begin{itemize}
    \item[Case 1: ]($x_{A\cap B}=y_{A\cap B}=0$)\\
    We introduce the following notation: for any vector $w \in \mathbb{R}^p$, we denote $w = (w_{A \setminus B}, w_{A \cap B}, w_{B \setminus A}, w_{(A\cup B)^\mathsf{c}})$. In the case $x_{A\cap B}=y_{A\cap B}=0$, define $z = (x_{A\setminus B}, 1, y_{B\setminus A}, c)$. $\sum_{j\in [p]}z_{\{j\}} = L$ which implies $z$ is in the support of $X_{A\cup B}\mid X_{(A\cup B)^\mathsf{c}} = c$ by assumption \textit{(ii)}. $x$ and $z$ are coordinatewise connected through $A\setminus B$ and $y$ and $z$ are coordinatewise connected through $B\setminus A$ proving the equivalence of $x$ and $y$ through $A\setminus B$ and $B\setminus A$.
    \item[Case 2: ]($y_{A\cap B}=1$)\\
    Define $z = (y_{A\setminus B}, 0, x_{B\setminus A},c)$. Similar to case 1, since $\sum_{j\in [p]}z_{\{j\}} = L$, assumption \textit{(ii)} implies that $z$ is in the support of $X_{A\cup B}\mid X_{(A\cup B)^\mathsf{c}} = c$. Also, $x$ and $z$ are coordinatewise connected through $B\setminus A$ and $y$ and $z$ are connected through $A\setminus B$ which shows that $x$ and $y$ are equivalent through $A\setminus B$ and $B\setminus A$. 
    \item[Case 3: ]($x_{A\cap B}=1$ and $y_{A\cap B}=0$)\\
    Since $\sum_{j\in A\setminus B}x_{\{j\}} - \sum_{j\in A\setminus B}y_{\{j\}}=2$, $x_{A\setminus B}$ must have at least two zeros. Define $x_{A\setminus B}^\prime$ as $x_{A\setminus B}$ with one additional 1 in place of a zero. Then if we define $x^{(1)}= (x_{A\setminus B}^\prime, 0, x_{B\setminus A},c)$, $\sum_{j\in [p]}z^{(1)}_{\{j\}}=L$ which implies $x^{(1)}$ belongs to the support of $X_{A\cup B}\mid X_{(A\cup B)^\mathsf{c}}=c$. Define another point $x^{(2)}=(x_{A\setminus B}^\prime, 1, 0, c)$. Again, entries of $x^{(2)}$ also add up to $L$ implying $x^{(2)}$ belongs to the support of $X_{A\cup B}\mid X_{(A\cup B)^\mathsf{c}}=c$ by assumption \textit{(ii)}. Finally, we note that the following:
    \begin{itemize}
        \item[-] $x$ and $x^{(1)}$ are coordinatewise connected through $B\setminus A$.
        \item[-] $x^{(1)}$ and $x^{(2)}$ are coordinatewise connected through $A\setminus B$.
        \item[-] $x^{(2)}$ and $y$ are coordinatewise connected through $B\setminus A$.
    \end{itemize}
    Together, they imply $x$ and $y$ are equivalent through $A\setminus B$ and $B\setminus A$. 
\end{itemize}
Therefore, in all the cases, $x$ and $y$ are equivalent through $A\setminus B$ and $B\setminus A$ which shows that $X_{A\cup B}\mid X_{(A\cup B)^\mathsf{c}}$ has only one equivalence class. Hence, $\{A,B\}\in \Delta$. This is true for any $A,B$ such that $|A\cap B| = 1$ and $|B|=2$. Hence, we can apply \Cref{thm: uniqueness of MB} which completes the proof of \Cref{cor:gene knockout experiments}.
\end{proof}

While \Cref{cor:gene knockout experiments} is stated for a single factor, one can extend it to the $L$-factor case similar to \Cref{cor:factor covariates}. 

\section{Pseudocode of Algorithms}

\subsection{FWER control}\label{app:fwer algs}
\begin{algorithm}[H]
  \caption{Method: (for FWER control under arbitrary dependence)}
  \label{Algorithm: adaptive-Bonferroni--Holm}
  \KwData{P-value matrix $P \in [0,1]^{p \times p}$, desired FWER threshold $\alpha$, and a strict upper bound on the number of non-nulls $\overline{s}$}
  \KwResult{Rejected indices $\hat{\mathcal{S}}$}
  
  \textbf{Initialize:} $\hat{\mathcal{S}} = \emptyset$
  \SetKwFor{For}{for}{:}{end}

  \For{$k \in [p]$}{
  
  \For{$j \in [p]$}{
    Define Bonferroni's p-value $P^{\mathrm{B}}_{j}(\hat{\mathcal{S}})$ as in \Cref{eq:corrected Bonferroni p-value}. 
  }
  
    Define $j^\star = \underset{j \in [p]}{\text{argmin}} P^{\mathrm{B}}_{j}(\hat{\mathcal{S}})$.
  
  \If{$P^{\mathrm{B}}_{j^\star}(\hat{\mathcal{S}}) \leq \frac{q}{p - k + 1}$}{
    Add $j^\star$ to $\hat{\mathcal{S}}$\;
  }
  \Else{
  \Return $\hat{\mathcal{S}}$.
  }
  \Return $\hat{\mathcal{S}}$.
  }
\end{algorithm}

\begin{algorithm}[H]
  \caption{Method: (for FWER control under positive dependence)}\label{Algorithm: adaptive-Simes-Holm}
  \KwData{P-value matrix $P \in [0,1]^{p \times p}$, desired FWER threshold $\alpha$, and a strict upper bound on the number of non-nulls $\overline{s}$}
  \KwResult{Rejected indices $\hat{\mathcal{S}}$}
  
  \textbf{Initialize:} $\hat{\mathcal{S}} = \emptyset$
  
  \SetKwFor{For}{for}{:}{end}

  \For{$k \in [p]$}{
  
  \For{$j \in [p]$}{
    Define Simes' p-value $P^{\mathrm{S}}_{j}(\hat{\mathcal{S}})$ as in \Cref{eq:corrected Simes p-value}. 
  }
  
    Define $j^\star = \underset{j \in [p]}{\text{argmin}} P^{\mathrm{S}}_{j}(\hat{\mathcal{S}})$.
  
  \If{$P^{\mathrm{S}}_{j^\star}(\hat{\mathcal{S}}) \leq \frac{q}{p - k + 1}$}{
    Add $j^\star$ to $\hat{\mathcal{S}}$\;
  }
  \Else{
  \Return $\hat{\mathcal{S}}$.
  }
  \Return $\hat{\mathcal{S}}$.
  }
\end{algorithm}

\subsection{FDR control}
\subsection{FDR control}
\begin{algorithm}[H]
  \caption{Detecting Markov boundary while Controlling FDR}\label{algorithm: Simes-BH-oracle} 
  \KwData{P-value matrix $P \in [0,1]^{p \times p}$, desired FDR threshold $\alpha$, and a strict upper bound on the number of non-nulls $\overline{s}$}
  \KwResult{Rejected indices $\hat{\mathcal{S}}$}
  \textbf{Initialize:} $\hat{\mathcal{S}} = \emptyset$

  \SetKwFor{For}{for}{:}{end}
  \For{$j \in [p]$}{
    Define Simes' p-value $P^{\mathrm{S}}_{j}$ as in \Cref{eq:Simes pch p-val}. 
  }
  
    Sort Simes' p-values in ascending order: $P^{\mathrm{S}}_{(1)} \leq P^{\mathrm{S}}_{(2)} \leq \dots \leq P^{\mathrm{S}}_{(p)}$\;

    Define $k^{\star} = \max\left\{k\in [p]:\ P^{\mathrm{S}}_{(k)} \leq \frac{k}{p} \cdot \alpha\right\}$\;

    Define $\hat{\mathcal{S}} = \left\{j : P^{\mathrm{S}}_{j} \in \left\{P^{\mathrm{S}}_{(1)}, \dots, P^{\mathrm{S}}_{(k^\star)}\right\}\right\}$\;
    
    \Return $\hat{\mathcal{S}}$\;
\end{algorithm}

\section{Additional simulation details}\label{appendix: additional simulations}

\subsection{Conditional independence testing for simulations}\label{appendix: using dCRT to obtain p-values}
In \Cref{sec: obtaining p-values}, we mentioned that any conditional independence test could be used to obtain the $P_{i,j}$'s. For our simulations, we use the distilled conditional randomization test (dCRT)~\citep{dCRT}, which is a computationally efficient special case of the conditional randomization test~\citep{candes2018panning}, to test the hypotheses $\H_{i,j}: Y\indep X_{\{i,j\}}\mid X_{\{i,j\}^\mathsf{c}}$ for $i,j\in [p]$. To describe how we implement the dCRT, we need to introduce some notation first. Let $\mathbb{Y}$ denote the vector of observed outcomes and let $\mathbb{X}$ denote the design matrix. $\mathbb{X}_{\{i,j\}}$ denotes the matrix of the $i^{\text{th}}$ and $j^{\text{th}}$ columns of $\mathbb{X}$ and $\mathbb{X}_{\{i,j\}^\mathsf{c}}$ denotes the matrix with all the columns of $\mathbb{X}$ except $i$ and $j$. 

The first step of the dCRT is to draw samples $\tilde{\mathbb{X}}_{\{i,j\}}^{(1)}, \ \tilde{\mathbb{X}}_{\{i,j\}}^{(2)}, \dots,\ \tilde{\mathbb{X}}_{\{i,j\}}^{(K)}$ independently from the distribution of $\mathbb{X}_{\{i,j\}}\mid \mathbb{X}_{\{i,j\}^\mathsf{c}}$. Since, in our simulations, rows of $\mathbb{X}$ consist of independent observations of the random variable $X$, each row of $\tilde{\mathbb{X}}_{\{i,j\}}^{(k)}$ can be sampled independently from the conditional distribution $X_{\{i,j\}}\mid X_{\{i,j\}^\mathsf{c}}$. 

Now we define the test statistics for dCRT. Let $\hat{\mathbb{Y}}\in \mathbb{R}^n$ denote the fitted values obtained from a 5-fold cross-validated LASSO fitting $\mathbb{Y}$ to $\mathbb{X}_{\{i,j\}^\mathsf{c}}$. Also, for any vector $y\in \mathbb{R}^n$ and matrix $z\in \mathbb{R}^{n\times 2}$, define $T(y,z)$ as the $R^2$ from the regression of $y$ on $\log(z)$. Then the dCRT p-value is defined as follows:
\begin{align}\label{eq: pval}
  P_{i,j} &= \frac{1}{K+1} \left(1 + \sum_{k=1}^K \1_{\left\{T\left(\mathbb{Y}-\hat{\mathbb{Y}}, \tilde{\mathbb{X}}_{\{i,j\}}^{(k)}\right) \geq T\left(\mathbb{Y}-\hat{\mathbb{Y}}, \mathbb{X}_{\{i,j\}}\right)\right\}}\right).
\end{align}
Under $\H_{i,j}$, the dCRT p-value $P_{i,j}$ has a super-uniform distribution because it is a special case of the conditional randomization test, which is proved to be valid in \citet{candes2018panning}.

\subsection{Dirichlet}\label{appendix: simulation dirichlet}
As mentioned in the previous subsection, we use thedCRT~\citep{dCRT} to test the base hypotheses $\H_{i,j}$. For each ${i,j}\subset [p]$, we calculate the p-value $P_{i,j}$ as described in \Cref{appendix: using dCRT to obtain p-values}. For the benchmark LOO methods, we select some $j\in [p]$ at random and for each $i\in \{j\}^\mathsf{c}$, we define $d_Y^{\{i\}}$ as the prediction from a 5-fold cross-validated LASSO regression of $Y$ on $X_{\{i,j\}^\mathsf{c}}$. Since $X_{\{j\}^\mathsf{c}}$ is non-compositional, we draw resamples from $X_{\{i\}}\mid X_{\{i,j\}^\mathsf{c}}$ (which follows a Beta distribution in this case). Then, we define the 
r-squared of the regression of $Y-d_Y^{\{i\}}$ on $\log(X_i)$ as the test statistic $T$ and calculate p-values as in \citet[Algorithm 1]{dCRT}. In this way, we only get p-values for all $i\in \{j\}^\mathsf{c}$.

\begin{figure}[ht]
\centering
\includegraphics[width=\textwidth]{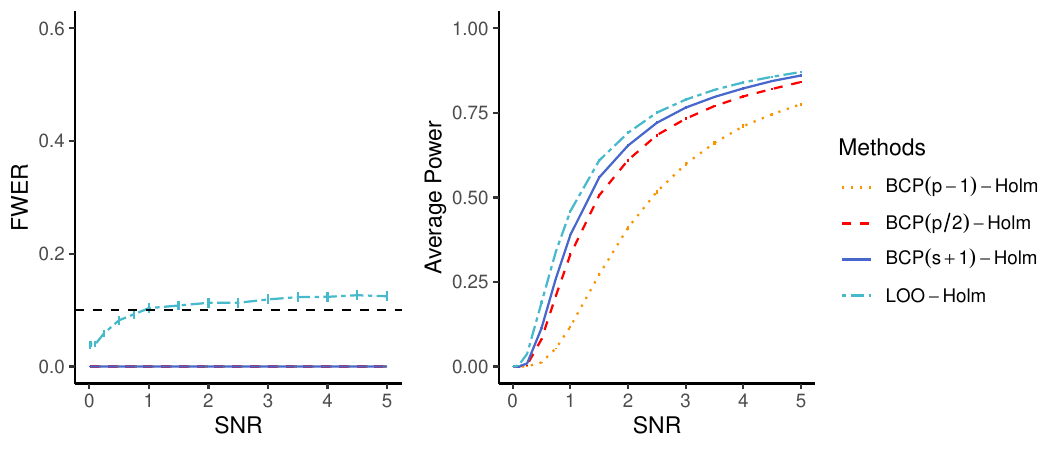}
    \caption{Comparison of FWER and average power for variable selection with Dirichlet covariates. The target FWER level is 10\% and error bars correspond to $\pm2$ Monte Carlo standard errors.}
    \label{fig:fwer_dirichlet}
\end{figure}
\Cref{fig:fwer_dirichlet} shows that like in the case of FDR control (\Cref{fig:fdr_dirichlet}), the proposed methods---BCP($p-1$)-Holm, BCP($p/2$)-Holm and BCP($s+1$)-Holm---control the FWER at a level much below the nominal level of 10\% while the benchmark method LOO-Holm violates the FWER level. In terms of average power, BCP($p/2$)-Holm and BCP($s+1$)-Holm achieve comparable average power to LOO-Holm.

\subsection{Logistic-normal}\label{appendix: simulation logistic-normal}
For this distribution, the conditional distribution of $X_{\{i,j\}}$ given $X_{\{i,j\}^\mathsf{c}}$ lacks a closed form, which is why we use MCMC to generate dCRT resamples. These resamples although not independent, are exchangeable, which is sufficient for validity. 
\begin{figure}[!]
    \centering
    \begin{subfigure}{\textwidth}
        \centering
\includegraphics[width=0.9\textwidth]{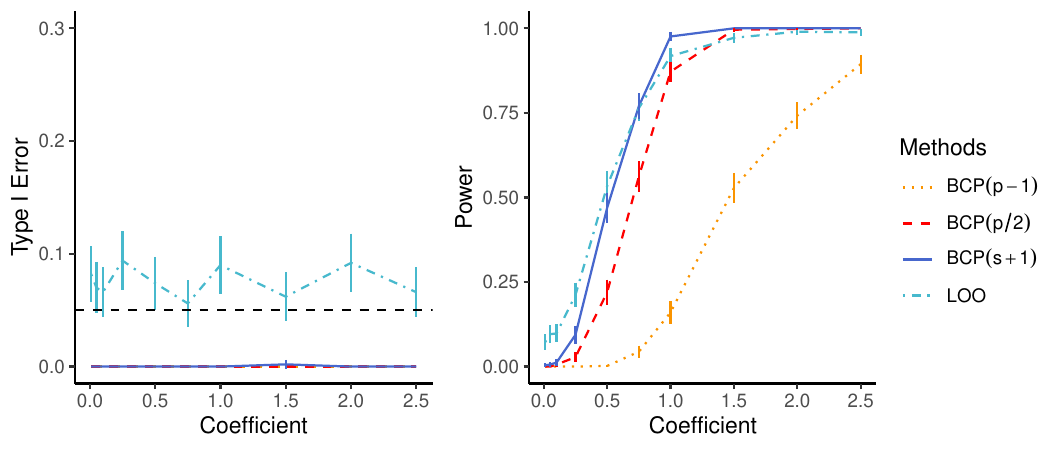}
        \caption{Comparison of type I error and power for single testing with Logistic-normal covariates. The target type I error is 5\% and error bars correspond to $\pm2$ Monte Carlo standard errors.}
\label{fig:single_test_logistic-normal}
    \end{subfigure}  
    \begin{subfigure}{\textwidth}
        \centering \includegraphics[width=0.9\textwidth]{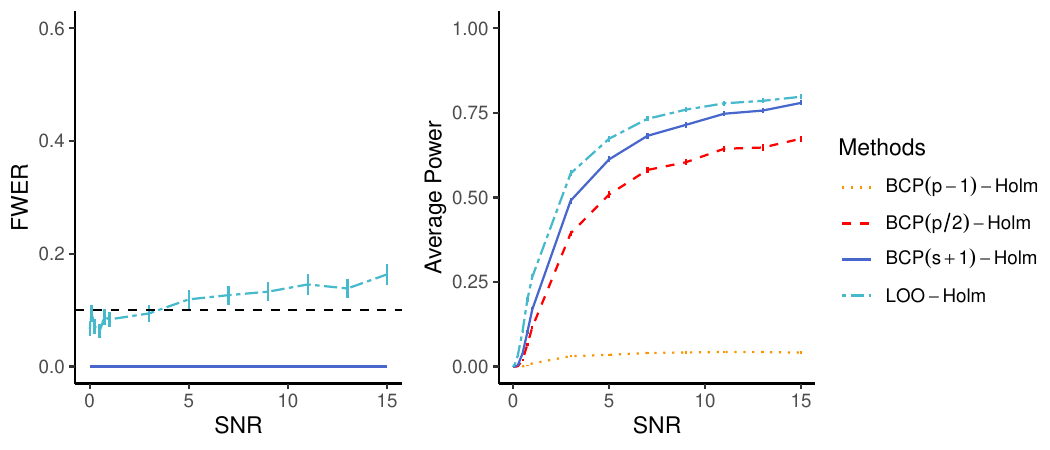}
        \caption{Comparison of FWER and average power for variable selection with Logistic-normal covariates. The target FWER level is 10\% and error bars correspond to $\pm2$ Monte Carlo standard errors.}\label{fig:fwer_logistic-normal}
    \end{subfigure}
    \begin{subfigure}{\textwidth}
        \centering
\includegraphics[width=0.9\textwidth]{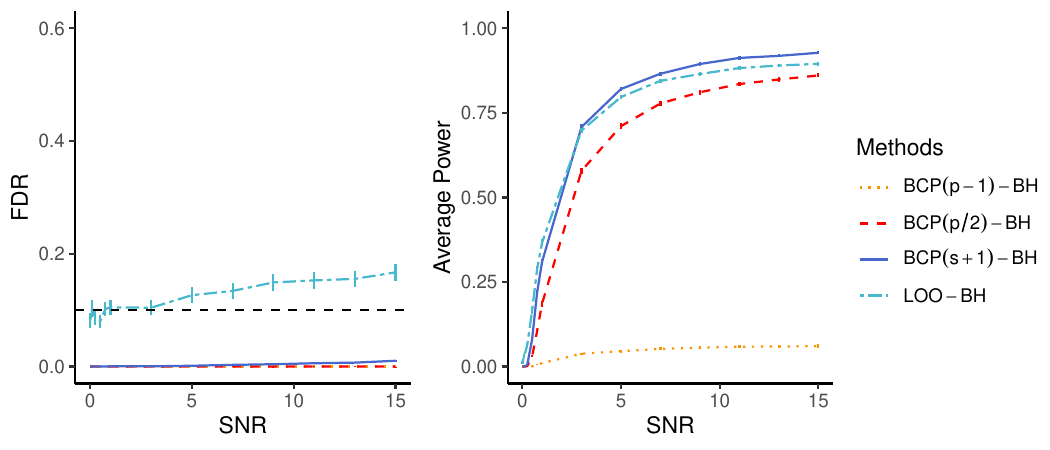}
        \caption{Comparison of FDR and average power for variable selection with Logistic-normal covariates. The target FDR level is 10\% and error bars correspond to $\pm2$ Monte Carlo standard errors.}\label{fig:fdr_logistic-normal}
    \end{subfigure}
    \caption{Comparison of methods for (a) single testing, (b) FWER control and (c) FDR control with Logistic-normal covariates.}
    \label{fig: logistic_normal}
\end{figure}
\Cref{fig: logistic_normal} shows that LOO methods fail to control respective error rates while the proposed methods control the same (\Cref{fig: logistic_normal}). BCP($p/2$) and BCP($s+1$) methods achieve power and average power similar to the benchmarks. In contrast to the Dirichlet case, BCP($p-1$) methods have considerably lower average power than the rest of the methods in multiple testing scenarios.

\subsection{Multivariate normal}\label{appendix: simulation multivariate normal}

\begin{figure}[ht]
\centering
\includegraphics[width=\textwidth]{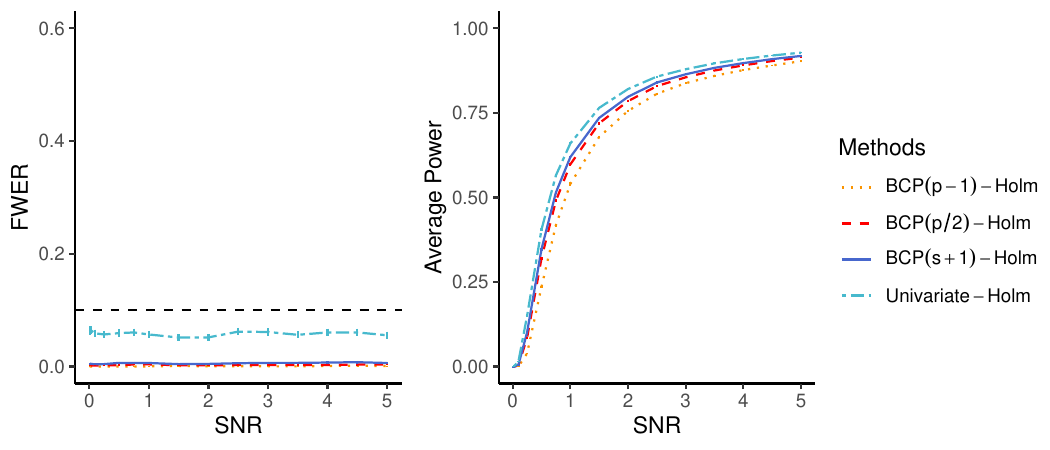} 
\caption{Comparison of FWER and average power for variable selection with multivariate normal covariates. The target FWER level is 10\% and error bars correspond to $\pm2$  Monte Carlo standard errors.}
\label{fig:fwer_normal}
\end{figure}

\Cref{fig:fwer_normal} shows that the three proposed methods achieve average power very close to the benchmark Univariate-Holm while controlling FWER at a level much lower than the nominal level 10\%.

\subsection{Studying the effect of conditioning on sparse covariates}\label{appendix: simulation effect of conditioning}
\Cref{fig:power_effect_of conditioning} illustrated that we can gain substantial power by conditioning out sparse covariates. As \Cref{fig:error_effect_of conditioning} shows, the error rates for the proposed methods remain well below the nominal levels, 5\% for single testing and 10\% for multiple testing, irrespective of the number of covariates we condition out.
\begin{figure}[ht]
\centering
\includegraphics[width=\textwidth]{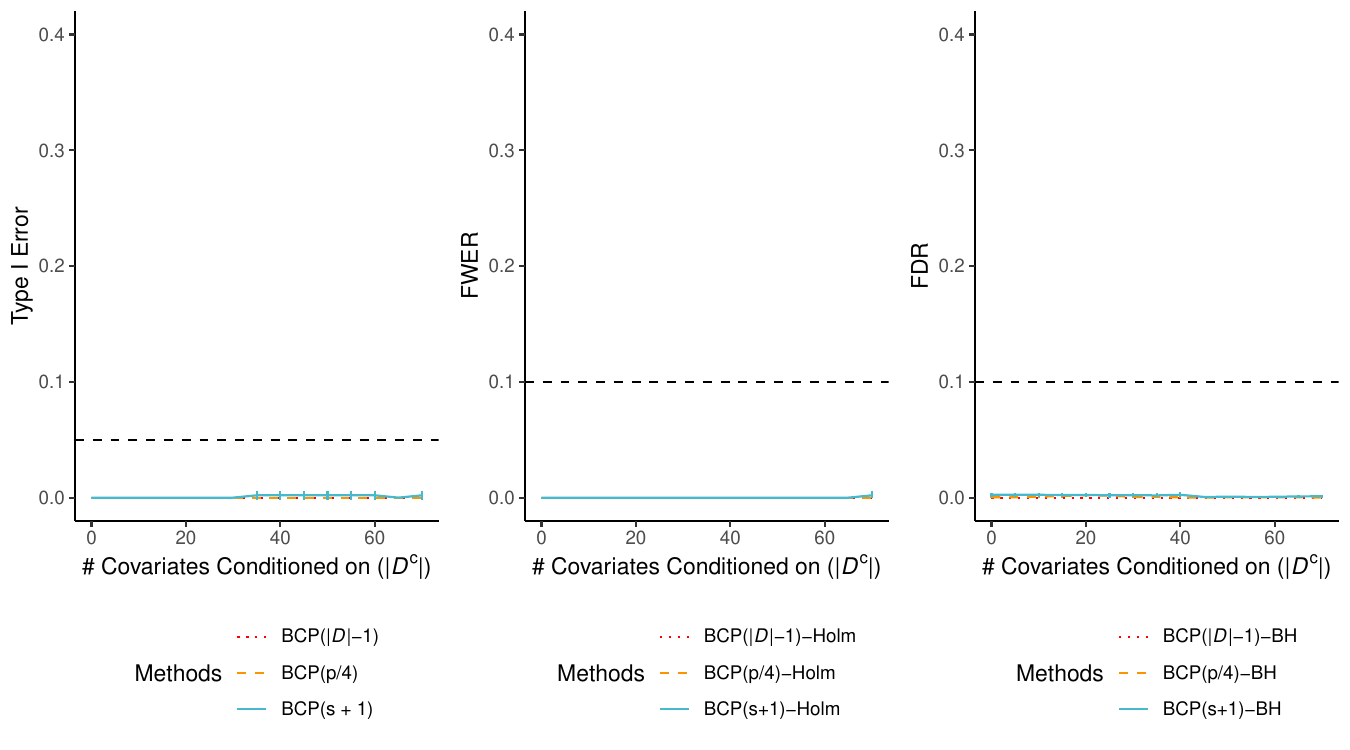}
\caption{Effect of conditioning out the sparse covariates on type I error (left), FWER (middle), and FDR (right), respectively. The target type I error, FWER, and FDR are 5\%, 10\%, and 10\%, respectively, and error bars correspond to $\pm2$  Monte Carlo standard errors.}
\label{fig:error_effect_of conditioning}
\end{figure}

\subsection{Robustness study}\label{subsec: robustness study}
In this simulation, we study the effect of conditioning out the sparse covariates. To do this, we generate the data from a Dirichlet distribution where the parameter vector is a vector of 2's of length 100. So, the distribution of $X_{\{i,j\}}$ given $X_{\{i,j\}^\mathsf{c}}$ is a scaled beta distribution which can be used in the dCRT step. However, here we consider the scenario where the parameters for the Dirichlet distribution are not known. Instead, we estimate the parameters of the Dirichlet distribution using the $\texttt{fit\_dirichlet}$ function~\citep{minka2000estimating} in $\texttt{R}$. Figures \ref{fig:fwer dirichlet_estimated} and \ref{fig:fdr dirichlet_estimated} show that estimating parameters has little impact on the average power and the error rates of the proposed methods BCP($p-1$), BCP($p/2$), and BCP($s+1$).

\begin{figure}[!]
    \centering
    \begin{subfigure}{\textwidth}
        \centering
\includegraphics[width=0.9\textwidth]{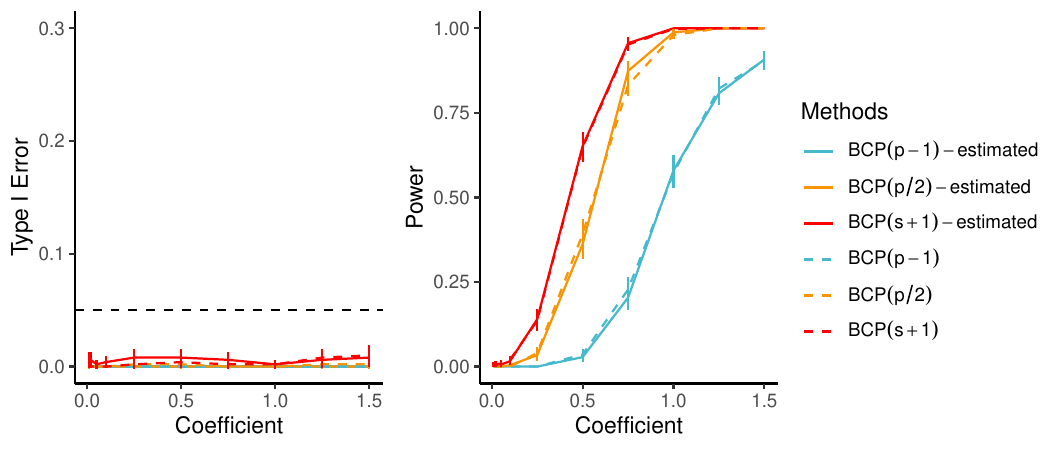}
        \caption{Effect of sampling from an estimated distribution on single testing methods. The target type I error is 5\% and error bars correspond to $\pm 2$ Monte Carlo standard errors.}
\label{fig:single_test_dirichlet_estimated}
    \end{subfigure}  
    
    \begin{subfigure}{\textwidth}
        \centering
        \includegraphics[width=0.9\textwidth]{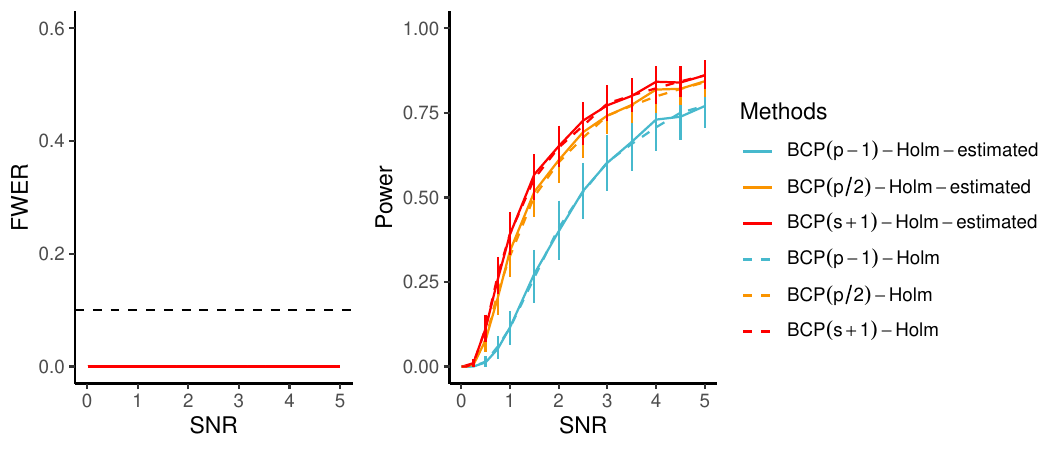}
        \caption{Effect of sampling from an estimated distribution on FWER control methods. The target FWER level is 10\% and error bars correspond to $\pm 2$ Monte Carlo standard errors.}
        \label{fig:fwer dirichlet_estimated}
    \end{subfigure}

    \begin{subfigure}{\textwidth}
        \centering
        \includegraphics[width=0.9\textwidth]{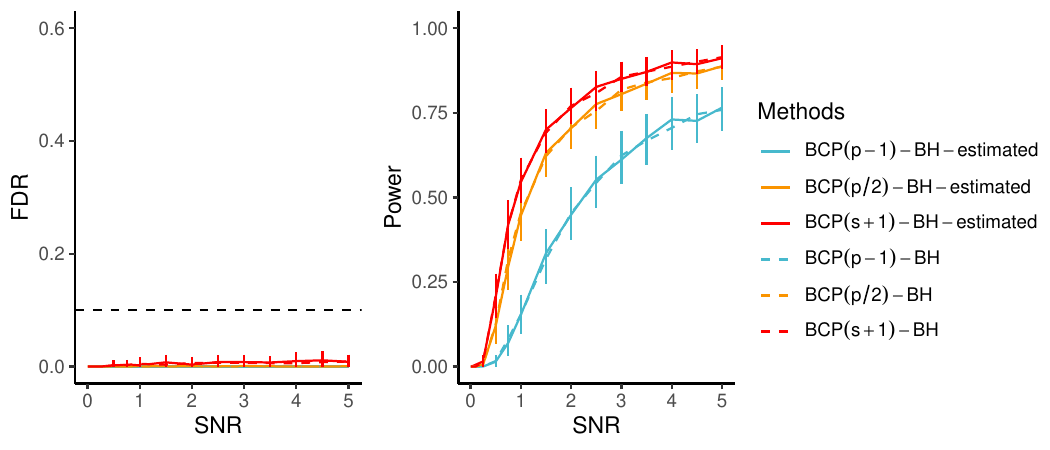}
        \caption{Effect of sampling from an estimated distribution on FDR control methods. The target FDR level is 10\% and error bars correspond to $\pm 2$ Monte Carlo standard errors.}
        \label{fig:fdr dirichlet_estimated}
    \end{subfigure}

    \caption{Simulation studies showing the effects of sampling from an estimated Dirichlet distribution on (a) FWER and (b) FDR control methods. 
    }
    \label{fig:dirichlet_estimated_combined}
\end{figure}

\subsection{Bonferroni p-values}\label{appendix: simulation bonferroni}
All of the plots in \Cref{sec: simulations} used PCH p-values combined via Simes' approach. In this section, we present the same results for Bonferroni's approach. 
\begin{figure}[ht]
\centering
\begin{subfigure}{\textwidth}
    \centering
    \includegraphics[width=0.85\textwidth]{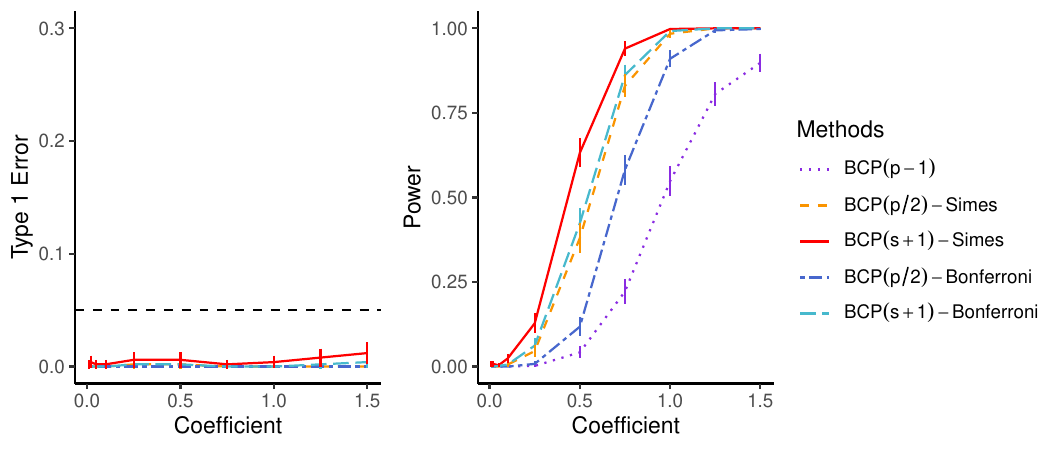}
    \caption{Comparison of type I error and power for single testing with Dirichlet covariates. The target type I error is 5\% and error bars correspond to $\pm2$ Monte Carlo standard errors.}
    \label{fig: single test bonferroni dirichlet}
\end{subfigure}

\begin{subfigure}{\textwidth}
    \centering
    \includegraphics[width=0.85\textwidth]{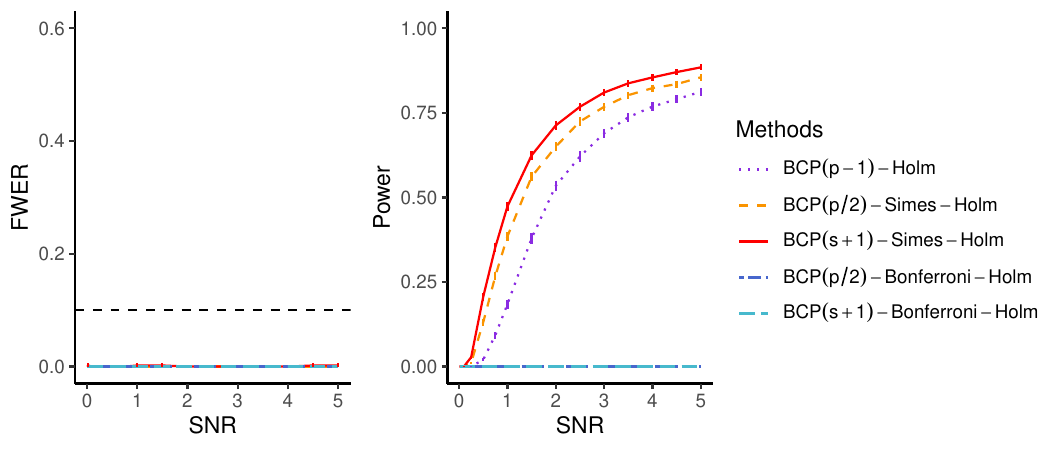}
    \caption{Comparison of FWER and average power for variable selection with Dirichlet covariates. The target FWER level is 10\% and error bars correspond to $\pm2$ Monte Carlo standard errors.}
    \label{fig: fwer bonferroni dirichlet}
\end{subfigure}

\begin{subfigure}{\textwidth}
    \centering
    \includegraphics[width=0.85\textwidth]{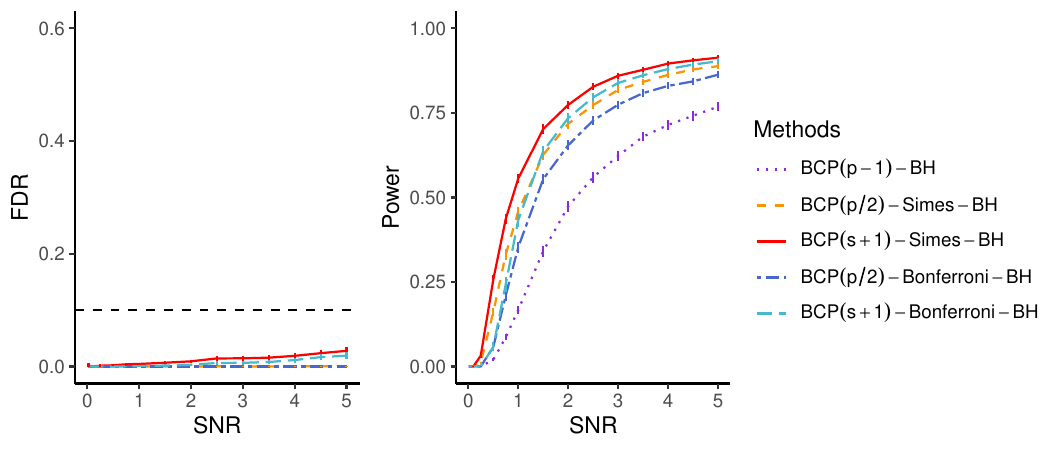}
    \caption{Comparison of FDR and average power for variable selection with Dirichlet covariates. The target FDR level is 10\% and error bars correspond to $\pm2$ Monte Carlo standard errors.}
    \label{fig: fdr bonferroni dirichlet}
\end{subfigure}

\caption{Comparison of BCP$(\overline{s})$  with Simes and Bonferroni PC p-values.}
\label{fig: bonferroni dirichlet}
\end{figure}

\begin{figure}[ht]
\centering
\begin{subfigure}{\textwidth}
    \centering
    \includegraphics[width=0.85\textwidth]{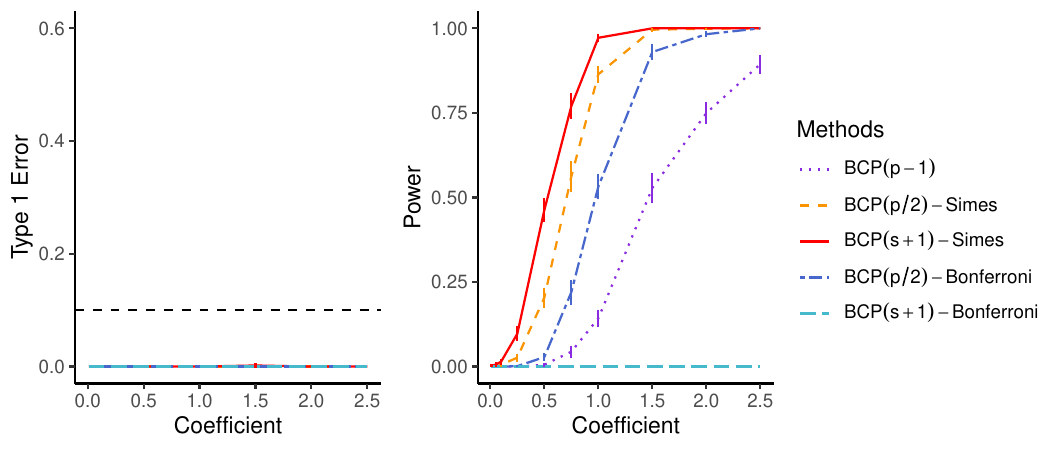}
    \caption{Comparison of type I error and power for single testing with Logistic-normal covariates. The target type I error is 5\% and error bars correspond to $\pm2$ Monte Carlo standard errors.}
    \label{fig: single test bonferroni logistic-normal}
\end{subfigure}

\begin{subfigure}{\textwidth}
    \centering
    \includegraphics[width=0.85\textwidth]{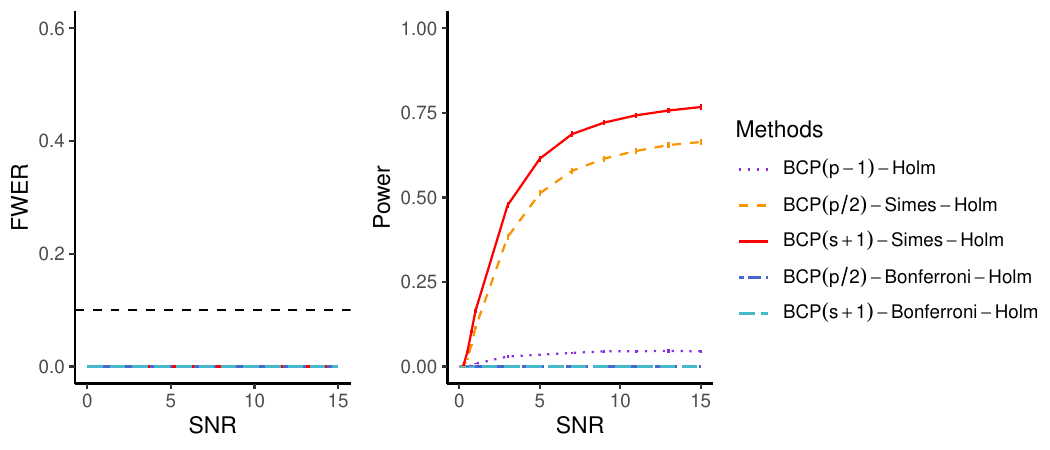}
    \caption{Comparison of FWER and average power for variable selection with Logistic-normal covariates. The target FWER level is 10\% and error bars correspond to $\pm2$ Monte Carlo standard errors.}
    \label{fig: fwer bonferroni logistic-normal}
\end{subfigure}

\begin{subfigure}{\textwidth}
    \centering
    \includegraphics[width=0.85\textwidth]{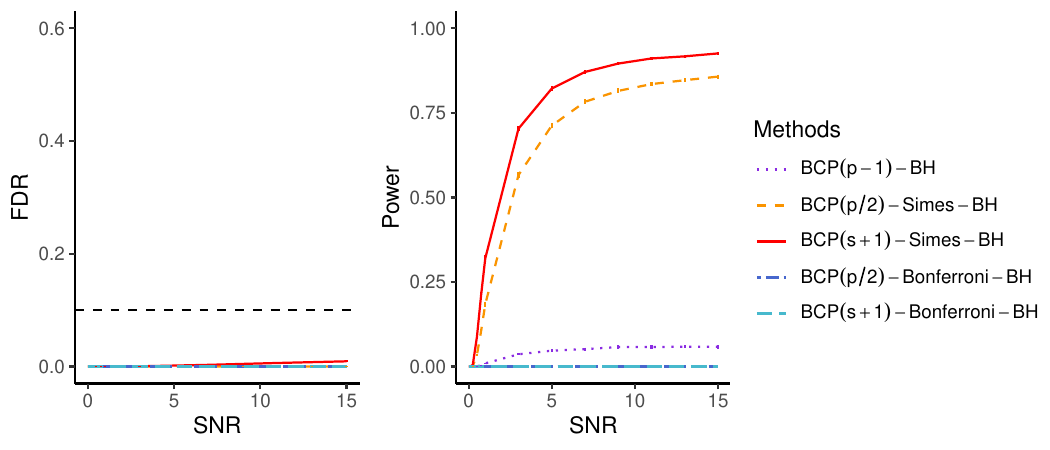}
    \caption{Comparison of FDR and average power for variable selection with Logistic-normal covariates. The target FDR level is 10\% and error bars correspond to $\pm2$ Monte Carlo standard errors.}
    \label{fig: fdr bonferroni logistic-normal}
\end{subfigure}

\caption{Comparison of BCP$(\overline{s})$  with Simes and Bonferroni PC p-values for Logistic-normal covariates.}
\label{fig: bonferroni logistic-normal}
\end{figure}

\begin{figure}[ht]
\centering
\begin{subfigure}{\textwidth}
    \centering
    \includegraphics[width=0.85\textwidth]{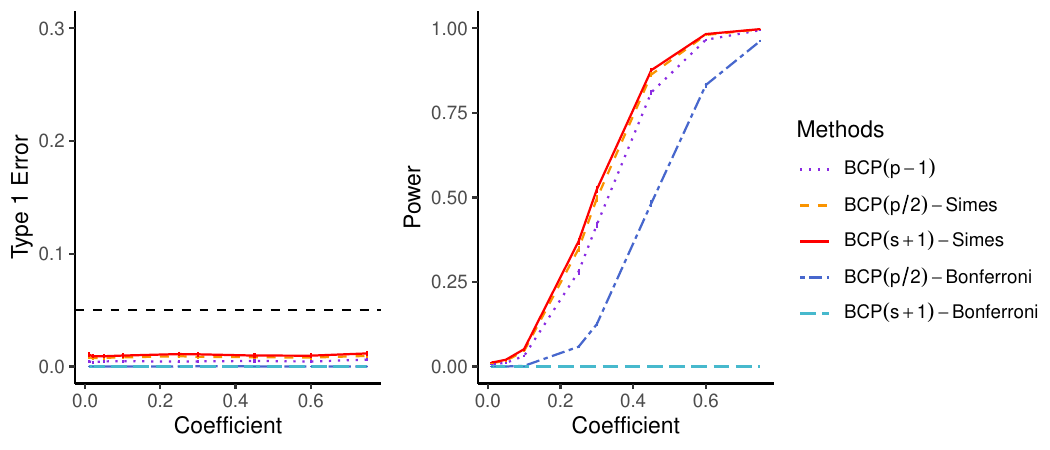}
    \caption{Comparison of type I error and power for single testing with multivariate normal covariates. The target type I error is 5\% and error bars correspond to $\pm2$ Monte Carlo standard errors.}
    \label{fig: single test bonferroni normal}
\end{subfigure}

\begin{subfigure}{\textwidth}
    \centering
    \includegraphics[width=0.85\textwidth]{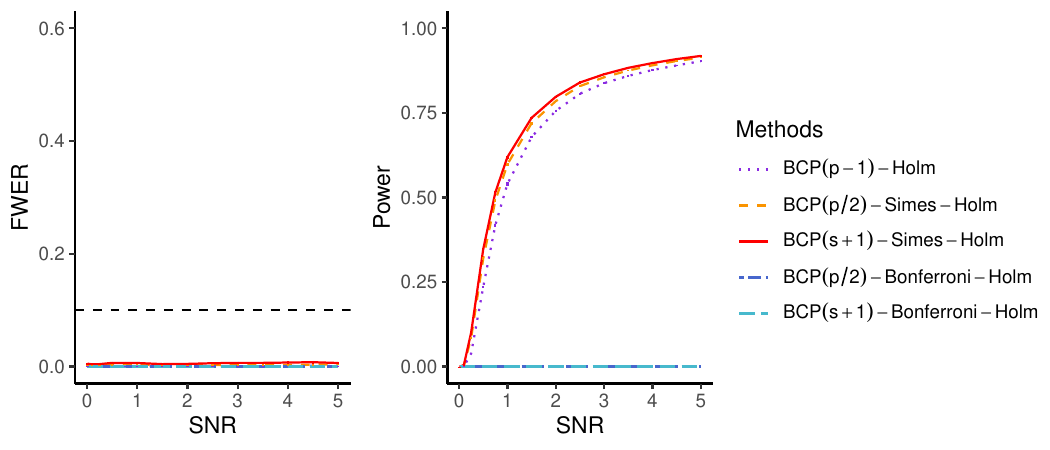}
    \caption{Comparison of FWER and average power for variable selection with multivariate normal covariates. The target FWER level is 10\% and error bars correspond to $\pm2$ Monte Carlo standard errors.}
    \label{fig: fwer bonferroni normal}
\end{subfigure}

\begin{subfigure}{\textwidth}
    \centering
    \includegraphics[width=0.85\textwidth]{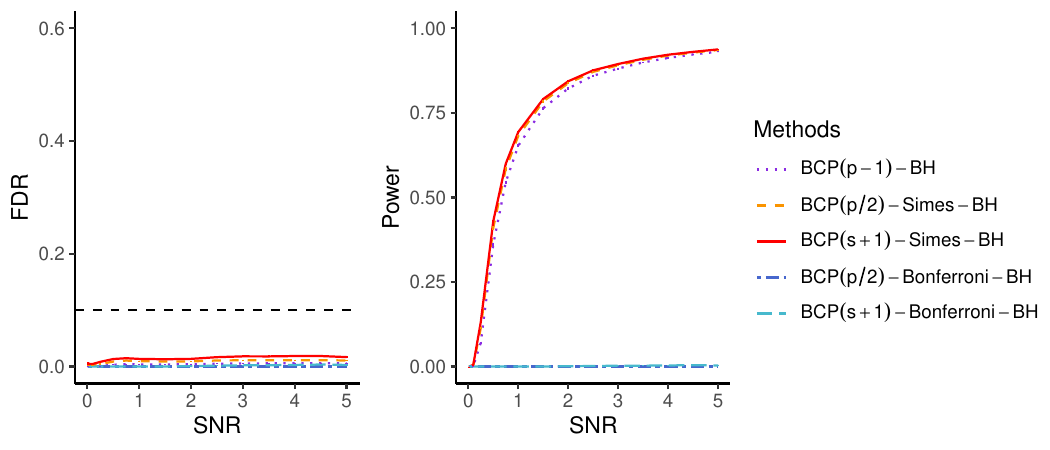}
    \caption{Comparison of FDR and average power for variable selection with multivariate normal covariates. The target FDR level is 10\% and error bars correspond to $\pm2$ Monte Carlo standard errors.}
    \label{fig: fdr bonferroni normal}
\end{subfigure}

\caption{Comparison of BCP$(\overline{s})$  with Simes and Bonferroni PC p-values for multivariate normal covariates.}
\label{fig: bonferroni normal}
\end{figure}

Figures \ref{fig: bonferroni dirichlet}, \ref{fig: bonferroni logistic-normal}, and \ref{fig: bonferroni normal} show a considerable decrease in power and average power of BCP$(\overline{s})$, BCP$(\overline{s})$-Holm, and BCP$(\overline{s})$-BH methods when we use Bonferroni p-values instead of Simes in the cases of Dirichlet, Logistic-normal, and multivariate normal covariates respectively. In figures \ref{fig: bonferroni logistic-normal} and \ref{fig: bonferroni normal} we see that BCP methods with Bonferroni p-values almost always have zero power or average power. One reason for this is that 1,500 dCRT resamples are insufficient to achieve non-zero power at these SNR values when using Bonferroni p-values, unlike when Simes p-values are used. In order to confirm that, we have increase the number of dCRT resamples to 25,000 in \Cref{fig: bonferroni dirichlet} and we observe that more BCP methods achieve non-zero (average) power using Bonferroni p-values. We also observe from \Cref{fig: fwer bonferroni dirichlet} that unlike the Simes p-values, the Bonferroni p-values are not monotone with respect to $\overline{s}$. BCP$(p-1)$-Bonferroni-Holm has non zero average power while BCP$(p/2)$-Bonferroni-Holm and BCP$(s+1)$-Holm both have zero average power in the SNR range used.

While a smaller $\overline{s}$ should provide more information and yield smaller p-values, Bonferroni’s p-values lack monotonicity with respect to $\overline{s}$. To illustrate, consider a simple example: for a particular $\H_{0j}^{\overline{s}}$ with $p=4$ and $s=1$, suppose the base p-values are ${P_{\{j\}^\mathsf{c},j}} = (0.2,0.25,0.3)$. Bonferroni’s PCH p-value for $\overline{s}=1$ is $P_{1 / 3}^\mathrm{B} = 3 \times 0.2 = 0.6$, while for $\overline{s}=2$, it is $P_{2 / 3}^\mathrm{B} = 2 \times 0.25 = 0.5$. Thus, the p-value for $\overline{s}=1$ is higher than for $\overline{s}=2$, contrary to the expectation that a smaller $\overline{s}$ would result in a lower p-value. Unlike Bonferroni’s p-values $P_{r / n}^\mathrm{B}$, Simes' p-values satisfy monotonicity with respect to $\overline{s}$.

\begin{figure}[ht]
\centering
\begin{subfigure}{\textwidth}
    \centering
    \includegraphics[width=0.85\textwidth]{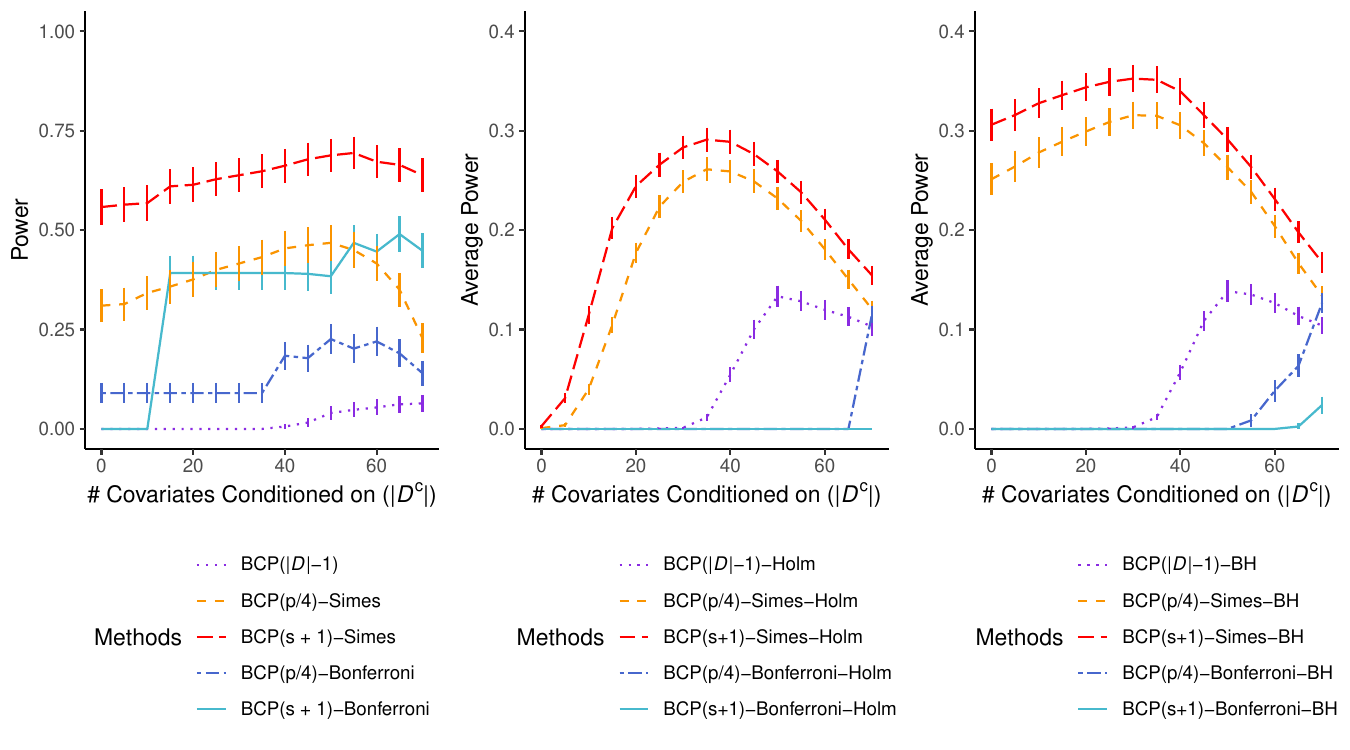}
    \caption{Effect of conditioning out the sparse covariates on power for single testing (left) and average power for FWER control (middle) and FDR control (right), respectively. Error bars correspond to ±2 Monte Carlo standard errors.}
    \label{fig: bonferroni power effect of conditioning}
\end{subfigure}

\begin{subfigure}{\textwidth}
    \centering
    \includegraphics[width=0.85\textwidth]{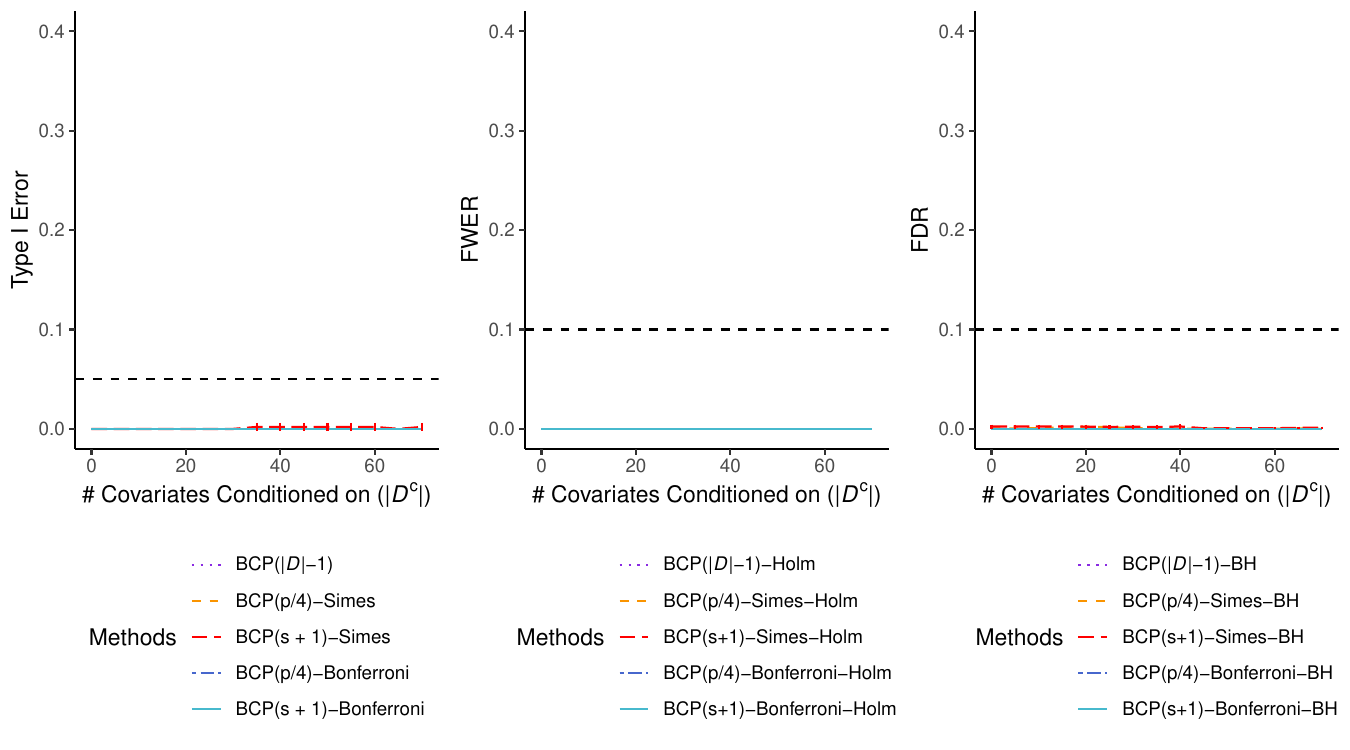}
    \caption{Effect of conditioning out the sparse covariates on type I error (left), FWER (middle), and FDR (right), respectively. The target type I error, FWER, and FDR are 5\%, 10\%, and 10\%, respectively, and error bars correspond to ±2 Monte Carlo standard errors.}
    \label{fig: bonferroni error effect of conditioning}
\end{subfigure}

\caption{Comparison of effect of conditioning out sparse covariates on BCP$(\overline{s})$  with Simes and Bonferroni PC p-values.}
\label{fig: bonferroni effect of conditioning}
\end{figure}

\Cref{fig: bonferroni effect of conditioning} shows that the power and average power of BCP methods using Bonferroni p-values increases when we condition out more and more sparse covariates. In fact, when we do not condition out any covariate, all BCP methods have zero power in both single and multiple testing problems.

\subsection{Performance of AdaFilter}\label{simulation adafilter}
AdaFilter~\citep{adafilter} is a method designed for testing multiple PCHs. Standard PCH p-values can be quite conservative which reduces power in detecting weak signals. AdaFilter involves a filtering step which reduces the potential detected signals to a smaller set of more promising hypotheses, giving it an edge over traditional methods in terms of average power. Since we also combine multiple PCH tests, AdaFilter may seem like a natural tool to use in our framework. However, AdaFilter requires independence of the base p-values for validity and as discussed in the paragraph after \Cref{thm: FWER control}, the base p-values will typically be far from independent in our framework. 

\begin{figure}[!]

\begin{subfigure}{\textwidth}
\centering
\includegraphics[width=\textwidth]{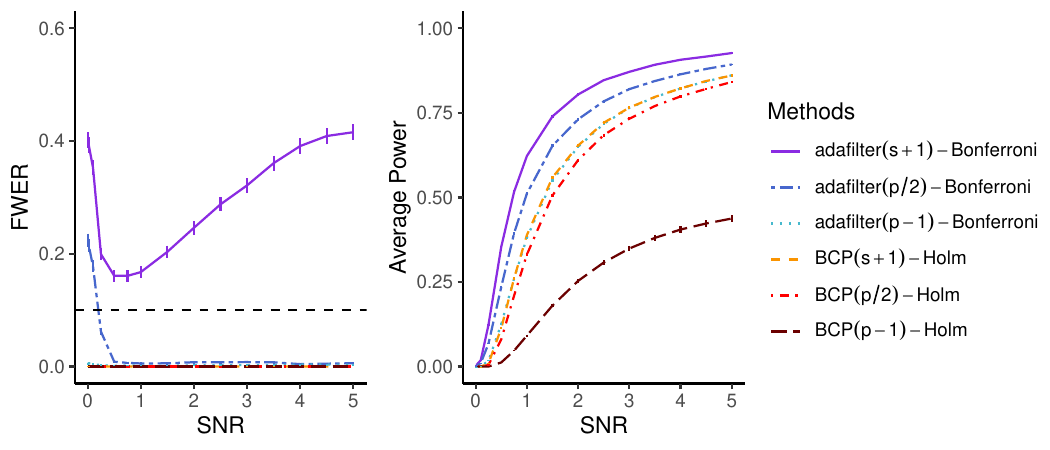} 
\caption{FWER and average power of AdaFilter for variable selection with Dirichlet covariates. The target FWER level is 10\% and error bars correspond to $\pm 2$ Monte Carlo standard errors.}
\label{fig: FWER adafilter}
\end{subfigure}

\begin{subfigure}{\textwidth}
\includegraphics[width=\textwidth]{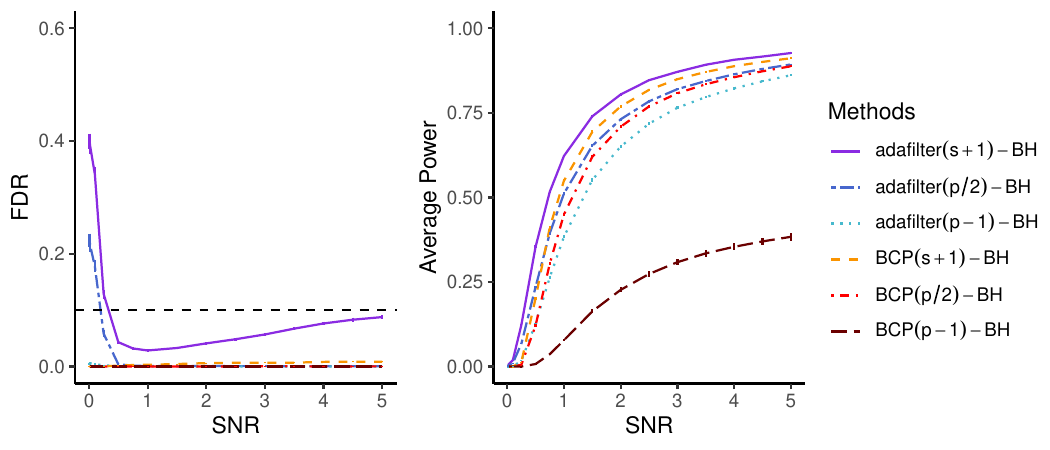}
\caption{FDR and average power of AdaFilter for variable selection with Dirichlet covariates. The target FDR level is 10\% and error bars correspond to $\pm 2$ Monte Carlo standard errors.}
\label{fig: FDR adafilter}
\end{subfigure}

\caption{Performance of AdaFilter on Dirichlet data in controlling (a) FWER and (b) FDR. 
}
\label{fig:adafilter_performance}
\end{figure}

In \Cref{fig:adafilter_performance} we denote AdaFilter--BH and AdaFilter--Bonferroni methods with $r=\overline{s}$~\cite[Section 3.1]{adafilter} as adafilter($\overline{s}$)-BH and adafilter($\overline{s}$)-Bonferroni respectively. \Cref{fig:adafilter_performance} shows that, in the Dirichlet data setting (see \Cref{subsec: comp-data-simulation}), AdaFilter fails to control the FDR and the FWER at the nominal level (10\%) particularly in the low SNR regime, while the proposed methods, BCP($\overline{s}$)-BH and BCP($\overline{s}$)-Holm, effectively control both the FDR and Familywise Error Rate (FWER) well below the nominal level of 10\% while having comparable average power. In fact, for FWER control, adafilter($s+1$)-Bonferroni fails to control the FWER anywhere in our SNR range. 
So, AdaFilter is not a suitable method for our problem. The reason, as we stated in the previous paragraph, is the strong dependence among the p-values. 

\subsection{Computational speedups}\label{appendix: simulation computational speedups}
In \Cref{sec: computational speedups}, we discussed two ways of significantly speeding up the proposed method. In this section, we look at simulation studies quantifying said speedups. 

But first, we mention one further computational improvement we considered that is specific to the dCRT (in fact it would apply to any conditional randomization test), which is the conditional independence testing approach we used in our simulations to calculate the $ P_{i,j}$. 
In the dCRT, resamples $\tilde{X}_{\{i,j\}}$ are drawn from the conditional distribution of $ X_{\{i,j\}}\mid X_{\{i,j\}^\mathsf{c}} $, and a test statistic $ T(y, \tilde X_{\{i,j\}}, X_{\{i,j\}^\mathsf{c}}) $ is computed for each CRT resample to be compared with $ T(y, X_{\{i,j\}} ,X_{\{i,j\}^\mathsf{c}})$. However, many resamples (we denote the number of resamples by $K$) are needed for a powerful test, especially for variable selection where multiple testing corrections are applied. 
For faster computation, one can initially draw a smaller number (e.g., $K/10$) of resamples and compute an initial p-value based on just these resamples. If that initial p-value is above, say, 0.1, then it is very unlikely that more resamples will lead to a very small p-value that would provide strong evidence against any null hypothesis, and hence we could just set the p-value to 1 without further resampling in this case. If the initial p-value is small, then we would continue resampling to the full number $K$ and compute the p-value as normal.

As mentioned in the main text, all three proposed computational speedups take the form of data-dependent screening: they replace individual p-values by 1 in the absence of evidence of them being non-null. This makes the resulting p-values strictly superuniform when both $i$ and $j$ are null variables. While this proves validity of FWER control method as in \Cref{Algorithm: adaptive-Simes-Holm} defined using Bonferroni's combining function, proving FDR control is difficult. This is because the screening procedures can introduce complex dependence among the p-values, potentially breaking the PRDS structure. 

To assess the statistical impact of these computational speedups, we perform a simulation study where we consider the same setup as in model 1 of \Cref{subsec: comp-data-simulation}. The setup consists of 100 observations of the form $(Y,X)$ where $X$ has 100 columns and each row of $X$ is sampled independently from a Dirichlet distribution, making $X$ compositional. 
\begin{figure}[!]

\begin{subfigure}{\textwidth}
\centering
\includegraphics[width=\textwidth]{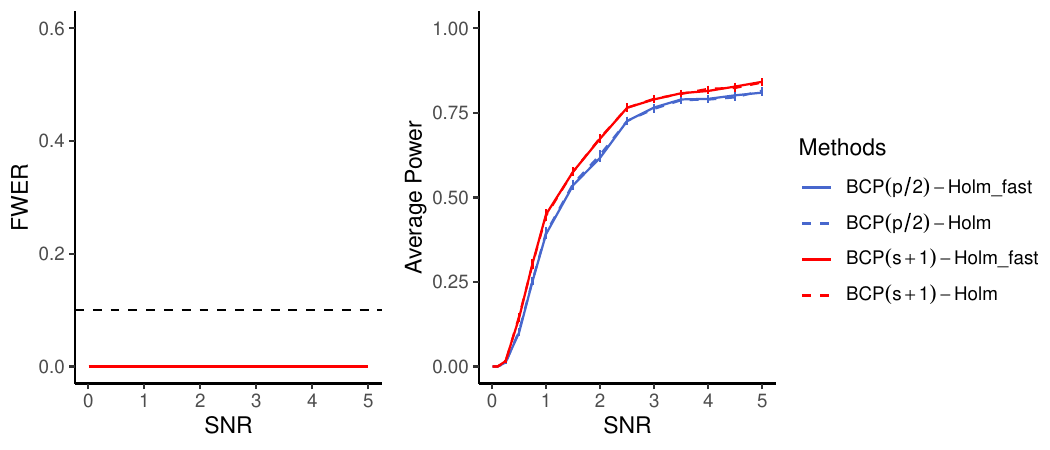} 
\caption{Comparison of FWER and average power for variable selection with computational speedups. The target FWER level is 10\% and error bars correspond to $\pm 2$ Monte Carlo standard errors.}
\end{subfigure}

\begin{subfigure}{\textwidth}
\centering
\includegraphics[width=\textwidth]{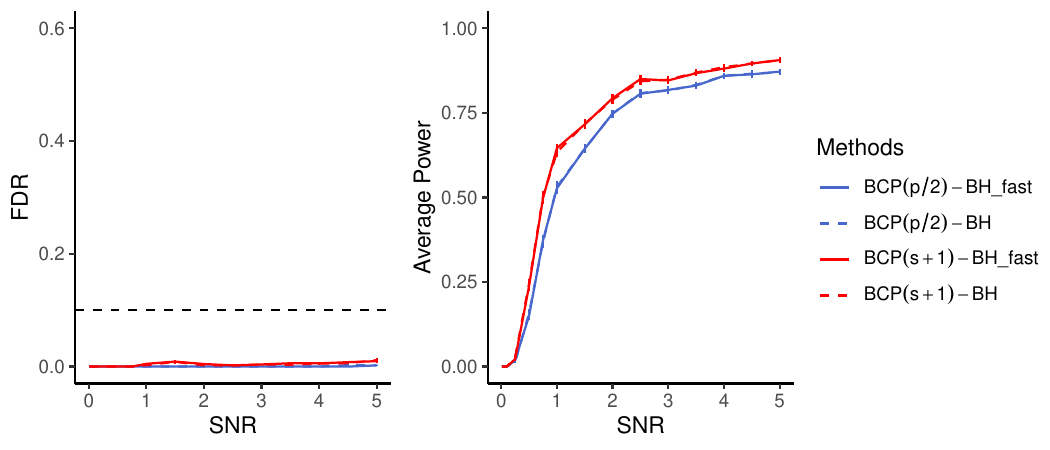} 
\caption{Comparison of FDR and average power for variable selection with computational speedups.  The target FDR level is 10\% and error bars correspond to $\pm 2$ Monte Carlo standard errors.}
\end{subfigure}
\caption{Comparison of proposed method with computational speedups discussed in \Cref{sec: computational speedups} for variable selection while controlling (a) FWER and (b) FDR control. 
}
\label{fig:computational speedups}
\end{figure}
As \Cref{fig:computational speedups} shows, the proposed methods with the speedups effectively maintain the False Discovery Rate (FDR) and Family-Wise Error Rate (FWER) well below the designated threshold of 10\%. Notably, the BCP($\overline{s}$) procedures, coupled with computational improvements, demonstrate average power levels almost identical to their counterparts without these speedups. Moreover, these optimizations reduce the average computation time for the BCP($\overline{s}$) methods with BH to 3.22 minutes, compared to 15.55 minutes without the speedups, yielding an approximate 4.8-fold decrease in computation time. The runtimes were calculated using a single core of a second-generation Intel Xeon processor. It is also observed that the average computation time increases with the SNR level. Specifically, the average runtimes for SNR values of 0.25, 0.5, 1, 2, and 4 are 1.4, 2.3, 3.3, 3.5, and 3.8 minutes, respectively for BCP($\overline{s}$)-BH with the computational speedups. In contrast, the run times for methods without the speedups decrease with SNR, with corresponding times of 19.1, 17.1, 14.9, 13.5, and 12.9 minutes.

While we get approximately 5 times speedup for a $100\times 100$ covariate matrix, the computational gains are much more pronounced for higher dimensional datasets. To illustrate this we ran a similar experiment with $1000\times 1000$ dimensional covariate matrix. In this case, we ran the proposed methods, both with and without the speedups, on a 24-core Intel Xeon CPU parallelly. The average time taken for the proposed method without any speedups was 23.99 hours while that of the method with them was 42.22 minutes. So, we achieve around 34 fold increase in speed. This shows the effectiveness of these computational improvements in real world datasets where covariates are often high dimensional. Another important point to note here is that, in this simulation, $X$ contains no sparse columns. When $X$ contains sparse columns, one can substantially speed up our method using the technique mentioned in \Cref{subsec: sparse covariates conditioning method}.

\end{document}